\documentclass[12pt,onecolumn,draftclsnofoot]{IEEEtran}

\usepackage[final]{graphicx}
\usepackage[reqno]{amsmath}
\usepackage{amssymb}
\usepackage{amsthm}
\usepackage{subfig}
\usepackage{epstopdf}
\usepackage{algorithm}
\usepackage{algorithmicx}
\usepackage{algpseudocode}
\usepackage{setspace}
\usepackage[T1]{fontenc}

\newtheorem{thm}{Theorem}

\newtheorem{lem}{Lemma}

\newtheorem{defn}{Definition}
\newtheorem{remk}{Remark}



\begin{document}

\sloppy

\title{Fundamental Limits of Dynamic Interference Management with Flexible Message Assignments}
\author{{\large{Tolunay Seyfi, {\em Student Member, IEEE}, Yasemin Karacora, {\em Student Member, IEEE} and Aly El Gamal, {\em Member, IEEE}}}
\thanks{Tolunay Seyfi and Aly El Gamal are with the ECE Department at Purdue University, West Lafayette, IN (e-mail: tseyfi,elgamala@purdue.edu). Yasemin Karacora was with the ECE Department at Purdue University, and is now with the ECE Department of Ruhr-University Bochum (email: yasemin.karacora@rub.de)}
\thanks{This paper was presented in part at the 2013 and 2017 versions of the Asilomar Conference on Signals, Systems and Computers, Pacific Grove, CA.~\cite{ElGamal-Veeravalli-Asilomar13,Karacora-Seyfi-ElGamal-Asilomar17}}.
\thanks{This research was supported by Huawei grant no. 301689 to Purdue University.}
}

\maketitle
\begin{abstract}
The problem of interference management is considered in the context of a linear interference network that is subject to long term channel fluctuations due to shadow fading. The fading model used is one where each link in the network is subject independently to erasure with probability $p$. It is assumed that each receiver in the network is interested in one unique message, which is made available at $M$ transmitters. For the case where $M=1$, the cell association problem is considered, and for $M>1$, the problem of setting up the backhaul links for Coordinated Multi-Point (CoMP) transmission is investigated. In both cases, optimal schemes from a Degrees of Freedom (DoF) viewpoint are analyzed for the setting of no erasures, and new schemes are proposed with better average DoF performance at higher probabilities of erasure. Additionally, for $M=1$, the average per user DoF is characterized for every value of $p$, and optimal message assignments are identified. For $M>1$, it is first established that there is no strategy for assigning messages to transmitters in networks that is optimal for all values of $p$. The optimal cooperative zero-forcing scheme for $M=2$ is then identified, and shown to be information-theoretically optimal when the size of the largest subnetwork that contains no erased links is at most five.
\end{abstract}

\section{Introduction}
Modern wireless networks are limited by interference from other links, and managing this interference is essential in meeting the exponential growth in demand for wireless data services.
Much of the recent effort on interference management has considered special settings where a fixed model is assumed for the channel connectivity (see e.g.,~\cite{Veeravalli-ElGamal-Cambridge}). However, in practice, wireless network topologies change frequently because of user mobility. Moreover, because of the envisioned development of heterogeneous networks, where client devices may be enabled to serve as infrastructural nodes, network topologies are expected to change even more frequently to exploit opportunities for improved performance.

Our goal in this paper is to start developing an information-theoretic framework for analyzing dynamic interference networks. The task of this envisioned framework is two-fold. First, as the network connectivity is expected to change, the choices made for transmitter (and possibly receiver) selection, fractional reuse, Coordinated Multi-Point (CoMP) transmission and reception, and interference alignment and zero-forcing have to take into account statistical knowledge of these changes. More specifically, these choices can be vastly different from those made for any specific realization of the network. Second, as devices have to learn the topology of the network, there arises a tradeoff between exploring the structure of the network and exploiting current knowledge for efficient transceiver design. Since this is a first attempt, we only focus on the first task and consider the problem of assigning messages to transmitters in order to achieve optimal average performance in a single-hop dynamic linear interference network.

In~\cite{Mceliece-Stark-IT84}, the authors analyzed the average capacity for a point-to-point channel model where slow changes in the channel result in varying severity of noise. In this work, we apply a similar concept to interference networks by assuming that slowly changing deep fading conditions result in link erasures. We consider the linear interference network introduced by Wyner~\cite{Wyner}, with the consideration of two fading effects. Long-term fluctuations that result in link erasures over a complete block of time slots, and short-term fluctuations that allow us to assume that any specific joint realization for the non-zero channel coefficients, will take place with zero probability. We study the problem of associating receivers with transmitters and setting up the backhaul links for Coordinated Multi-Point (CoMP) transmission, in order to achieve the optimal average Degrees of Freedom (DoF). This problem was studied in~\cite{ElGamal-Annapureddy-Veeravalli-arXiv12} for the case of no erasures. The results in~\cite{ElGamal-Annapureddy-Veeravalli-arXiv12} also reveal that optimal transceiver strategies depend intimately on the interference network topology. Therefore, our goal in this work is to leverage these results for static interference networks to dynamic interference networks. We extend the schemes in~\cite{ElGamal-Annapureddy-Veeravalli-arXiv12} to consider the occurrence of link erasures, and propose new schemes that lead to achieving better average DoF at high probabilities of erasure.

\subsection{Motivating Example}  
The following example motivates the study of new strategies for interference management in dynamic interference networks.

\begin{figure}[htb]
  \centering 
\subfloat[]{\label{fig:wynernetworkfig}\includegraphics[height=0.283\textwidth]{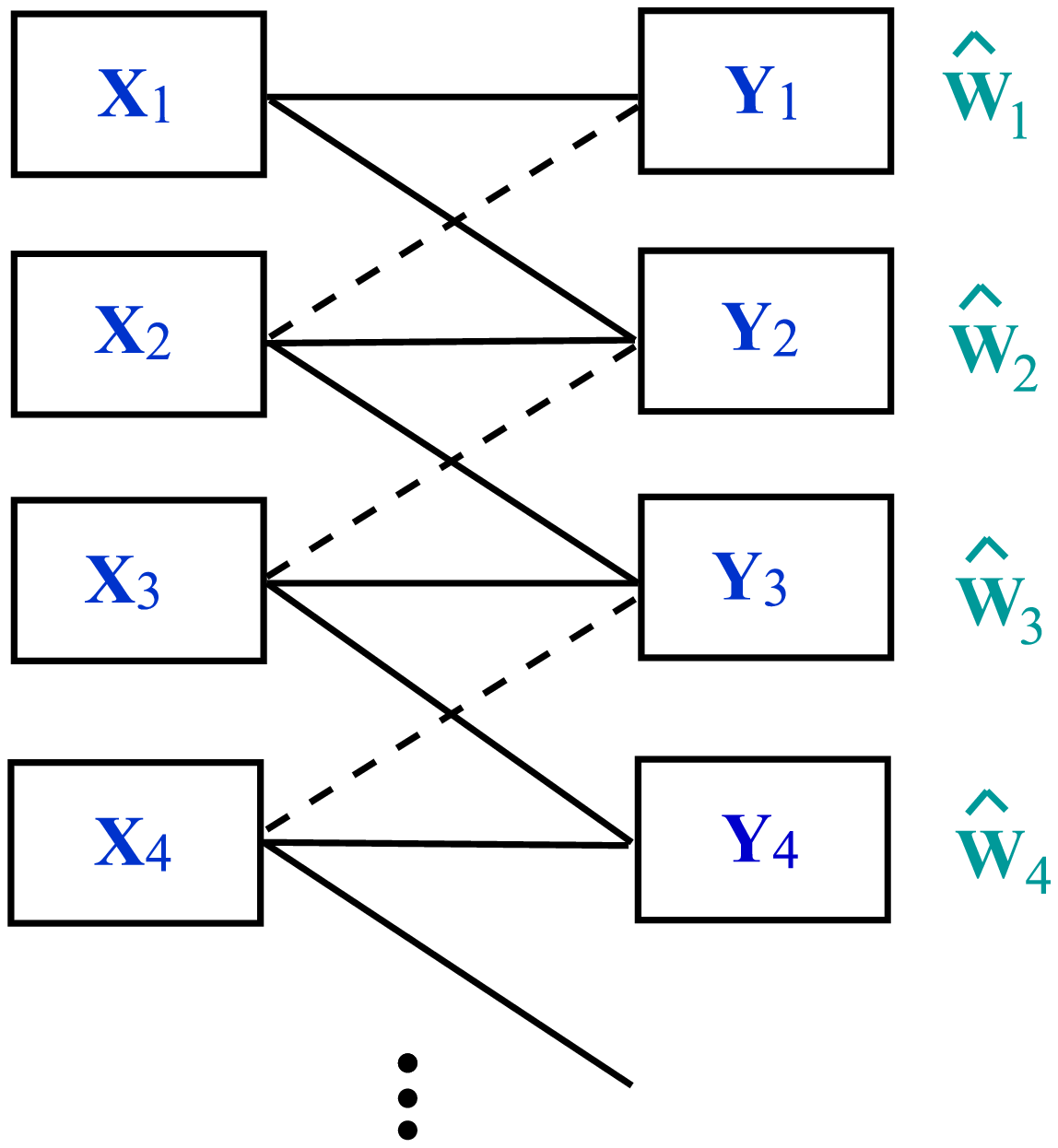}}                
\quad\quad\quad\quad\subfloat[]{\label{fig:wynernetworkaa}\includegraphics[width=0.293\textwidth]{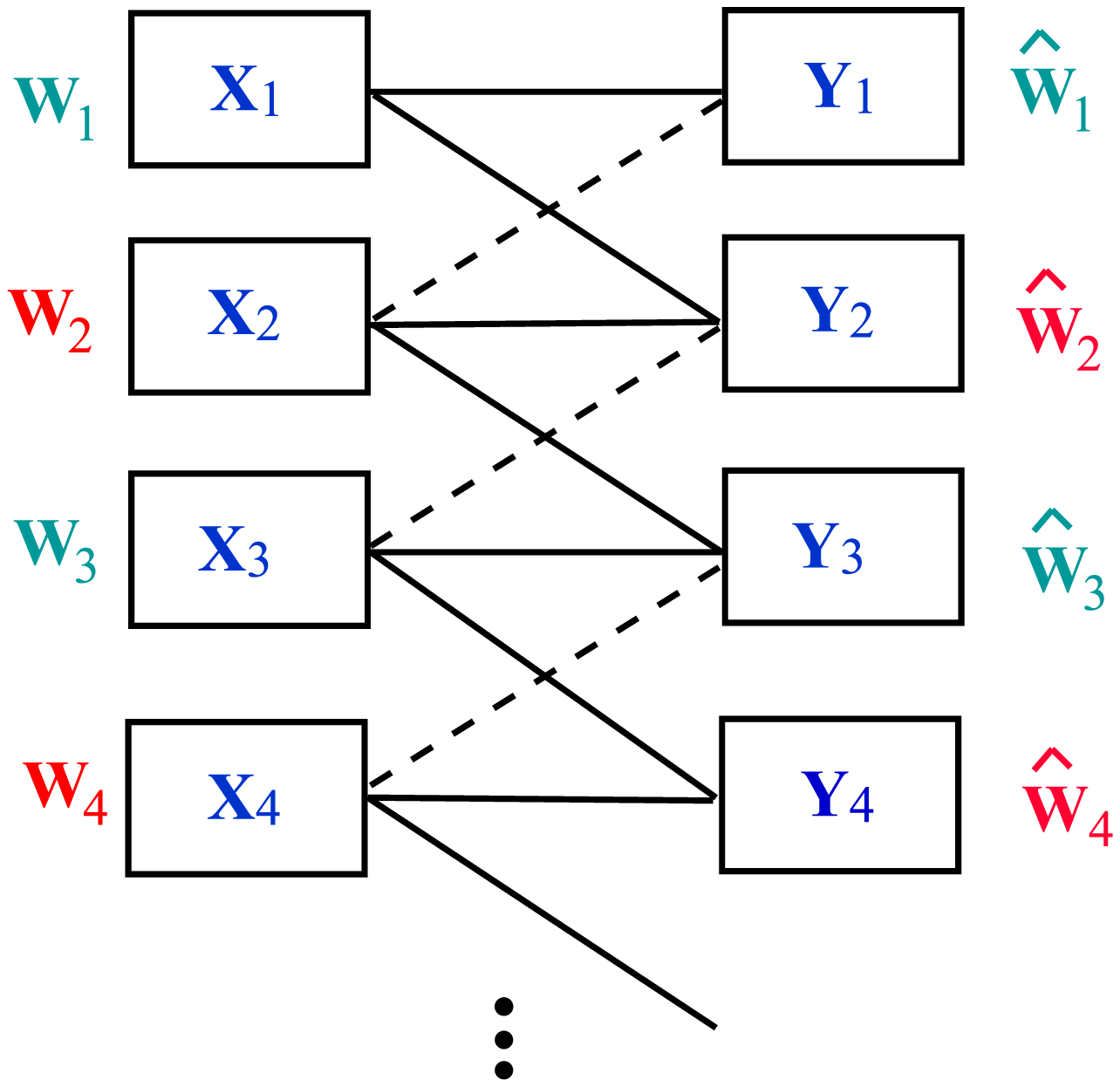}}
  \caption{\small A dynamic Wyner's interference network example. Solid edges always exist. All the dashed edges exist with probability $1-p$, or all do not exist with probability $p$. On the right hand side, messages indicated in red are not transmitted. A message assignment achieving $\frac{1}{2}$ per user DoF in both network realizations is shown. }
%
%
  \label{fig:wynernetwork}
\end{figure}

Consider the linear interference network depicted in Figure~\ref{fig:wynernetwork}, where each transmitter is connected to the receiver labelled with the same index as well as one following receiver. Also, with probability $1-p$, all transmitters will be connected to preceding receivers. If the dashed lines do not exist, then this channel is known as Wyner's asymmetric interference channel, and if they exist, it is known as Wyner's symmetric interference channel~\cite{Wyner}. The $i^{th}$ receiver is interested only in message (word) $W_i$. We further assume that each message can only be available at one transmitter, and we are required to find the message assignment that leads to maximizing the sum DoF averaged over all possible network realizations.  For the case of Wyner's symmetric interference channel, we know that the sum DoF of the network equals half the number of users (or a per user DoF of $\frac{1}{2}$). As shown in Figure~\ref{fig:wynernetworkaa}, the optimal per user DoF  can be achieved by assigning each message to the transmitter carrying the same index, and either activating all transmitter-receiver pairs with an odd index or those with an even index, with both choices being equally good with respect to the sum DoF criterion. For the case of Wyner's asymmetric interference channel, the same message assignment and transmission scheme can be used to achieve a per user DoF of $\frac{1}{2}$, and this remains the optimal per user DoF if the message assignment is fixed. The average per user DoF in this case across network topologies is thus $\frac{1}{2}$ regardless of the value of $p$.

In~\cite{ElGamal-Annapureddy-Veeravalli-ICC12}, and~\cite{Jafar-arXiv12}, the message assignment of Figure~\ref{fig:wynernetworkb} is shown to achieve $\frac{2}{3}$ per user DoF for the case of Wyner's asymmetric interference channel by activating transmitters $\{X_1,X_2,X_4,X_5,\ldots\}$ to transmit messages $\{W_1,W_3,W_4,W_6,\ldots\}$, with respect to order. It can be shown that the same message assignment can be used with the activation of all nodes in the network and using a simple modification of the asymptotic interference alignment scheme of~\cite{Cadambe-IA} to achieve $\frac{1}{2}$ per user DoF for Wyner's symmetric interference channel. The average per user DoF in this case is $\frac{3+p}{6} \geq \frac{1}{2}$, which can be shown to be the maximum achievable average per user DoF for this dynamic network. It follows that the message assignment in Figure~\ref{fig:wynernetworkb} is better than that of Figure~\ref{fig:wynernetwork}.

The above example shows that the consideration of network dynamics may lead to different design parameters and give rise to problems that were not considered before. For example, even though the message assignment of Figure~\ref{fig:wynernetworkb} is not uniquely optimal for Wyner's symmetric interference channel, the analysis of the optimal transmission scheme using this message assignment becomes important for the case where this particular assignment is optimal for other possible network realizations. 
Furthermore, this example is particularly relevant to the case when each message is allowed to be available at more than one transmitter. This models the downlink of a dynamic cellular network where the problem of assigning messages to transmitters corresponds to utilizing the backhaul in a way that maximizes the average performance of Coordinated Multi-Point transmission under network variations (see e.g.~\cite{CoMP-book} and~\cite{CoMP-industry} for a review of CoMP transmission techniques). In this work, we extend our previous study of CoMP transmission techniques to consider dynamic interference networks. We believe that the new framework can help find new insights that were not apparent in previously studied models.

\begin{figure}[!tb]
  \centering 
\subfloat[]{\label{fig:wynernetworkfigba}\includegraphics[height=0.24\textwidth]{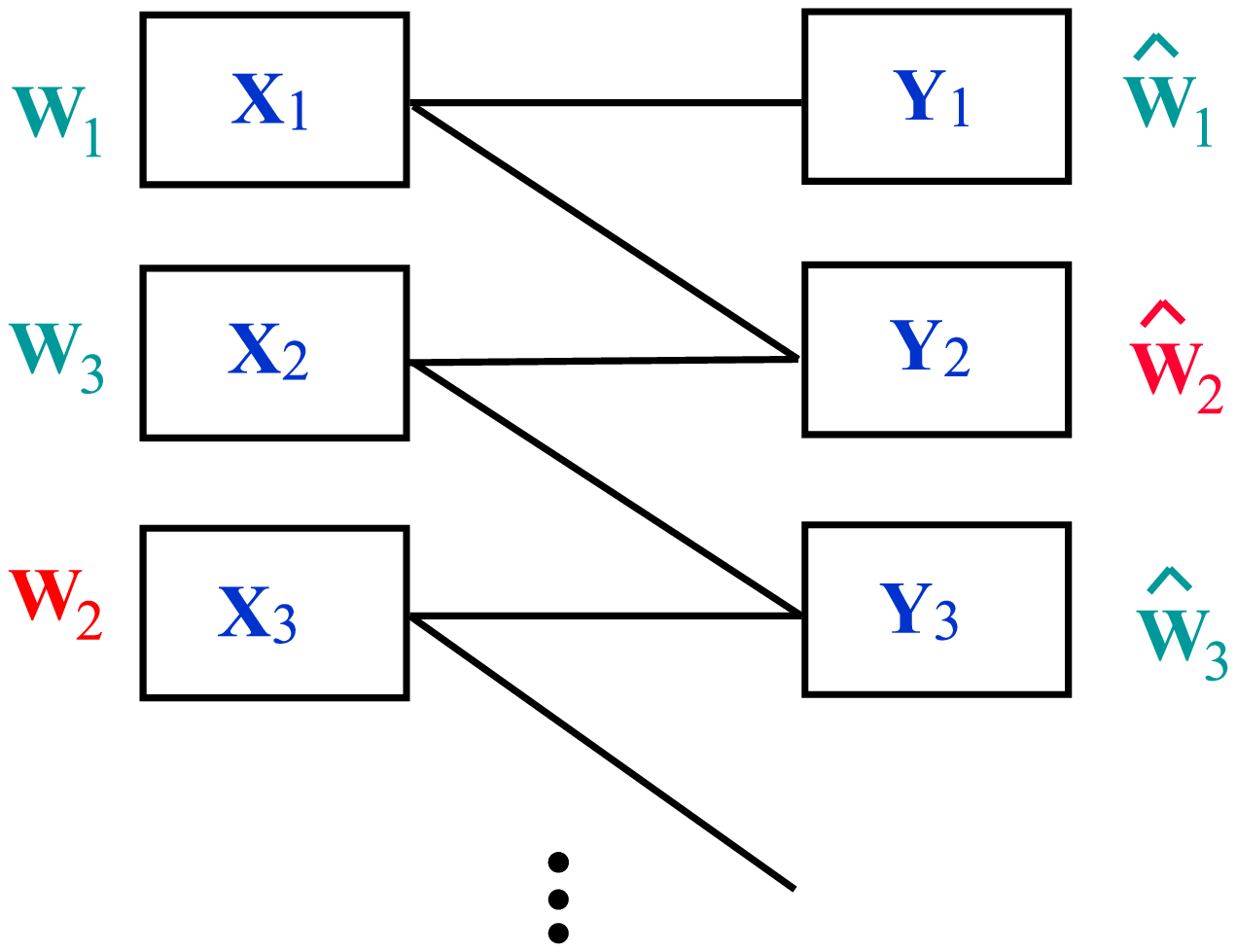}}                
\quad\quad\quad\quad\subfloat[]{\label{fig:wynernetworkbb}\includegraphics[width=0.32\textwidth]{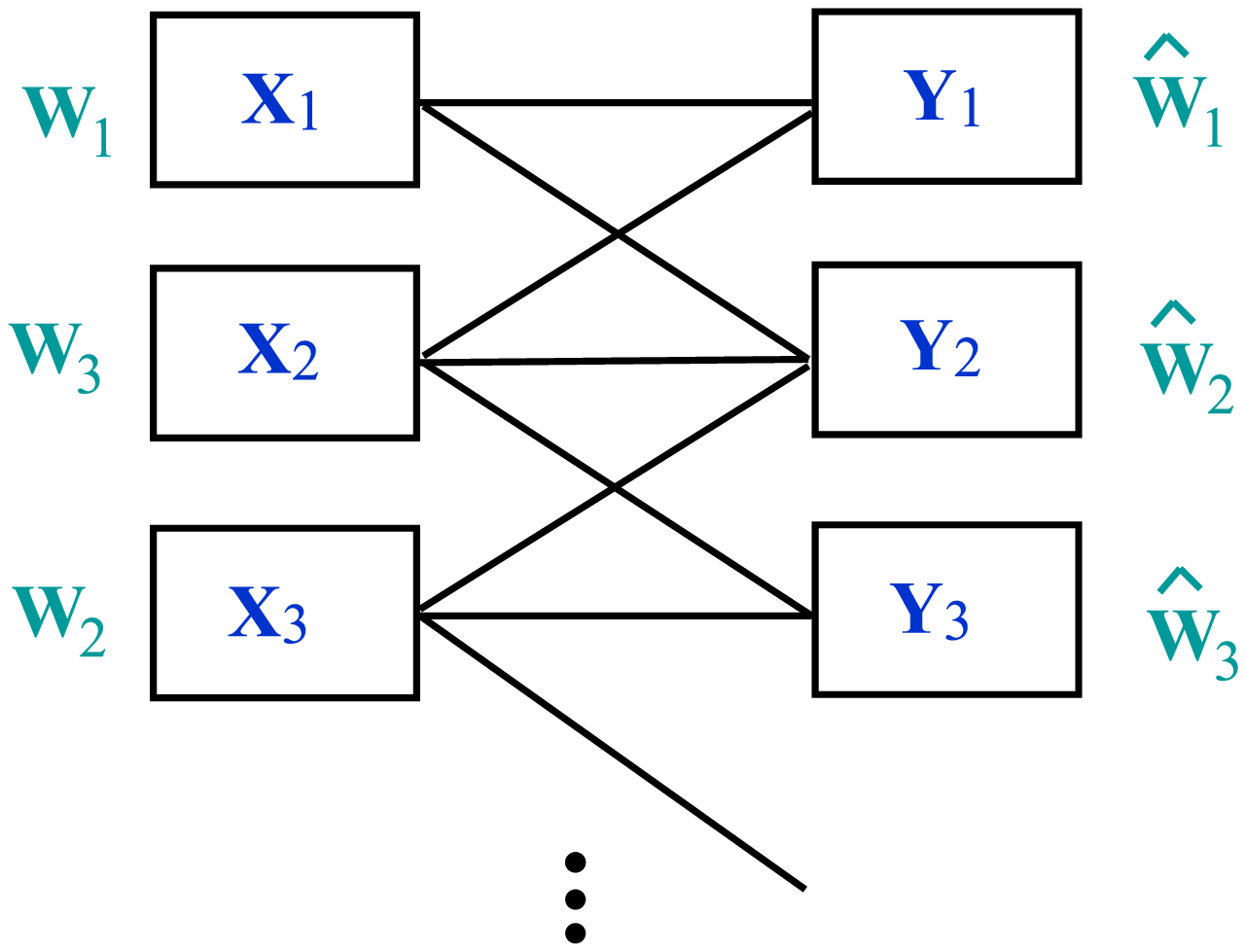}}
  \caption{\small Optimal message assignment for dynamic Wyner's interference network, achieving $\frac{2}{3}$ per user DoF when the dashed edges of Figure~\ref{fig:wynernetwork} do not exist, and $\frac{1}{2}$ per user DoF when they exist. On the right hand side, all transmitters are active and asymptotic interference alignment is used to achieve $\frac{1}{2}$ per user DoF. }
%
%
  \label{fig:wynernetworkb}
\end{figure}

\subsection{Document Organization}
We describe the system model and notation in Section~\ref{sec:systemmodel}. In Section~\ref{sec:cellassociation}, we consider the problem of assigning mobile transmitters to base station receivers (cell association) in a dynamic linear interference network. We then extend the analysis in Section~\ref{sec:comp} to a Coordinated Multi-Point transmission scenario where each message can be available at more than one transmitter. We finally present concluding remarks in Section~\ref{sec:conclusion}.

\subsection{Related Work}
At a fundamental level, the proposed work aims to significantly extend the notion of cognition in wireless networks. 
The term \emph{cognitive} has typically been used in literature to refer to a wireless transceiver that is aware of the messages/signals of another user in full or in part \cite{Devroye-Mitran-Tarokh-IT06,Lapidoth-Levy-Shamai-Wigger-arXiv12}. Our notion of cognition further allows wireless transceivers to be aware of the (changing) topology as well as the channel states of the network containing them.

In~\cite{Vahid-Aggarwal-Avestimehr-Sabharwal-arXiv12}, global knowledge of the network topology and local knowledge of the channel state information is assumed. It is worth noting here that although we assume global knowledge of the channel state information, the coding schemes described in this work make use of only local channel state information. Moreover, the knowledge of the topology is assumed to be available at the wireless transceivers but only statistics of the topology are taken into account when designing the infrastructural backhaul (assigning messages to transmitters). 

In practice, the channel coefficients are approximately estimated at the receivers by transmitting known pilot signals, and then they are fed back to the transmitters (see e.g.~\cite{PW09},~\cite{PW10} and~\cite{WMZ05a}). It is a common practice in information theoretic analysis to ignore the overhead of the estimation and communication of channel coefficients, in order to derive insights relevant to the remaining design parameters of the coding scheme; we follow this layered approach in this paper.

One major obstacle toward a practical implementation of asymptotic interference alignment is that the achievable DoF is approached only with a very large number of signal dimensions. The feasibility of alignment with finite symbol extension is studied in~\cite{finite-symbol-extension} and~\cite{Gomadam-Cadambe-Jafar-IT11}. All the presented coding schemes in this work do not require symbol extension.

In~\cite{ElGamal-Veeravalli-ISIT14}, it is shown that for linear interference networks with no erasures, the optimal schemes under the considered maximum number of transmitters per message constraint can be used to achieve the DoF under a more practical constraint that only limits the total backhaul load. The total backhaul load constraint is a constraint on the average transmit set size that allows for assigning some messages to a large number of transmitters at the cost of assigning others to fewer transmitters. In this work, we consider the maximum transmit set size constraint since it simplifies the combinatorial aspect of the problem and this is a first attempt at solving the problem in the dynamic interference network setting.

\section{System Model and Notation}\label{sec:systemmodel}
We use the standard model for the $K-$user interference channel with single-antenna transmitters and receivers,
\begin{equation}
Y_i(t) = \sum_{j=1}^{K} H_{i,j}(t) X_j(t) + Z_i(t),
\end{equation}
where $t$ is the time index, $X_j(t)$ is the transmitted signal of transmitter $j$, $Y_i(t)$ is the received signal at receiver $i$, $Z_i(t)$ is the zero mean unit variance Gaussian noise at receiver $i$, and $H_{i,j}(t)$ is the channel coefficient from transmitter $j$ to receiver $i$ over the time slot $t$. We remove the time index in the rest of the paper for brevity unless it is needed. 

We use $[K]$ to denote the set $\{1,2,\ldots,K\}$. Finally, for any set ${\cal A} \subseteq [K]$, we use the abbreviations $X_{\cal A}$, $Y_{\cal A}$, and $Z_{\cal A}$ to denote the sets $\left\{X_i, i\in {\cal A}\right\}$, $\left\{Y_i, i\in {\cal A}\right\}$, and $\left\{Z_i, i\in {\cal A}\right\}$, respectively. 
%
%

\subsection{Channel Model}
Each transmitter can only be connected to its corresponding receiver as well as one following receiver, and the last transmitter can only be connected to its corresponding receiver. More precisely,

\begin{equation}\label{eq:channel}
H_{i,j} \text{ is identically } 0 \text { iff } i \notin \{j,j+1\},\forall i,j \in [K].
\end{equation}

In order to consider the effect of long-term fluctuations (shadowing), we assume that communication takes place over blocks of time slots, and let $p$ be the probability of block erasure. In each block, we assume that for each $j$, and each $i \in \{j,j+1\}$, $H_{i,j}=0$ with probability $p$. Moreover, short-term channel fluctuations allow us to assume that in each time slot, all non-zero channel coefficients are drawn independently from a continuous distribution. Finally, we assume that global channel state information is available at all transmitters and receivers. 
\subsection{Message Assignment}
For each $i \in [K]$, let $W_i$ be the message intended for receiver $i$, and ${\cal T}_i \subseteq [K]$ be the transmit set of receiver $i$, i.e., those transmitters with the knowledge of $W_i$. The transmitters in ${\cal T}_i$ cooperatively transmit the message $W_i$ to the receiver $i$. The messages $\{W_i\}$ are assumed to be independent of each other. The \emph{cooperation order} $M$ is defined to be the maximum transmit set size:
\begin{equation}\label{eq:coop_order}
M = \max_i |{\cal T}_i|.
\end{equation}

\subsection{Message Assignment Strategy}\label{sec:strategy}

A message assignment strategy is defined by a sequence of transmit sets $({\cal T}_{i,K}), i\in[K], K\in\{1,2,\ldots\}$. For each positive integer $K$ and $\forall i\in[K]$,  ${\cal T}_{i,K} \subseteq [K], |{\cal T}_{i,K}| \leq M$. We use message assignment strategies to define the transmit sets for a sequence of $K-$user channels. The $k^{\mathrm{th}}$ channel in the sequence has $k$ users, and the transmit sets for this channel are defined as follows. The transmit set of receiver $i$ in the $k^{\mathrm{th}}$ channel in the sequence is the transmit set ${\cal T}_{i,k}$ of the message assignment strategy. 
\subsection{Degrees of Freedom}
The average power constraint at each transmitter is $P$. In each block of time slots, the rates $R_i(P)$ are achievable if the decoding error probabilities of all messages can be simultaneously made arbitrarily small as the block length goes to infinity, and this holds for almost all realizations of non-zero channel coefficients. The sum capacity $\mathcal{C}_{\Sigma}(P)$ is the maximum value of the sum of the achievable rates. The total number of degrees of freedom ($\eta$) is defined as $\limsup_{P \rightarrow \infty}\frac{ C_{\Sigma}(P)}{\log P}$. For a probability of block erasure $p$, we let $\eta_p$ be the average value of $\eta$ over possible choices of non-zero channel coefficients.

For a $K$-user channel, we define $\eta_p(K,M)$ as the best achievable $\eta_p$ over all choices of transmit sets satisfying the cooperation order constraint in \eqref{eq:coop_order}. In order to simplify our analysis, we define the asymptotic average per user DoF $\tau_p(M)$ to measure how $\eta_p(K,M)$ scale with $K$,
\begin{equation}
\tau_p(M) = \lim_{K\rightarrow \infty} \frac{\eta_p(K,M)}{K}.
\end{equation}

We call a message assignment strategy \emph{optimal} for a given erasure probability $p$,  if there exists a sequence of coding schemes achieving $\tau_p(M)$ using the transmit sets defined by the message assignment strategy. A message assignment strategy is $\emph{universally optimal}$ if it is optimal for all values of $p$.

\subsection{Interference Avoidance Schemes}
We pay special attention in this paper to the class of \emph{interference avoidance} schemes. For these schemes, each message is either not transmitted or allocated one degree of freedom.
Accordingly, every receiver is either active or inactive. An active receiver does not observe any interfering signals. For the case of no-cooperation i.e., $M=1$, we refer to these schemes as orthogonal TDMA schemes. The case where $M \geq 2$ corresponds to the scenario where cooperative zero-forcing can be used.

For a given erasure probability $p$, let $\tau_p^{(\text{TDMA})}$ and $\tau_p^{(\text{ZF})}(M\geq 2)$ be the average per user DoF under the restriction to orthogonal TDMA schemes, and cooperative zero-forcing transmit beamforming with cooperation order constraint $M$, respectively. 
\subsection{Subnetworks}
It will be useful in the rest of the paper to view each realization of the network where some links are erased, as a series of subnetworks that do not interfere. We formally define subnetworks below, but we first need to make the following definitions for a given message assignment and network realization.
\begin{defn}
For a given network realization and message assignment, we say that a {\bf message is enabled} if there exists a transmitter carrying the message and connected to its destined receiver.
\end{defn}
We also need to use the definition of \emph{irreducible message assignments} that is made in~\cite{ElGamal-Annapureddy-Veeravalli-arXiv12} for the case when no links can be erased (see also \cite[Chapter $6$]{Veeravalli-ElGamal-Cambridge}). The definition in~\cite{ElGamal-Annapureddy-Veeravalli-arXiv12} can be extended to the considered setting where links can be erased in the given realization, by replacing the condition $|x-y|\leq 1$, with the condition that the transmitters with indices $x$ and $y$ have to be connected to at least one common receiver. More precisely, for each user $i$, we construct a graph $G_i$ of $|{\cal T}_i|$ vertices with indices in ${\cal T}_i$, such that vertices $x,y \in {\cal T}_i$ are connected with an edge if transmitters $x$ and $y$ are connected to a common receiver. Further, vertices $i$ and $i-1$ are given a \emph{special mark} if $H_{i,i} \neq 0$ and $H_{i,i-1} \neq 0$, respectively. We then have the following definition.
\begin{defn}
We say that an assignment of message $W_i$ to transmitter $x$ is useful if the vertex $x$ is connected to a marked vertex in the graph $G_i$. 
\end{defn}
\begin{defn}
We say that a message assignment is irreducible, if for every user $i \in [K]$, the graph $G_i$ has only one component.
\end{defn}
The result in~\cite[Lemma $2$]{ElGamal-Annapureddy-Veeravalli-arXiv12} would then follow. In other words, a message assignment can be \emph{reduced} if we can remove one or more elements from the transmit set while guranteeing that the sum rate does not decrease due to this change. This is true if a transmitter carrying the message cannot be used either for delivering this message to its destination or for interference cancellation. We say that such assignments of messages to transmitters are not \emph{useful}. Since in our setting, the message assignment is based on the statistics of the network topology, it is possible that some message assignments can be reduced for a given realization of the topology.
\begin{defn}
For a given network realization and message assignment, we say that the {\bf message assignment is topology-reduced}, if we first remove all receivers whose messages are not enabled, and then for every transmit set, we remove all elements that are not useful, and the message assignment becomes irreducible for the considered realization.
\end{defn}
It follows from~\cite[Lemma $2$]{ElGamal-Annapureddy-Veeravalli-arXiv12} that topology-reduction of the message assignment cannot decrease the sum rate.
\begin{defn}
For a given network realization and message assignment, we say that a set of $k$ users with successive indices $\{i,i+1,\ldots,i+k-1\}$ form a {\bf subnetwork} if the following two conditions hold: 
\begin{enumerate}
\item The first condition is that $i=1$, or it is the case that for the topology-reduced version of the message assignment, either $W_i$ is not enabled or it is the case that there exists no message $W_x$, $x<i$ that is available at a transmitter connected to receiver $i$, and $W_i$ is not available at a transmitter connected to any receiver with an index $x < i$.

\item Secondly, $i+k-1=K$ or the first condition holds for $i+k$, i.e., $i+k$ is the first user in a new subnetwork.
\end{enumerate}
\end{defn}

\begin{defn}
We say that the {\bf subnetwork is atomic} if it does not contain smaller subnetworks.
\end{defn}
Note that for an atomic subnetwork, the transmitters carrying messages for users in the subnetwork have successive indices and for any transmitter $t$ carrying a message for a user in the subnetwork, and receiver $r$ in the subnetwork such that $r \in \{t,t+1\}$, the channel coefficient $H_{r,t} \neq 0$ (is not erased). 

\section{Cell Association}\label{sec:cellassociation}
We first consider the case where each receiver can be served by only one transmitter. This reflects the problem of associating mobile users with cells in a cellular downlink scenario. We start by discussing orthogonal schemes (TDMA-based) for this problem, and then show that the proposed schemes are optimal. 

For $i\in[K]$, let $N_i$ be the number of messages available at the $i^{th}$ transmitter, and let ${\bf N}^K=\left(N_1,N_2,\ldots,N_K\right)$. It is clear that the sequence ${\bf N}^K$ can be obtained from the transmit sets ${\cal T}_i, i\in[K]$; it is also true, as stated in the following lemma, that the converse holds. We borrow the notion of \emph{irreducible} message assignments from~\cite{ElGamal-Annapureddy-Veeravalli-arXiv12}. For $M=1$, an irreducible message assignment will have each message assigned to one of the two transmitters connected to its designated receiver.
\begin{lem}\label{lem:equiv}
For any irreducible message assignment where each message is assigned to exactly one transmitter, i.e., $|{\cal T}_i|=1, \forall i\in[K]$, the transmit sets ${\cal T}_i$, $i\in[K],$ are uniquely characterized by the sequence ${\bf N}^K$.
\end{lem}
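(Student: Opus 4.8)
The plan is to exhibit an explicit procedure that reconstructs every transmit set from the sequence ${\bf N}^K$, thereby proving that the map from transmit sets to ${\bf N}^K$ is injective (the forward direction being immediate, as already noted before the lemma). Since the assignment is irreducible with $M=1$, each message $W_i$ is carried by exactly one of the two transmitters connected to receiver $i$, namely transmitter $i-1$ or transmitter $i$, with $W_1$ necessarily at transmitter $1$ because transmitter $0$ does not exist. Dually, transmitter $i$ can carry only messages whose receivers it is connected to and that are permitted to use it, i.e.\ only messages in $\{W_i, W_{i+1}\}$; hence $N_i \in \{0,1,2\}$ and $N_i$ equals the number of these two messages actually assigned to transmitter $i$.

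First I would encode the whole assignment by a single binary decision per message: let $b_i = 1$ if $W_i$ is assigned to transmitter $i$, and $b_i = 0$ if it is assigned to transmitter $i-1$. The boundary condition forces $b_1 = 1$. Counting the messages at transmitter $i$ then gives $N_i = b_i + (1 - b_{i+1})$ for $1 \le i \le K-1$ and $N_K = b_K$, since $W_{i+1}$ sits at transmitter $i$ precisely when $b_{i+1}=0$.

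Next I would run the resulting forward recursion. Starting from $b_1 = 1$, the relation $N_i = b_i + (1 - b_{i+1})$ rearranges to $b_{i+1} = b_i + 1 - N_i$, so each decision $b_{i+1}$ is determined uniquely by $N_i$ together with the previously recovered $b_i$. Iterating from $i=1$ to $i=K-1$ recovers $b_2, \ldots, b_K$, and hence every transmit set ${\cal T}_i = \{\, i - 1 + b_i \,\}$, uniquely from ${\bf N}^K$. This is the heart of the argument: the assignment has a domino structure in which fixing the location of $W_1$ propagates a single forced choice down the line.

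The argument presents no real obstacle beyond careful bookkeeping at the two boundaries, namely the leftmost transmitter, which forces $b_1 = 1$, and the rightmost transmitter, for which $N_K = b_K$ serves as a consistency check rather than a new equation. Because we begin from a genuine irreducible assignment, each recovered value $b_{i+1} = b_i + 1 - N_i$ automatically lies in $\{0,1\}$, so the recursion is well defined throughout; the explicit inversion then establishes that distinct irreducible assignments yield distinct sequences ${\bf N}^K$, which is exactly the claimed unique characterization.
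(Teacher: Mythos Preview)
Your proof is correct and in fact cleaner than the paper's. Both arguments begin from the same observation that irreducibility forces ${\cal T}_i \subset \{i-1,i\}$ with ${\cal T}_1=\{1\}$, so the assignment is captured by one bit per message. From there, however, the paper takes a block-decomposition route: it locates the first index $x$ with $N_x=0$, argues that the block $[x]$ contains exactly one index $y$ with $N_y=2$, infers ${\cal T}_i=\{i\}$ for $i\le y$ and ${\cal T}_i=\{i-1\}$ for $y<i\le x$, and then repeats on each subsequent block ending in a zero. Your approach bypasses this combinatorial parsing entirely by writing down the single recurrence $b_{i+1}=b_i+1-N_i$ with the forced initial condition $b_1=1$, which directly inverts the map. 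Your method is shorter, handles all cases uniformly without the separate treatment of the all-ones sequence or of the final block, and makes the ``domino'' structure explicit; the paper's decomposition, on the other hand, foreshadows the subnetwork viewpoint and the $S^{(1)}$--$S^{(4)}$ string classification used immediately afterward, so it dovetails more naturally with the rest of Section~\ref{sec:cellassociation}.
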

\begin{proof}
Since each message can only be available at one transmitter, then this transmitter has to be connected to the designated receiver. More precisely, ${\cal T}_i \subset \{i-1,i\}, \forall i\in\{2,\ldots,K\}$, and ${\cal T}_1 = \{1\}$. It follows that each transmitter carries at most two messages and the first transmitter carries at least the message $W_1$, i.e., $N_i \in \{0,1,2\}, \forall i\in\{2,\ldots,K\}$, and $N_1 \in \{1,2\}$. Assume that $N_i=1, \forall i\in[K]$, then ${\cal T}_i=\{i\},\forall i\in[K]$. For the remaining case, we know that there exists $i \in \{2,\ldots,K\}$ such that $N_i=0$, since $\sum_{i=1}^{K} N_i = K$; we handle this case in the rest of the proof.

Let $x$ be the smallest index of a transmitter that carries no messages, i.e., $x = \min \{i: N_i=0\}$. We now show how to reconstruct the transmit sets ${\cal T}_i, i\in\{1,\ldots,x\}$ from the sequence $(N_1,N_2,\ldots,N_x)$. We note that ${\cal T}_i \in [x], \forall i\in[x]$, and since $N_x=0$, it follows that ${\cal T}_i \notin[x], \forall i\notin [x]$. It follows that $\sum_{i=1}^{x-1} N_i =x$. Since ${\cal T}_i \subset \{i-1,i\},\forall i\in\{2,\ldots,x\}$, we know that at most one transmitter in the first $x-1$ transmitters carries two messages. Since $\sum_{i=1}^{x-1} N_i =x$, and $N_i \in \{1,2\}, \forall i\in[x-1]$, it follows that there exists an index $y\in[x-1]$ such that $N_y = 2$, and $N_i=1, \forall i\in[x-1]\backslash\{y\}$. It is now clear that the $y^{th}$ transmitter carries messages $W_y$ and $W_{y+1}$, and each transmitter with an index $j\in\{y+1,\ldots,x-1\}$ is carrying message $W_{j+1}$, and each transmitter with an index $j \in \{1,\ldots,y\}$ is carrying message $W_j$. The transmit sets are then determined as follows. ${\cal T}_i=\{i\},\forall i\in[y]$ and ${\cal T}_i=\{i-1\},\forall i\in\{y+1,\ldots,x\}$.

We view the network as a series of subnetworks, where the last transmitter in each subnetwork is either inactive or the last transmitter in the network. If the last transmitter in a subnetwork is inactive, then the transmit sets in the subnetwork are determined in a similar fashion to the transmit sets ${\cal T}_i, i\in[x]$, in the above scenario. If the last transmitter in the subnetwork is the $K^{th}$ transmitter, and $N_K = 1$, then each message in this subnetwork is available at the transmitter with the same index.
\end{proof}

We use Lemma~\ref{lem:equiv} to describe message assignment strategies for large networks through repeating patterns of short ternary strings. Given a ternary string ${\bf S}=(S_1,\ldots,S_n)$ of fixed length $n$ such that $\sum_{i=1}^{n} S_i = n$, we define ${\bf N}^K$, $K \geq n$ as follows:
\begin{itemize}
\item $N_i=S_{i \text{ mod } n}$ if $\quad i\in\left\{1,\ldots,n\left\lfloor \frac{K}{n} \right\rfloor\right\}$,
\item $N_i=1$ if $i\in\left\{n\left\lfloor \frac{K}{n} \right\rfloor+1,\ldots,K\right\}$.
\end{itemize}

We now evaluate all possible message assignment strategies satisfying the cell association constraint using ternary strings through the above representation. We only restrict our attention to irreducible message assignments, and note that if there are two transmitters with indices $i$ and $j$ such that $i < j$ and each is carrying two messages, then there is a third transmitter with index $k$ such that $i < k < j$ that carries no messages. It follows that any string defining message assignment strategies that satisfy the cell association constraint, has to have one of the following forms:
\begin{itemize}
\item $S^{(1)}=(1)$,
\item $S^{(2)}=(2,1,1,\ldots,1,0)$,
\item $S^{(3)}=(1,1,\ldots,1,2,0)$,
\item $S^{(4)}=(1,1,\ldots,1,2,1,1,\ldots,1,0)$.
\end{itemize} 
Note that there is similarity between the message assignment strategies described by $S^{(2)}$ and $S^{(3)}$. In particular, any asymptotic per user DoF achievable through one of these strategies, is also achievable through the other. We hence introduce the three candidate message assignment strategies illustrated in Figure~\ref{fig:msgassignment}, and characterize the per user DoF achieved through each of them; we will formally show later that the optimal message assignment strategy at any value of $p$ is given by one of the three introduced strategies. We first consider the message assignment strategy defined by the string having the form $S^{(1)}=(1)$. Here, each message is available at the transmitter having the same index.

\begin{figure}
  \centering
\subfloat[]{\label{fig:highp}\includegraphics[height=0.2\textwidth]{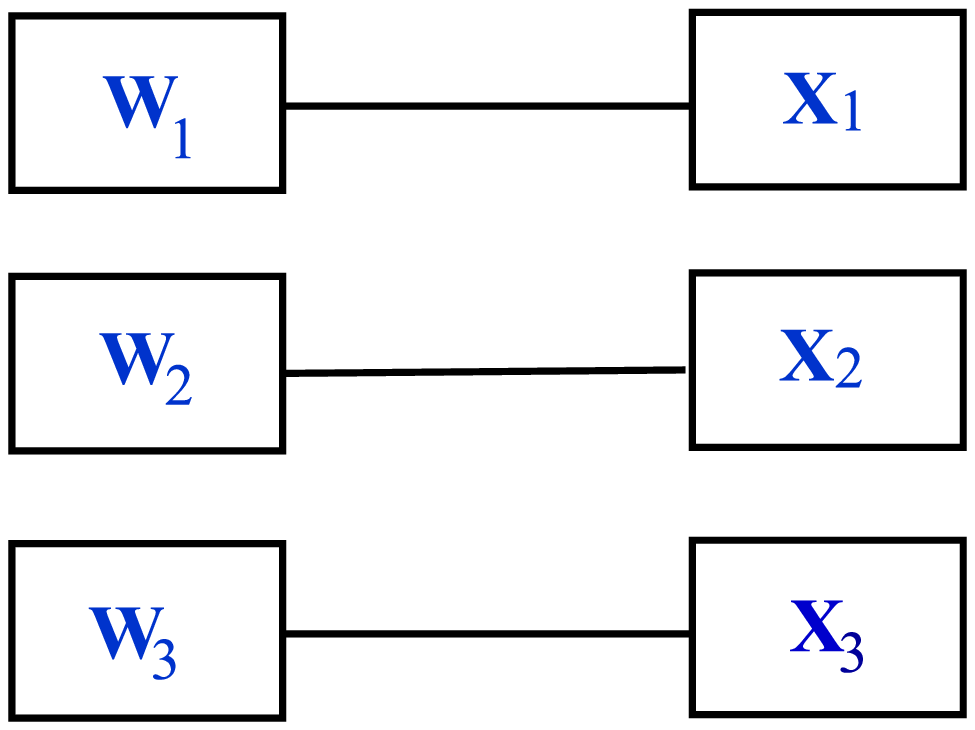}}                
\quad\quad\subfloat[]{\label{fig:lowp}\includegraphics[width=0.27\textwidth]{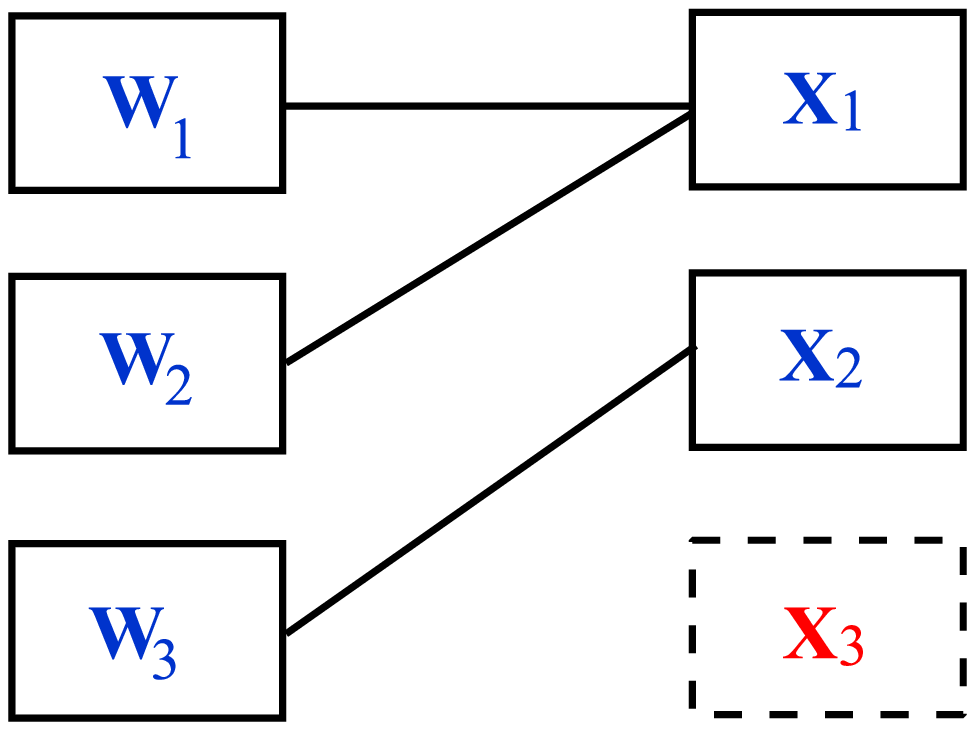}}
\quad\quad\subfloat[]{\label{fig:middlep}\includegraphics[width=0.27\textwidth]{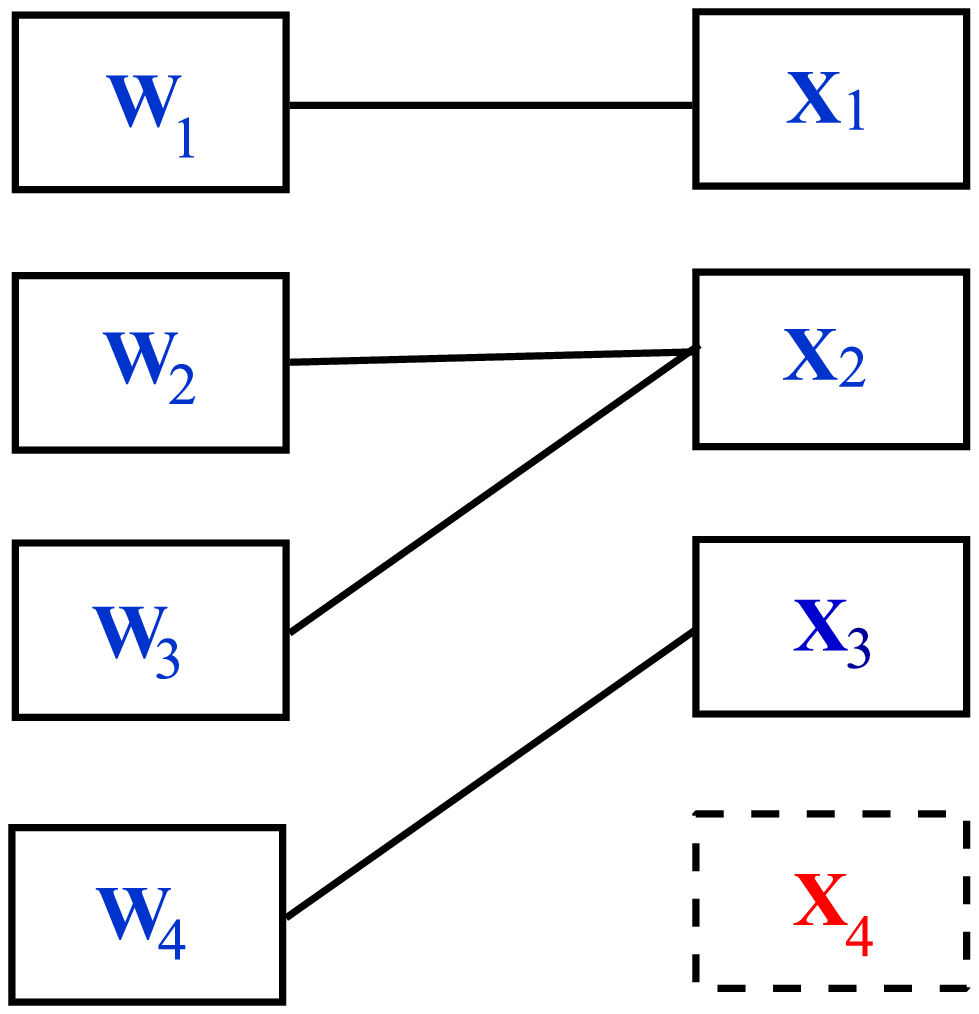}}
  \caption{The optimal message assignment strategies for the cell association problem. The red dashed boxes represent transmit signals that are inactive in all network realizations. The strategies in $(a)$, $(b)$, and $(c)$ are optimal at high, low, and middle values of the erasure probability $p$, respectively.}
  \label{fig:msgassignment}
\end{figure}
\begin{lem}\label{lem:highp}
Under the restriction to the message assignment strategy ${\cal T}_{i,K}=\{i\}, \forall K \in {\bf Z}^+, i\in[K],$ and orthogonal TDMA schemes, the average per user DoF is given by,
\begin{eqnarray}\label{eq:highp}
\tau_p^{(1)} &=& \frac{1}{2}\left(1-p+(1-p)\left(1-(1-p)^2\right)^2\right)\nonumber\\&&+\sum_{i=1}^{\infty} \frac{1}{2}\left(1-(1-p)^2\right)^2 (1-p)^{4i+1}.
\end{eqnarray}
\end{lem}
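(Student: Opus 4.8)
The plan is to use the fact that, under the message assignment $\mathcal{T}_{i,K}=\{i\}$, an orthogonal TDMA scheme gives exactly one degree of freedom to each \emph{active} receiver and zero to every inactive one, so for a fixed realization the achievable $\eta_p$ equals the maximum number of receivers that can be simultaneously activated without interference. First I would translate this into a combinatorial problem on the line $[K]$. Message $W_i$ can be delivered only if the direct link $H_{i,i}$ is present (the message is enabled), and if receivers $i$ and $i+1$ are both active then transmitter $i$ must not interfere at receiver $i+1$, i.e. the cross link $H_{i+1,i}$ must be erased. Writing $q=1-p$, each vertex $i$ is thus \emph{available} independently with probability $q$, and a conflict edge joins consecutive available vertices $i,i+1$ exactly when $H_{i+1,i}$ is present, again with probability $q$, all links being independent. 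The maximum number of active receivers is therefore the size of a maximum independent set in this random conflict graph, and turning on an optimal independent set is a valid orthogonal scheme, so the per-realization DoF equals this quantity exactly.

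Next I would invoke the subnetwork machinery of Section~\ref{sec:systemmodel}. For this assignment the break condition between users $i$ and $i+1$ reduces to: they lie in the same (atomic) subnetwork iff $H_{i,i}$, $H_{i+1,i+1}$ and $H_{i+1,i}$ are all present. Hence the realization decomposes into independent atomic subnetworks, each a maximal run of consecutive users whose direct links and internal cross links are all present; such a subnetwork of $m$ users is a clean conflict path $P_m$ on which the maximum independent set has size $\lceil m/2\rceil$, achieved by activating alternate users. Since the maximum independent set is additive over the connected components of the conflict graph and every unavailable user contributes $0$, the total DoF equals $\sum_c \lceil m_c/2\rceil$ over subnetworks $c$ of size $m_c$. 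The key algebraic step is the identity $\lceil m/2\rceil=\tfrac{m}{2}+\tfrac12\mathbf{1}\{m\text{ odd}\}$, which splits the DoF into $\tfrac12$ times the number of available users plus $\tfrac12$ times the number of odd-sized subnetworks.

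Taking expectations and dividing by $K$, the first piece contributes $\tfrac{q}{2}$ (the density of available users), so it remains to compute the limiting density of odd-sized subnetworks. I would compute the probability that a size-$m$ subnetwork begins at an interior index $a$ by observing that the defining events involve \emph{disjoint} sets of independent links: the $m$ direct links $H_{a,a},\dots,H_{a+m-1,a+m-1}$ present (probability $q^m$), the $m-1$ internal cross links present (probability $q^{m-1}$), and two boundary events (each of probability $1-q^2$, since the edge to a neighbor is absent exactly when that neighbor's direct link or the connecting cross link is erased, while the subnetwork's own endpoint already has its direct link present). Independence then gives density $q^{2m-1}(1-q^2)^2$ for size-$m$ subnetworks, and summing over odd $m$ — namely $m=1$, contributing the $q^{1}$ term folded into the first group, and $m=2i+1$ for $i\ge 1$, contributing the exponents $q^{4i+1}$ — reproduces exactly the two groups of terms in~\eqref{eq:highp}.

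The hard part will be the passage to the limit rather than the counting: the per-realization DoF identity is exact, but converting $\tfrac1K\,E[\text{MIS}]$ into the stated density sum requires a renewal/stationarity argument controlling the $O(1)$ boundary corrections (the leftmost subnetwork at $a=1$ having an automatic left boundary, the rightmost one ending at $K$, and the absence of a cross link at the last transmitter), all of which vanish once divided by $K$. A secondary point of care is rigorously matching the informal conflict-graph decomposition to the paper's formal topology-reduction and subnetwork definitions, so that the equivalence ``same subnetwork $\iff H_{i,i}=H_{i+1,i+1}=H_{i+1,i}\neq 0$'' is derived from those definitions rather than merely asserted.
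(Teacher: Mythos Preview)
Your approach is correct and arrives at the same formula, but it is organized quite differently from the paper's proof. The paper argues \emph{operationally}: it first exhibits a concrete priority schedule (odd-indexed users always transmit when enabled; even-indexed users transmit only when isolated from both neighbors), computes that this yields $\tfrac12\bigl((1-p)+(1-p)(1-(1-p)^2)^2\bigr)$, and then patches the schedule inside every odd-length atomic subnetwork whose endpoints land on even indices---exactly the cases where the odd-priority rule under-achieves $\lceil m/2\rceil$ by one. The infinite sum is literally the expected density of those patched subnetworks, and the matching upper bound is imported from~\cite{Maleki-Jafar-arXiv13}. Your route is \emph{structural}: you identify the per-realization TDMA optimum as a maximum independent set on a random union of disjoint paths, invoke $\lceil m/2\rceil = m/2 + \tfrac12\mathbf{1}\{m\text{ odd}\}$, and read off $\tfrac{q}{2}$ plus half the density of odd-length runs. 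Your argument has the advantage of giving achievability and TDMA-optimality in one stroke (no external citation needed, since an orthogonal scheme \emph{is} an independent set in the conflict graph) and of making the origin of each term transparent; the paper's argument has the advantage of naming an explicit, implementable schedule. The renewal/boundary issues you flag are genuine but routine, and the equivalence of your conflict-graph components with the paper's atomic subnetworks is immediate for this message assignment since ${\cal T}_i=\{i\}$ makes the topology-reduced transmit sets trivial.
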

\begin{proof}
We will first explain a transmission scheme where $\frac{1}{2}\left(1-p+(1-p)\left(1-(1-p)^2\right)^2\right)$ average per user DoF is achieved, and then modify it to show how to achieve $\tau_p^{(1)}$. For each user with and odd index $i$, message $W_i$ is transmitted whenever the channel coefficient $H_{i,i} \neq 0$; the rate achieved by these users contributes to the average per user DoF by $\frac{1}{2}(1-p)$. For each user with an even index $i$, message $W_i$ is transmitted whenever the following holds: $H_{i,i}\neq0$, $W_{i-1}$ does not cause interference at $Y_i$, and the transmission of $W_i$ will not disrupt the communication of $W_{i+1}$ to its designated receiver; we note that this happens if and only if $H_{i,i} \neq 0 \text { and } \left(H_{i-1,i-1}=0 \text{ or } H_{i,i-1}=0\right)$$\text{ and}$
$(H_{i+1,i}=0 \text{ or } H_{i+1,i+1}=0)$. It follows that the rate achieved by users with even indices contributes to the average per user DoF by $\frac{1}{2} (1-p)\left(1-(1-p)^2\right)^2$.

 We now show a modification of the above scheme to achieve $\tau_p^{(1)}$. As above, users with odd indices have priority, i.e., their messages are delivered whenever their direct links exist, and users with even indices deliver their messages whenever their direct links exist and the channel connectivity allows for avoiding conflict with priority users. However, we make an exception to the priority setting in atomic subnetworks consisting of an odd number of users, and the first and last users have even indices; in these subnetworks, one extra DoF is achieved by allowing users with even indices to have priority and deliver their messages. The resulting extra term in the average per user DoF is calculated as follows. Fixing a user with an even index, the probability that this user is the first user in an atomic subnetwork consisting of an odd number of users in a large network is $\sum_{i=1}^{\infty}\left(1-(1-p)^2\right)^2 (1-p)^{4i+1}$; for each of these events, the sum DoF is increased by $1$, and hence the added term to the average per user DoF is equal to half this value, since every other user has an even index.

The optimality of the above scheme within the class of orthogonal TDMA-based schemes follows directly from~\cite[Theorem $1$]{Maleki-Jafar-arXiv13} for each realization of the network.
\end{proof}

We will show later that the above scheme is optimal at high erasure probabilities. In~\cite{ElGamal-Annapureddy-Veeravalli-arXiv12}, the optimal message assignment for the case of no erasures was characterized. The per user DoF was shown to be $\frac{2}{3}$, and was achieved by deactivating every third transmitter and achieving $1$ DoF for each transmitted message. We now consider the extension of this message assignment illustrated in Figure~\ref{fig:lowp}, which will be shown later to be optimal for low erasure probabilities. 
\begin{lem}\label{lem:lowp}
Under the restriction to the message assignment strategy defined by the string $S=(2,1,0)$, and orthogonal TDMA schemes, the average per user DoF is given by,
\begin{eqnarray}\label{eq:lowp}
\tau_p^{(2)} &=& \frac{2}{3}\left(1-p\right)+\frac{1}{3}p\left(1-p\right)\left(1-\left(1-p\right)^2 \right).
\end{eqnarray}
\end{lem}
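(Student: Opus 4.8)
The plan is to reduce the whole computation to a single period-$3$ block and then average the best achievable interference-avoidance performance over the four link erasures that can occur inside that block. First I would use Lemma~\ref{lem:equiv} to read off the transmit sets induced by $S=(2,1,0)$: in each block of three consecutive users (relabeled $1,2,3$), the first transmitter carries both $W_1$ and $W_2$, the second carries $W_3$, and the third is silent in every realization. Since every third transmitter is permanently inactive, receiver $1$ of each block hears only its own transmitter and no transmitter in one block reaches a receiver in another block; hence the realized network always splits into non-interfering length-$3$ subnetworks. Up to a boundary contribution that is $O(1/K)$ and vanishes as $K\to\infty$, this yields $\tau_p^{(2)}=\tfrac{1}{3}\,\mathbb{E}[D]$, where $D$ is the number of messages delivered in one block under the best orthogonal TDMA schedule.

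Next I would analyze the single block. The only relevant links are the two direct links $H_{1,1}$ and $H_{2,1}$ of $W_1$ and $W_2$ out of the first transmitter, the direct link $H_{3,2}$ of $W_3$, and the cross link $H_{2,2}$ from the second transmitter to receiver $2$; each is present independently with probability $1-p$. Three observations then pin down $D$. Since $W_1$ and $W_2$ are both carried only by the first transmitter, at most one of them can be delivered, so $D\le 2$. The pair $\{W_1,W_3\}$ can always be delivered together whenever $H_{1,1}$ and $H_{3,2}$ are present, because receiver $1$ hears only the first transmitter and receiver $3$ only the second. The pair $\{W_2,W_3\}$, on the other hand, can be delivered together only when $H_{2,1}$ and $H_{3,2}$ are present \emph{and} the cross link $H_{2,2}$ is erased, so that the $W_3$ signal does not interfere at receiver $2$; this is the single nontrivial interference constraint. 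Consequently $D\ge 2$ holds iff $H_{3,2}$ is present and either $H_{1,1}$ is present or ($H_{2,1}$ present and $H_{2,2}$ erased), while $D\ge 1$ holds iff at least one of $H_{1,1},H_{2,1},H_{3,2}$ is present.

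Writing $q=1-p$ and using $\mathbb{E}[D]=\Pr(D\ge1)+\Pr(D\ge2)$ together with independence of the four links, I would then compute $\Pr(D\ge1)=1-p^3$ and, by conditioning on $H_{1,1}$, $\Pr(D\ge2)=q\left(q+p^2q\right)=q^2(1+p^2)$. Summing and simplifying gives $\mathbb{E}[D]=2q+q(1-q)(1-q^2)=2(1-p)+p(1-p)\left(1-(1-p)^2\right)$, so that $\tau_p^{(2)}=\tfrac13\mathbb{E}[D]$ is exactly \eqref{eq:lowp}. Equivalently, the leading term $\tfrac23(1-p)$ is the contribution of the baseline $\{W_1,W_3\}$ schedule, and the second term is precisely the extra gain obtained by opportunistically switching the first transmitter to $W_2$ in the realizations where $W_1$'s direct link $H_{1,1}$ is erased.

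The optimality of this schedule within orthogonal TDMA, realization by realization, follows as in Lemma~\ref{lem:highp} from \cite[Theorem~1]{Maleki-Jafar-arXiv13} (equivalently, a direct check that no interference-free subset of $\{W_1,W_2,W_3\}$ larger than the one used can exist in a block, and that time sharing cannot raise the per-block DoF above this maximum count). The hard part is the single-block case analysis of the second paragraph, and specifically the third observation: correctly tracking the coupling between $W_2$ and $W_3$ through the cross link $H_{2,2}$, since this is what separates the two ways of reaching $D=2$ and is exactly what produces the correction term in \eqref{eq:lowp}. By comparison, the reduction to independent blocks and the cap $D\le 2$ are routine.
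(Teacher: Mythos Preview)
Your proposal is correct and follows essentially the same approach as the paper: both exploit the fact that the silent third transmitter decouples the network into independent length-$3$ blocks, identify the same four relevant links, and compute the expected number of interference-free deliveries per block, invoking \cite[Theorem~1]{Maleki-Jafar-arXiv13} for realization-wise optimality. The only cosmetic difference is that the paper tallies the expectation user by user (the $i\bmod 3\in\{0,1\}$ users give $\tfrac{2}{3}(1-p)$ and the $i\bmod 3=2$ user gives the correction), whereas you compute $\mathbb{E}[D]=\Pr(D\ge 1)+\Pr(D\ge 2)$; the two organizations yield the same polynomial.
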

\begin{proof}
For each user with an index $i$ such that $\left(i \text{ mod } 3 = 0\right)$ or $\left(i \text{ mod } 3=1\right)$, message $W_i$ is transmitted whenever the link between the transmitter carrying $W_i$ and the $i^{th}$ receiver is not erased; these users contribute to the average per user DoF by a factor of $\frac{2}{3}\left(1-p\right)$. For each user with an index $i$ such that $\left(i \text{ mod } 3=2\right)$, message $W_i$ is transmitted through $X_{i-1}$ whenever the following holds: $H_{i,i-1} \neq 0$, message $W_{i-1}$ is not transmitted because $H_{i-1,i-1}=0$, and the transmission of $W_i$ will not be disrupted by the communication of $W_{i+1}$ through $X_i$ because $\left(H_{i,i}=0\right) \text{ or } \left(H_{i+1,i}=0\right)$; these users contribute to the average per user DoF by a factor of $\frac{1}{3}p\left(1-p\right)\left(1-\left(1-p\right)^2\right)$. Using the considered message assignment strategy, the TDMA optimality of this scheme follows from~\cite[Theorem $1$]{Maleki-Jafar-arXiv13} for each network realization.
\end{proof}

We now consider the message assignment strategy illustrated in Figure~\ref{fig:middlep}. We will show later that this strategy is optimal for a middle regime of erasure probabilities. 
\begin{lem}\label{lem:middlep}
Under the restriction to the message assignment strategy defined by the string $S=(1,2,1,0)$, and orthogonal TDMA schemes, the average per user DoF is given by,
\begin{eqnarray}\label{eq:middlep}
\tau_p^{(3)} &=& \frac{1}{2}\left(1-p\right)\nonumber\\&&+\frac{1}{4}\left(1-p\right)\left(1-\left(1-p\right)^2 \right)\left(1+p+\left(1-p\right)^3\right).\nonumber\\
\end{eqnarray}
\end{lem}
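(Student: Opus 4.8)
The plan is to follow the template of the proofs of Lemmas~\ref{lem:highp} and~\ref{lem:lowp}: translate the defining string into an explicit transmit-set pattern, describe an interference-avoidance scheme, tally its average contribution term by term, and close by invoking the per-realization TDMA converse of~\cite[Theorem~$1$]{Maleki-Jafar-arXiv13}.

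First I would apply Lemma~\ref{lem:equiv} to the string $S=(1,2,1,0)$ to obtain the period-$4$ assignment: for every block $k\geq 0$, transmitter $4k+1$ carries $W_{4k+1}$, transmitter $4k+2$ carries both $W_{4k+2}$ and $W_{4k+3}$, transmitter $4k+3$ carries $W_{4k+4}$, and transmitter $4k+4$ carries no message, so it is inactive in every realization. The key structural observation is that, because the bridging transmitter $4k+4$ is always off, receiver $4k+4$ hears only transmitter $4k+3$ and receiver $4k+5$ hears only transmitter $4k+5$; hence no interference crosses a block boundary, the network decomposes into independent and identically distributed blocks of four users over disjoint link sets, and $\tau_p^{(3)}$ equals one quarter of the expected maximal number of deliverable messages in a single block. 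The boundary contribution from the last $K-4\lfloor K/4\rfloor$ users vanishes as $K\to\infty$.

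I would then analyze one block $\{1,2,3,4\}$. Messages $W_1$ (at transmitter $1$, received interference-free at receiver $1$) and $W_4$ (at transmitter $3$, received interference-free at receiver $4$ since transmitter $4$ is off) are treated as priority messages, each delivered exactly when its sole relevant link is present, i.e. with probability $1-p$; by linearity of expectation they contribute $\tfrac{2}{4}(1-p)=\tfrac12(1-p)$, the first term. The messages $W_2$ and $W_3$ share transmitter $2$, so at most one is deliverable, and each is constrained by a priority message: delivering $W_2$ needs $H_{2,2}\neq 0$ together with $H_{1,1}=0$ or $H_{2,1}=0$ (so that $W_1$ does not interfere at receiver $2$), while delivering $W_3$ needs $H_{3,2}\neq 0$ together with $H_{4,3}=0$ or $H_{3,3}=0$. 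Writing $A$ and $B$ for these feasibility events and noting they depend on disjoint link sets, so they are independent with $P(A)=P(B)=(1-p)\bigl(1-(1-p)^2\bigr)$, the secondary DoF of the block is $\mathbf{1}[A\cup B]$, whose expectation $P(A)+P(B)-P(A)P(B)$ simplifies to $(1-p)\bigl(1-(1-p)^2\bigr)\bigl(1+p+(1-p)^3\bigr)$; dividing by $4$ gives the second term and matches~\eqref{eq:middlep}.

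The step I would treat most carefully is justifying that this greedy priority scheme is optimal in every realization, which is what lets me appeal to~\cite[Theorem~$1$]{Maleki-Jafar-arXiv13} block by block. Here I would argue that no block supports more than three active messages (since $W_2$ and $W_3$ conflict at transmitter $2$), and that deviating from the priority rule never helps: switching off $W_1$ to free $W_2$, or $W_4$ to free $W_3$, is at best a one-for-one exchange, and $W_2,W_3$ can never be delivered together. Thus the priority scheme attains the per-realization maximum characterized by the cited converse, and averaging the block value over the independent blocks yields exactly $\tau_p^{(3)}$.
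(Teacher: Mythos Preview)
Your proof is correct, and in fact your decomposition is cleaner than the paper's own. Both proofs read off the same period-$4$ assignment from the string $S=(1,2,1,0)$, decompose the network into non-interacting four-user blocks, and close by citing the per-realization TDMA converse of~\cite[Theorem~$1$]{Maleki-Jafar-arXiv13}. The difference is in the choice of priority messages within a block. The paper gives priority to the odd-indexed messages $W_1,W_3$; because $W_3$ conflicts with both $W_2$ (same transmitter) and $W_4$ (interference at $Y_3$), this priority rule is not realization-wise optimal and requires a separate modification step for the event that $\{2,3,4\}$ form an atomic subnetwork, yielding the value in two stages, $\tfrac{1}{4}(1-p)(1-(1-p)^2)(1+p)$ plus the correction $\tfrac{1}{4}(1-(1-p)^2)(1-p)^4$. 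You instead give priority to the endpoint messages $W_1,W_4$, which never conflict with one another and each conflicts with at most one middle message; this makes the greedy rule realization-wise optimal without any correction, and the symmetry of the two independent events $A,B$ lets you read off the second term directly as $\tfrac{1}{4}P(A\cup B)=\tfrac{1}{4}q(2-q)$ with $q=(1-p)(1-(1-p)^2)$, which indeed equals $\tfrac{1}{4}(1-p)(1-(1-p)^2)(1+p+(1-p)^3)$. Your one-line optimality argument (``switching off $W_1$ to free $W_2$, or $W_4$ to free $W_3$, is at best a one-for-one exchange, and $W_2,W_3$ can never be delivered together'') is terse but sound; if you want to be airtight, simply note that $W_1$ conflicts only with $W_2$, $W_4$ conflicts only with $W_3$, and $W_2,W_3$ always conflict, so any independent set has size at most $\mathbf{1}[H_{1,1}\neq 0]+\mathbf{1}[H_{4,3}\neq 0]+\mathbf{1}[A\cup B]$.
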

\begin{proof}
As in the proof of Lemma~\ref{lem:highp}, we first explain a transmission scheme achieving part of the desired rate, and then modify it to show how the extra term can be achieved. Let each message with an odd index be delivered whenever the link between the transmitter carrying the message and the designated receiver is not erased; these users contribute to the average per user DoF by a factor of $\frac{1}{2} \left(1-p\right)$. For each user with an even index $i$, if $i \text{ mod } 4=2$, then $W_i$ is transmitted through $X_i$ whenever the following holds: $H_{i,i} \neq 0$, message $W_{i+1}$ is not transmitted through $X_i$ because $H_{i+1,i} =0$, and the transmission of $W_i$ will not be disrupted by the communication of $W_{i-1}$ through $X_{i-1}$ because either $H_{i,i-1}=0$ or $H_{i-1,i-1} =0$; these users contribute to the average per user DoF by a factor of $\frac{1}{4}p\left(1-p\right)\left(1-\left(1-p\right)^2\right)$. For each user with an even index $i$ such that $i$ is a multiple of $4$, $W_i$ is transmitted through $X_{i-1}$ whenever $H_{i,i-1}\neq 0$, and the transmission of $W_i$ will not disrupt the communication of $W_{i-1}$ through $X_{i-2}$ because either $H_{i-1,i-1}=0$ or $H_{i-1,i-2}=0$; these users contribute to the average per user DoF by a factor of $\frac{1}{4}\left(1-p\right)\left(1-\left(1-p\right)^2\right)$. 

We now modify the above scheme to show how $\tau_p^{(3)}$ can be achieved. Since the $i^{th}$ transmitter is inactive for every $i$ that is a multiple of $4$, users $\{i-3,i-2,i-1,i\}$ are separated from the rest of the network for every $i$ that is a multiple of $4$, i.e., these users form a subnetwork. We explain the modification for the first four users, and it will be clear how to apply a similar modification for every following set of four users. Consider the event where message $W_1$ does not cause interference at $Y_2$, because either $H_{1,1}=0$ or $H_{2,1}=0$, and it is the case that $H_{2,2}\neq 0$, $H_{3,2} \neq 0$, $H_{3,3} \neq 0$, and $H_{4,3} \neq 0$; this is the event that users $\{2,3,4\}$ form an atomic subnetwork, and it happens with probability $\left(1-\left(1-p\right)^2\right)\left(1-p\right)^4$. In this case, we let messages $W_2$ and $W_4$ have priority instead of message $W_3$, and hence the sum DoF for messages $\{W_1,W_2,W_3,W_4\}$ is increased by $1$. It follows that an extra term of $\frac{1}{4}\left(1-\left(1-p\right)^2\right)\left(1-p\right)^4$ is added to the average per user DoF. 

The TDMA optimality of the illustrated scheme follows from~\cite[Theorem $1$]{Maleki-Jafar-arXiv13} for each network realization.
\end{proof}

In Figure~\ref{fig:monenorm}, we plot the values of $\frac{\tau_p^{(1)}}{1-p}$, $\frac{\tau_p^{(2)}}{1-p}$, and $\frac{\tau_p^{(3)}}{1-p}$, and note that $\max \left\{\tau_p^{(1)},\tau_p^{(2)},\tau_p^{(3)}\right\}$ equals $\tau_p^{(1)}$ at high probabilities of erasure, and equals $\tau_p^{(2)}$ at low probabilities of erasure, and equals $\tau_p^{(3)}$ in a middle regime. 
\begin{figure}[htb]
\centering
\includegraphics[width=1\columnwidth]{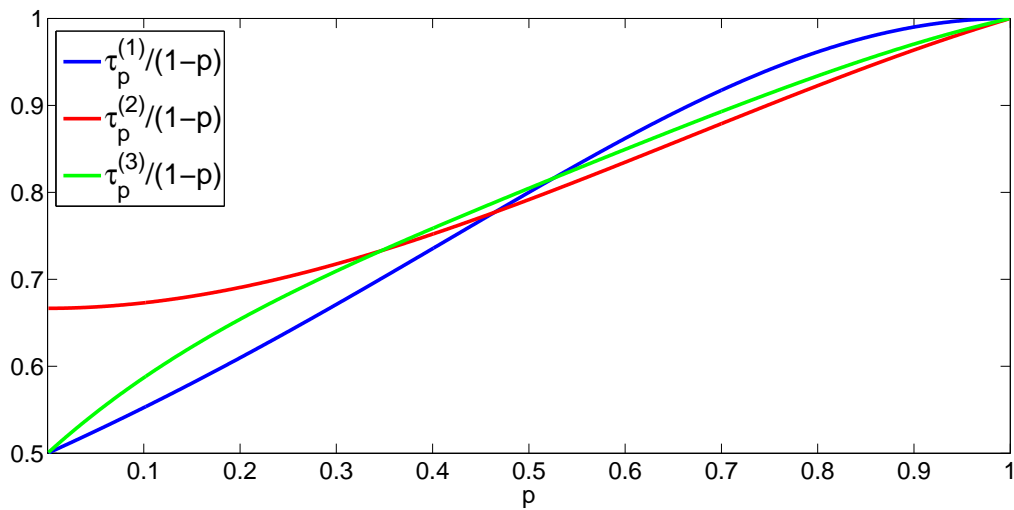}
\caption{The average per user DoF achieved through the strategies in Lemmas~\ref{lem:highp},~\ref{lem:lowp}, and~\ref{lem:middlep}, normalized by $(1-p)$.}
\label{fig:monenorm}
\end{figure} 

We now show that under the restriction to TDMA schemes, one of the message assignment strategies illustrated in Lemmas~\ref{lem:highp}, \ref{lem:lowp}, and \ref{lem:middlep} is optimal at any value of $p$. 
\begin{thm}\label{thm:tdma}
For any value $0 \leq p \leq 1$, the average per user DoF under restriction to orthogonal TDMA schemes is given as follows.
\begin{equation}\label{eq:tdma}
\tau_p^{(TDMA)}=\max \left\{\tau_p^{(1)},\tau_p^{(2)},\tau_p^{(3)}\right\},
\end{equation}
where $\tau_p^{(1)}$, $\tau_p^{(2)}$, and $\tau_p^{(3)}$ are given in~\eqref{eq:highp},~\eqref{eq:lowp}, and~\eqref{eq:middlep}, respectively.
\end{thm}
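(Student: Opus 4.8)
The plan is to separate achievability from the converse. Achievability of $\max\{\tau_p^{(1)},\tau_p^{(2)},\tau_p^{(3)}\}$ is immediate: for each fixed $p$ one simply deploys whichever of the three message assignment strategies of Lemmas~\ref{lem:highp},~\ref{lem:lowp}, and~\ref{lem:middlep} attains the largest value, and the corresponding TDMA scheme achieves the claimed average per user DoF. The entire content of the theorem is therefore the converse: no orthogonal TDMA scheme, under any cell-association ($M=1$) message assignment strategy, can exceed this maximum for any $p$.

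For the converse I would first reduce the space of strategies to a manageable family. By Lemma~\ref{lem:equiv} it suffices to consider irreducible assignments, which are in bijection with the sequences ${\bf N}^K$, and since the per user DoF is governed by the best achievable asymptotic density, it suffices to consider the periodic strategies represented by the ternary strings introduced before the statement; a non-periodic strategy cannot beat the best periodic one, which realizes the optimal density in the limit. The argument already given before the theorem shows that every admissible string has one of the forms $S^{(1)}=(1)$, $S^{(2)}=(2,1,\ldots,1,0)$, $S^{(3)}=(1,\ldots,1,2,0)$, or $S^{(4)}=(1,\ldots,1,2,1,\ldots,1,0)$, and that $S^{(2)}$ and $S^{(3)}$ yield the same asymptotic per user DoF. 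It thus remains only to optimize, within each of the families $S^{(2)}$ and $S^{(4)}$, over the number of $1$'s in the repeating block.

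The key computation is to express the average per user DoF of a general string as a function of its period $n$ (equivalently, the number of $1$'s). For any fixed string and fixed erasure pattern the optimal TDMA sum DoF is pinned down by \cite[Theorem $1$]{Maleki-Jafar-arXiv13}, so averaging over the independent link erasures gives a closed form for each family. I expect this expression to take the shape $B+\frac{A-n_0 B}{n}$, where $n_0$ and $A$ capture the fixed ``core'' of the block containing the $2$ and the $0$, and $B$ is exactly the per-user contribution of a long stretch of diagonal $1$'s, i.e. the quantity $\tau_p^{(1)}$ of Lemma~\ref{lem:highp}. Such an affine-in-$1/n$ expression is monotone in $n$ and interpolates between its value at the minimal instance ($n=n_0$, giving $\tau_p^{(2)}$ for $S^{(2)}$ and $\tau_p^{(3)}$ for $S^{(4)}$) and its limit $B=\tau_p^{(1)}$ as $n\to\infty$. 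Consequently the supremum over $S^{(2)}$ is $\max\{\tau_p^{(1)},\tau_p^{(2)}\}$, that over $S^{(4)}$ is $\max\{\tau_p^{(1)},\tau_p^{(3)}\}$, and $S^{(1)}$ contributes exactly $\tau_p^{(1)}$; taking the overall supremum yields $\max\{\tau_p^{(1)},\tau_p^{(2)},\tau_p^{(3)}\}$, matching achievability.

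The main obstacle I anticipate is justifying the affine form and the resulting interpolation rigorously: proving that inserting additional $1$'s into the repeating block only moves the per user DoF between the minimal candidate and $\tau_p^{(1)}$, with no intermediate period beating both. This requires a careful and somewhat case-heavy accounting of the boundary interactions between a run of diagonal $1$'s and the adjacent $2$ and $0$ entries, since the priority-switching events exploited in the proofs of Lemmas~\ref{lem:highp} and~\ref{lem:middlep} (the odd-length atomic subnetwork corrections) must be tracked exactly for a general block; one must verify that these boundary contributions are constant in $n$, up to the geometrically decaying corrections that vanish in the limit, so that the per user DoF is genuinely of the asserted affine form. Once the three closed forms are in hand, the final identification of which candidate dominates in each regime of $p$ is the elementary comparison displayed in Figure~\ref{fig:monenorm}.
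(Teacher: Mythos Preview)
Your plan is essentially the paper's: reduce to the four string families, handle $S^{(1)}$ directly, and show that every $S^{(2)}$, $S^{(3)}$, $S^{(4)}$ string yields a per user DoF lying between the value for the shortest representative and $\tau_p^{(1)}$. The difference is in how you propose to establish that interpolation. You aim to compute the exact average per user DoF for a general period $n$, argue it is (nearly) affine in $1/n$, and invoke monotonicity. The paper instead proves only an \emph{upper bound} of the convex-combination form: for $S^{(2)}$ with $n$ ones it partitions the users of one repeating block into a set of size three whose average sum DoF equals $3\tau_p^{(2)}$ exactly, and a remaining set of size $n-1$ whose average sum DoF (together with the priority-switching correction) is at most $(n-1)\tau_p^{(1)}$, giving $\frac{3}{n+2}\tau_p^{(2)}+\frac{n-1}{n+2}\tau_p^{(1)}\le\max\{\tau_p^{(1)},\tau_p^{(2)}\}$; the $S^{(4)}$ case is handled analogously with a size-four core and a bound $\frac{4}{n+2}\tau_p^{(3)}+\frac{n-2}{n+2}\tau_p^{(1)}$.

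This is precisely the shortcut past the obstacle you flag. You correctly worry that the exact value is not literally affine in $1/n$ because the odd-length atomic-subnetwork corrections at the core/tail boundary introduce terms that depend on $n$ in a more delicate way. The paper avoids this entirely: since only an upper bound is needed for the converse, one does not have to track those boundary corrections exactly---one just checks that the tail users, together with whatever extra correction terms they generate, can be bounded by the $\tau_p^{(1)}$ rate per user, which is immediate from the optimality of the scheme in Lemma~\ref{lem:highp} restricted to that stretch. So your proof would go through, but the cleaner route is to aim directly for the convex-combination inequality rather than the exact closed form.
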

\begin{proof}
The inner bound follows from Lemmas~\ref{lem:highp},~\ref{lem:lowp}, and~\ref{lem:middlep}. In order to prove the converse, we need to consider all irreducible message assignment strategies where each message is assigned to a single transmitter. 
We know from Lemma~\ref{lem:highp} that the TDMA average per user DoF achieved through the strategy defined by the string of all ones having the form $S^{(1)}=(1)$ equals $\tau_p^{(1)}$, and hence the upper bound holds in this case. 

We now show that the TDMA average per user DoF achieved through strategies defined by strings of the form $S^{(2)}=\left(2,1,\ldots,1,0\right)$ is upper bounded by a convex combination of $\tau_p^{(1)}$ and $\tau_p^{(2)}$, and hence, is upper bounded by $\max \left\{\tau_p^{(1)},\tau_p^{(2)}\right\}$. The considered message assignment strategy splits each network into subnetworks consisting of a transmitter carrying two messages followed by a number of transmitters, each is carrying one message, and the last transmitter in the subnetwork carries no messages. We first consider the case where the number of transmitters carrying single messages is odd. We consider the simple scenario of the message assignment strategy defined by the string $(2,1,1,1,0)$, and then the proof will be clear for strategies defined by strings of the form $\left(2,1,1,\ldots,1,0\right)$ that have an arbitrary odd number of ones. In this case, it suffices to show that the average per user DoF in the first subnetwork is upper bounded by a convex combination of $\tau_p^{(1)}$ and $\tau_p^{(2)}$. The first subnetwork consists of the first five users; $W_1$ and $W_2$ can be transmitted through $X_1$. $W_3$, $W_4$ and $W_5$ can be transmitted through $X_2$, $X_3$, and $X_4$, respectively, and the transmit signal $X_5$ is inactive. 

We now explain the optimal TDMA scheme for the considered subnetwork. We first explain a simple scheme and then modify it to get the optimal scheme. Each of the messages $W_1$, $W_3$, and $W_5$ is delivered whenever the direct link between its carrying transmitter and its designated receiver is not erased. Message $W_2$ is delivered whenever message $W_1$ is not transmitted, and message $W_3$ is not causing interference at $Y_2$. Message $W_4$ is transmitted whenever $W_5$ is not causing interference at $Y_4$, and the transmission of $W_4$ through $X_3$ will not disrupt the communication of $W_3$. We now explain the modification; if there is an atomic subnetwork consisting of users $\{2,3,4\}$, then we switch the priority setting within this subnetwork, and messages $W_2$ and $W_4$ will be delivered instead of message $W_3$. The TDMA optimality of this scheme for each realization of the network follows from~\cite[Theorem $1$]{Maleki-Jafar-arXiv13}. Now, we note that the average sum DoF for messages $\{W_1,\ldots,W_5\}$ is equal to their sum DoF in the original scheme plus an extra term due to the modification. The average sum DoF for messages $\{W_1,W_2,W_5\}$ in the original scheme equals $3\tau_p^{(2)}$, and the sum of the average sum DoF for messages $\{W_3,W_4\}$ and the extra term is upper bounded by $2 \tau_p^{(1)}$. It follows that the average per user DoF is upper bounded by $\frac{2}{5} \tau_p^{(1)} + \frac{3}{5} \tau_p^{(2)}$. The proof can be generalized to show that the average TDMA per user DoF for message assignment strategies defined by strings of the form $S^{(2)}$ with an odd number of ones $n$, is upper bounded by $\frac{n-1}{n+2} \tau_p^{(1)} + \frac{3}{n+2} \tau_p^{(2)}$.

For message assignment strategies defined by a string of the form $S^{(2)}$ with an even number of ones $n$, it can be shown in a similar fashion as above that the TDMA average per user DoF is upper bounded by $\frac{n}{n+2} \tau_p^{(1)} + \frac{2}{n+2} \tau_p^{(2)}$. Also, for strategies defined by a string of the form $S^{(3)}=\left(1,1,\ldots,1,2,0\right)$ with a number of ones $n$, the TDMA average per user DoF is the same as that of a strategy defined by a string of the form $S^{(2)}$ with the same number of ones, and hence, is upper bounded by a convex combination of $\tau_p^{(1)}$ and $\tau_p^{(2)}$. Finally, for strategies defined by a string of the form $S^{(4)}=\left(1,1,\ldots,1,2,1,1,\ldots,1,0\right)$ with a number of ones $n$, it can be shown in a similar fashion as above that the average per user DoF is upper bounded by $\frac{n-2}{n+2}\tau_p^{(1)}+\frac{4}{n+2}\tau_p^{(3)}$.
\end{proof}

We now characterize the average per user DoF for the cell association problem by proving that TDMA schemes are optimal for any candidate message assignment strategy. In order to prove an information theoretic upper bound on the per user DoF for each network realization, we use Lemma $4$ from~\cite{ElGamal-Annapureddy-Veeravalli-arXiv12}, which we restate below. For any set of receiver indices ${\cal A} \subseteq [K]$, define $U_{\cal A}$ as the set of indices of transmitters that exclusively carry the messages for the receivers in ${\cal A}$, and the complement set is $\bar{U}_{\cal A}$. More precisely, $U_{\cal A} = [K]\backslash\cup_{i \notin {\cal A}} {\cal T}_i$.
\begin{lem}\cite[Lemma $4$]{ElGamal-Annapureddy-Veeravalli-arXiv12} \label{lem:dofouterbound}
If there exists a set ${\cal A}\subseteq [K]$, a function $f_1$, and a function $f_2$ whose definition does not depend on the transmit power constraint $P$, and $f_1\left(Y_{\cal A},X_{U_{\cal A}}\right)=X_{\bar{U}_{\cal A}}+f_2(Z_{\cal A})$, then the sum DoF $\eta \leq |{\cal A}|$.
\end{lem}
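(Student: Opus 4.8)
The plan is to prove this outer bound by a genie-aided argument showing that the signals $Y_{\cal A}$ received at the $|{\cal A}|$ receivers in ${\cal A}$ already carry enough information to recover every message in the network, so that the entire sum rate must be squeezed through only $|{\cal A}|$ received dimensions. Since each received signal can support at most one degree of freedom, this forces $\eta \le |{\cal A}|$. Concretely, I would introduce a single super-receiver that is handed $Y_{\cal A}^n$ over a block of length $n$, argue that it can reconstruct $W_{[K]}$ with vanishing error (in the DoF sense), and then conclude via Fano's inequality together with a standard differential-entropy bound on $I\left(W_{[K]}; Y_{\cal A}^n\right)$.

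The reconstruction proceeds along a functional chain dictated by the hypothesis. First, for each $i \in {\cal A}$ reliable communication lets the super-receiver decode $W_i$ from $Y_i^n$, hence recover $W_{\cal A}$. Since every transmitter in $U_{\cal A}$ carries only messages destined to receivers in ${\cal A}$ (by the definition $U_{\cal A} = [K]\setminus \cup_{i\notin{\cal A}} {\cal T}_i$), the transmit signals $X_{U_{\cal A}}^n$ are a deterministic function of $W_{\cal A}$ and are thus available. Applying the hypothesized map gives $f_1\left(Y_{\cal A}^n, X_{U_{\cal A}}^n\right) = X_{\bar{U}_{\cal A}}^n + f_2\left(Z_{\cal A}^n\right)$, so the super-receiver obtains $X_{\bar{U}_{\cal A}}^n$ up to the additive term $f_2(Z_{\cal A}^n)$. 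Because $\bar{U}_{\cal A} = \cup_{i\notin{\cal A}}{\cal T}_i$ contains every transmitter that carries a message for a receiver outside ${\cal A}$, the signals $X_{\bar{U}_{\cal A}}^n$ encode all messages $W_{\bar{\cal A}}$ destined to receivers outside ${\cal A}$. Knowing $X_{U_{\cal A}}^n$ exactly and $X_{\bar{U}_{\cal A}}^n$ up to the noise term, the super-receiver can form, for each $i \notin {\cal A}$, an artificial received signal $\tilde Y_i = \sum_j H_{i,j}\hat X_j$ that coincides with the genuine $Y_i$ up to an additive perturbation that is a fixed linear combination of the entries of $f_2(Z_{\cal A})$ and of $Z_i$, and then recover $W_{\bar{\cal A}}$.

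The delicate point, and the step I expect to be the main obstacle, is that the reconstruction of $X_{\bar{U}_{\cal A}}$ is corrupted by $f_2(Z_{\cal A})$, so the artificial signals $\tilde Y_i$ differ from the true received signals by noise. The crucial structural feature is that $f_2$ does not depend on $P$: the perturbation has bounded, power-independent variance, so it affects only the $o(\log P)$ terms and leaves the degrees of freedom unchanged. I would make this rigorous through the entropy chain $H\left(W_{[K]} \mid Y_{\cal A}^n\right) = H\left(W_{\cal A} \mid Y_{\cal A}^n\right) + H\left(W_{\bar{\cal A}} \mid Y_{\cal A}^n, W_{\cal A}\right)$, bounding the first term by Fano and the second by $H\left(W_{\bar{\cal A}} \mid X_{\bar{U}_{\cal A}}^n + f_2(Z_{\cal A}^n), W_{\cal A}\right)$, which is negligible in the DoF sense precisely because the corrupting noise does not scale with $P$. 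Finally, writing $n\sum_{i=1}^K R_i = I\left(W_{[K]}; Y_{\cal A}^n\right) + H\left(W_{[K]} \mid Y_{\cal A}^n\right)$ and bounding $I\left(W_{[K]}; Y_{\cal A}^n\right) = h\left(Y_{\cal A}^n\right) - h\left(Z_{\cal A}^n\right) \le |{\cal A}|\, n \log P + o(n\log P)$, using that conditioned on all messages $Y_{\cal A}^n$ is a deterministic signal plus $Z_{\cal A}^n$ and each of the $|{\cal A}|$ received signals contributes at most one degree of freedom, yields $\sum_{i=1}^K R_i \le |{\cal A}|\log P + o(\log P)$ and hence $\eta \le |{\cal A}|$.
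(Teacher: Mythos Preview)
The paper does not actually prove this lemma; it is quoted verbatim from \cite[Lemma~4]{ElGamal-Annapureddy-Veeravalli-arXiv12} and used as a black box. Your proposed argument is exactly the standard genie-aided proof that underlies that cited result: hand $Y_{\cal A}^n$ to a super-receiver, decode $W_{\cal A}$ by Fano, reconstruct $X_{U_{\cal A}}^n$ as a function of $W_{\cal A}$, invoke the hypothesis to recover $X_{\bar U_{\cal A}}^n$ up to the power-independent perturbation $f_2(Z_{\cal A}^n)$, simulate the remaining received signals with bounded extra noise, and conclude $\sum_i R_i \le |{\cal A}|\log P + o(\log P)$ via the differential-entropy bound $h(Y_{\cal A}^n)-h(Z_{\cal A}^n)$. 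This is correct and matches the intended approach; the key structural point---that $f_2$ not depending on $P$ keeps the reconstruction error in the $o(\log P)$ regime---is identified and handled properly.
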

\begin{thm}\label{thm:mone}
The average per user DoF for the cell association problem is given by,

\begin{equation}\label{eq:tauone}
\tau_p\left(M=1\right)=\tau_p^{(TDMA)}=\max \left\{\tau_p^{(1)},\tau_p^{(2)},\tau_p^{(3)}\right\},
\end{equation}
where $\tau_p^{(1)}$, $\tau_p^{(2)}$, and $\tau_p^{(3)}$ are given in~\eqref{eq:highp},~\eqref{eq:lowp}, and~\eqref{eq:middlep}, respectively.
\end{thm}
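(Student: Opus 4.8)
The plan is to prove the two inequalities $\tau_p(M=1) \geq \tau_p^{(TDMA)}$ and $\tau_p(M=1) \leq \tau_p^{(TDMA)}$ separately. The first is immediate: orthogonal TDMA schemes form a subclass of all admissible coding schemes, so any average per user DoF achievable by TDMA is achievable without the restriction, and Theorem~\ref{thm:tdma} already identifies $\max\left\{\tau_p^{(1)},\tau_p^{(2)},\tau_p^{(3)}\right\}$ as the best TDMA value. Hence the entire content of the theorem lies in the converse $\tau_p(M=1) \leq \tau_p^{(TDMA)}$, i.e., in showing that no coding scheme --- in particular no interference alignment or joint coding scheme --- can beat TDMA even on average.

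The key reduction I would make is to prove the stronger, per realization statement: for every fixed irreducible $M=1$ message assignment and every network realization, the information theoretic sum DoF equals the TDMA sum DoF of that realization. Granting this, the result follows by linearity of expectation, since for any fixed assignment the average sum DoF then equals the average TDMA sum DoF, which after dividing by $K$, letting $K\to\infty$, and maximizing over assignments is at most $\max\left\{\tau_p^{(1)},\tau_p^{(2)},\tau_p^{(3)}\right\}$ by Theorem~\ref{thm:tdma}. So I would first topology-reduce the assignment for the given realization and partition the surviving users into atomic subnetworks; since distinct subnetworks do not interfere, the sum DoF of the realization is the sum of the DoF of the atomic subnetworks, and it suffices to treat one atomic subnetwork in isolation.

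For a single atomic subnetwork the achievability side is the TDMA scheme of the relevant lemma, so the remaining work is the matching information theoretic converse, which I would obtain from Lemma~\ref{lem:dofouterbound}. The idea is to choose ${\cal A}$ to be an alternating set of receivers of the subnetwork, of size equal to the TDMA DoF, and to verify the hypothesis $f_1\left(Y_{\cal A},X_{U_{\cal A}}\right)=X_{\bar{U}_{\cal A}}+f_2(Z_{\cal A})$ by a telescoping reconstruction: because every link internal to an atomic subnetwork is present (the note following the definition of atomic subnetwork), each received signal $Y_i$, $i\in{\cal A}$, is an invertible combination of a transmit signal already in $U_{\cal A}$ and the single adjacent transmit signal in $\bar{U}_{\cal A}$, so the latter can be solved for up to an additive noise term; chaining these solutions across ${\cal A}$ expresses all of $X_{\bar{U}_{\cal A}}$ as a function of $Y_{\cal A}$, $X_{U_{\cal A}}$, and $Z_{\cal A}$. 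Lemma~\ref{lem:dofouterbound} then yields $\eta \leq |{\cal A}|$, matching the TDMA value.

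The main obstacle I anticipate is the bookkeeping in this last step. Because an irreducible $M=1$ assignment places each message at transmitter $i$ or $i-1$, there are several local patterns for how messages sit on the transmitters of an atomic subnetwork, and for each pattern the alternating set ${\cal A}$, the induced set $U_{\cal A}$ of transmitters exclusively serving ${\cal A}$, and the direction in which the reconstruction chain is run must be chosen so that the reconstruction never requires an erased link and $|{\cal A}|$ is exactly the TDMA DoF and not one larger. Verifying that such a choice always exists, uniformly over the finitely many local patterns and over the parity of the subnetwork size, is the delicate part; everything else is the routine averaging over realizations and the appeal to Theorem~\ref{thm:tdma}.
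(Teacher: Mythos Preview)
Your proposal is correct and follows essentially the same route as the paper: reduce to a per-realization statement, decompose into atomic subnetworks, and for each atomic subnetwork apply Lemma~\ref{lem:dofouterbound} with an alternating set ${\cal A}$ of receivers whose size matches the TDMA value $\left\lfloor\frac{n+1}{2}\right\rfloor$, using the chain reconstruction of the missing transmit signals. The paper organizes the ``bookkeeping'' you anticipate by the string type $S^{(1)},S^{(2)},S^{(3)},S^{(4)}$ of the assignment, choosing ${\cal A}$ to start at the first or second index of the subnetwork depending on the pattern, and for $S^{(4)}$ locally modifying ${\cal A}$ near the transmitter carrying two messages --- exactly the case analysis you flagged as the delicate part.
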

\begin{proof}
In order to prove the statement, we need to show that $\tau_p(M=1) \leq \tau_p^{(TDMA)}$; we do so by using Lemma~\ref{lem:dofouterbound} to show that for any irreducible message assignment strategy satisfying the cell association constraint, and any network realization, the asymptotic per user DoF is given by that achieved through the optimal TDMA scheme.

Consider message assignment strategies defined by strings having one of the forms $S^{(1)}=(1)$, $S^{(2)}=\left(2,1,1,\ldots,1,0\right)$, and $S^{(3)}=\left(1,1,\ldots,1,2,0\right)$. We view each network realization as a series of atomic subnetworks, and show that for each atomic subnetwork, the sum DoF is achieved by the optimal TDMA scheme. For an atomic subnetwork consisting of a number of users $n$, we note that $\left\lfloor\frac{n+1}{2}\right\rfloor$ users are active in the optimal TDMA scheme; we now show in this case using Lemma~\ref{lem:dofouterbound} that the sum DoF for users in the subnetwork is bounded by $\left\lfloor\frac{n+1}{2}\right\rfloor$. Let the users in the atomic subnetwork have the indices $\{i,i+1,\ldots,i+n-1\}$, then we use Lemma~\ref{lem:dofouterbound} with the set ${\cal A}=\left\{i+2j: j\in\left\{0,1,2,\ldots,\left\lfloor\frac{n-1}{2}\right\rfloor\right\}\right\}$, except the cases of message assignment strategies defined by strings having one of the forms $S^{(1)}=(1)$ and $S^{(3)}=\left(1,1,\ldots,1,2,0\right)$ with an even number of ones, where we use the set ${\cal A}=\left\{i+1+2j: j\in\left\{0,1,2,\ldots,\frac{n-2}{2}\right\}\right\}$. We now note that each transmitter that carries a message for a user in the atomic subnetwork and has an index in $\bar{U}_{\cal A}$, is connected to a receiver in ${\cal A}$, and this receiver is connected to one more transmitter with an index in $U_{\cal A}$, and hence, the missing transmit signals $X_{\bar{U}_{\cal A}}$ can be recovered from $Y_{\cal A}-Z_{\cal A}$ and $X_{U_{\cal A}}$. The condition in the statement of Lemma~\ref{lem:dofouterbound} is then satisfied; allowing us to prove that the sum DoF for users in the atomic subnetwork is upper bounded by $|{\cal A}|=\left\lfloor\frac{n+1}{2}\right\rfloor$.

The proof is similar for message assignment strategies defined by strings that have the form $S^{(4)}=\left\{1,1,\ldots,1,2,1,1,\ldots,1,0\right\}$. However, there is a difference in selecting the set ${\cal A}$ for atomic subnetworks consisting of users with indices $\{i,i+1,\ldots,i+x,i+x+1,\ldots,i+n-1\}$, where $1 \leq x \leq n-2$, and messages $W_{i+x}$ and $W_{i+x+1}$ are both available at transmitter $i+x$. In this case, we apply Lemma~\ref{lem:dofouterbound} with the set ${\cal A}$ defined as above, but including indices $\{i+x,i+x+1\}$ and excluding indices $\{i+x-1,i+x+2\}$. It can be seen that the condition in Lemma~\ref{lem:dofouterbound} will be satisfied in this case, and the proved upper bound on the sum DoF for each atomic subnetwork, is achievable through TDMA.
\end{proof}
\begin{figure}[htb]
\centering
\includegraphics[width=1\columnwidth]{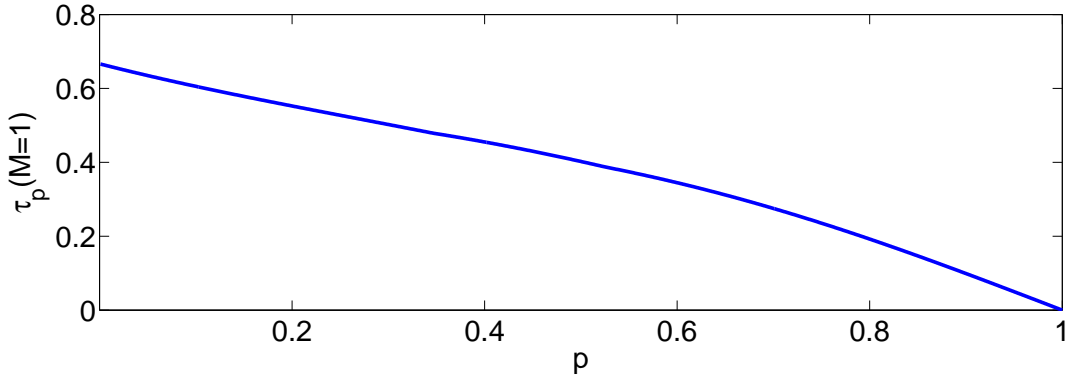}
\caption{The average per user DoF for the cell association problem}
\label{fig:monemax}
\end{figure} 

In Figure~\ref{fig:monemax}, we plot $\tau_p(M=1)$ at each value of $p$. The result of Theorem~\ref{thm:mone} implies that the message assignment strategies considered in Lemmas~\ref{lem:highp},~\ref{lem:lowp},~\ref{lem:middlep} are optimal at high, low, and middle values of the erasure probability $p$, respectively. We note that in densely connected networks at a low probability of erasure, the \emph{interference-aware} message assignment strategy in Figure~\ref{fig:lowp} is optimal; through this assignment, the maximum number of interference free communication links can be created for the case of no erasures. On the other hand, the linear nature of the channel connectivity does not affect the choice of optimal message assignment at high probability of erasure. As the effect of interference diminishes at high probability of erasure, assigning each message to a unique transmitter, as in the strategy in Figure~\ref{fig:highp}, becomes the only criterion of optimality. At middle values of $p$, the message assignment strategy in Figure~\ref{fig:middlep} is optimal; in this assignment, the network is split into four user subnetworks. In the first subnetwork, the assignment is optimal as the maximum number of interference free communication links can be created for the two events where there is an atomic subnetwork consisting of users $\{1,2,3\}$ or users $\{2,3,4\}$.

\section{Coordinated Multi-Point Transmission}\label{sec:comp}
We have shown that there is no message assignment strategy for the cell association problem that is optimal for all values of $p$. We first show in this section that this statement is true even for the case where each message can be available at more than one transmitter ($M>1$). Recall that for a given value of $M$, we say that a message assignment strategy is universally optimal if it can be used to achieve $\tau_p(M)$ for all values of $p$.
\begin{thm}\label{thm:comp}
For any value of the cooperation constraint $M \in {\bf Z}^+$, there does not exist a universally optimal message assignment strategy.
\end{thm}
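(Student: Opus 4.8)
The plan is to exhibit two erasure probabilities at which the sets of optimal message assignment strategies are disjoint; since a universally optimal strategy would have to lie in both sets, none can exist. For $M=1$ the claim is already in hand: Theorem~\ref{thm:mone} gives $\tau_p(M=1)=\max\{\tau_p^{(1)},\tau_p^{(2)},\tau_p^{(3)}\}$, and because $\tau_p^{(2)}$ is the strict maximum near $p=0$ while $\tau_p^{(1)}$ is the strict maximum near $p=1$, and the per user DoF of any fixed strategy is an analytic function of $p$, no single strategy can coincide with this upper envelope on all of $[0,1]$. I would therefore focus on $M\ge 2$ and compare the two extreme regimes $p=0$ and $p\to 1^-$.

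For $p=0$ the results of \cite{ElGamal-Annapureddy-Veeravalli-arXiv12} give $\tau_0(M)=\frac{2M}{2M+1}$, so any universally optimal strategy must be optimal for the static network. For the regime $p\to 1^-$ I would write $q=1-p$ and argue to leading order in $q$. Since receiver $i$ is connected only to transmitters $i-1$ and $i$, message $W_i$ can be enabled only through a non-erased link from a transmitter in $\mathcal T_i\cap\{i-1,i\}$, so the probability that receiver $i$ can be served is at most $1-(1-q)^2=2q-q^2$; summing over users gives $\tau_p(M)\le 2q+o(q)$ for every strategy. This bound is attained to leading order by the both-neighbors assignment $\mathcal T_i=\{i-1,i\}$, because at leading order at most one direct link is present in any neighborhood, interference events are $O(q^2)$, and each enabled message is delivered without conflict. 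Hence $\tau_p(M)=2q+o(q)$, and a strategy can attain the leading coefficient $2$ only if $|\mathcal T_i\cap\{i-1,i\}|=2$ for a fraction $1-o(1)$ of the users.

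Combining the two regimes, a universally optimal strategy would have to (i) achieve $\frac{2M}{2M+1}$ at $p=0$ and (ii) assign both adjacent transmitters to almost every receiver. The main obstacle, and the crux of the proof, is to show these requirements are incompatible, i.e.\ that any assignment with $\{i-1,i\}\subseteq\mathcal T_i$ for a $1-o(1)$ fraction of users is strictly suboptimal at $p=0$. The key structural fact I would exploit is that interference here is directional: transmitter $j$ interferes only at receiver $j+1$, so the forward interference created by $W_i$ at receiver $i+1$ can be zero-forced only by a transmitter of index $\ge i$ that also carries $W_i$, never by the backward transmitter $i-1$. Thus including $i-1$ in $\mathcal T_i$ spends one cooperation slot that cannot help cancel forward interference, leaving an effective forward cooperation order of at most $M-1$; I would make this rigorous by adapting the converse of Lemma~\ref{lem:dofouterbound} to such assignments to obtain the $p=0$ bound $\frac{2(M-1)}{2(M-1)+1}<\frac{2M}{2M+1}$. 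The case $M=2$ already displays the mechanism cleanly: with $\mathcal T_i=\{i-1,i\}$ the forward interference of $W_i$ at receiver $i+1$ is uncancellable, so no two adjacent receivers can be served and only per user DoF $\frac12<\frac45$ is achievable. Once this incompatibility is established, no strategy can be simultaneously optimal at $p=0$ and as $p\to 1^-$, which proves the claim.
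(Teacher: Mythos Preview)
Your overall approach mirrors the paper's: handle $M=1$ via Theorem~\ref{thm:mone}, and for $M\ge 2$ argue that optimality as $p\to 1$ forces $\{i-1,i\}\subseteq\mathcal T_i$ for almost all $i$, then use Lemma~\ref{lem:dofouterbound} to show any such assignment falls strictly below $\tfrac{2M}{2M+1}$ at $p=0$. The paper does exactly this, applying Lemma~\ref{lem:dofouterbound} with $\bar{\mathcal A}=\{(2M-1)(j-1)+M:j\in\mathbb Z^+\}$ to obtain the same bound $\tfrac{2M-2}{2M-1}$ that you propose.

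Your $M=2$ illustration, however, contains a concrete error. With $\mathcal T_i=\{i-1,i\}$ it is \emph{not} true that ``no two adjacent receivers can be served,'' and the achievable per user DoF at $p=0$ is $\tfrac{2}{3}$, not $\tfrac{1}{2}$. For instance, send $W_i$ from $X_{i-1}$ and $W_{i+1}$ from $X_{i+1}$ with $X_i$ silent: then $Y_i$ sees only $X_{i-1}$ and $Y_{i+1}$ sees only $X_{i+1}$, so both are served interference-free. More generally, the scheme of Theorem~\ref{thm:mtwoic} with this very assignment delivers two out of every three messages at $p=0$, matching the upper bound $\tfrac{2M-2}{2M-1}=\tfrac{2}{3}$. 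The flaw in your reasoning is the inference from ``forward interference of $W_i$ at $Y_{i+1}$ is uncancellable'' to ``adjacent receivers cannot both be active'': one simply avoids creating forward interference by routing $W_i$ through $X_{i-1}$ instead of $X_i$. This does not break your main argument, since $\tfrac{2}{3}<\tfrac{4}{5}$ still yields the required strict gap, but the mechanism you sketch for $M=2$ is not the correct one and would mislead a reader about why the bound holds.
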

\begin{proof}
The proof follows from Theorem~\ref{thm:mone} for the case where $M=1$. We show that for any value of $M>1$,  any message assignment strategy that enables the achievability of $\tau_p(M)$ at high probabilities of erasure, is not optimal for the case of no erasures, i.e., cannot be used to achieve $\tau_p(M)$ for $p=0$. For any message assignment strategy, consider the value of $\lim_{p \rightarrow 1} \frac{\tau_p(M)}{1-p}$ and note that this value equals the average number of transmitters in a transmit set that can be connected to the designated receiver. More precisely,
\begin{equation}\label{eq:cond}
\lim_{p \rightarrow 1} \frac{\tau_p(M)}{1-p}=\frac{\sum_{i=1}^{K}|{\cal T}_i \cap \{i-1,i\}|}{K},
\end{equation}
where ${\cal T}_i$ in~\eqref{eq:cond} corresponds to an optimal message assignment strategy at high probabilities of erasure. It follows that there exists a value $0 < \bar{p} < 1$ such that for any message assignment strategy that enables the achievability of $\tau_p(M)$ for $p \geq \bar{p}$, almost all messages are assigned to the two transmitters that can be connected to the designated receiver, i.e., if we let $S_K=\left\{i: {\cal T}_{i,K} = \left\{i-1,i\right\}\right\}$, then $\lim_{K \rightarrow \infty} \frac{|S_K|}{K} = 1$. 

We recall from~\cite{ElGamal-Annapureddy-Veeravalli-arXiv12} that for the case of no erasures, the average per user DoF equals $\frac{2M}{2M+1}$. We also note that following the same footsteps as in the proof of~\cite[Theorem $7$]{ElGamal-Annapureddy-Veeravalli-arXiv12}, we can show that for any message assignment strategy such that $\lim_{K \rightarrow \infty} \frac{|S_K|}{K} = 1$, the per user DoF for the case of no erasures is upper bounded by $\frac{2M-2}{2M-1}$; we do so by using Lemma~\ref{lem:dofouterbound} for each $K-$user channel with the set ${\cal A}$ defined such that the complement set $\bar{\cal A}=\{i:i\in[K], i=(2M-1)(j-1)+M, j\in{\bf Z}^+\}$.
\end{proof}

The condition of optimality identified in the proof of Theorem~\ref{thm:comp} for message assignment strategies at high probabilities of erasure suggest a new role for cooperation in dynamic interference networks. The availability of a message at more than one transmitter may not only be used to cancel its interference at other receivers, but also to increase the chances of connecting the message to its designated receiver, i.e., to maximize \emph{coverage}. This new role leads to three effects at high erasure probability.  The achieved DoF in the considered linear interference network becomes larger than that of $K$ parallel channels, in particular, $\lim_{p \rightarrow 1} \frac{\tau_p(M>1)}{1-p} = 2$. Secondly, as the effect of interference diminishes at high probabilities of erasures, all messages can simply be assigned to the two transmitters that may be connected to their designated receiver, and a simple interference avoidance scheme can be used in each network realization, as we show below in the scheme of Theorem~\ref{thm:mtwoic}. It follows that channel state information is no longer needed at transmitters for interference management, and only information about the slow changes in the network topology is needed to achieve the optimal average DoF. Finally, unlike the optimal scheme of~\cite[Theorem $4$]{ElGamal-Annapureddy-Veeravalli-arXiv12} for the case of no erasures, where some transmitters are always inactive, achieving the optimal DoF at high probabilities of erasure requires all transmitters to be used in at least one network realization.

We now restrict our attention to the case where $M=2$. Here, each message can be available at two transmitters, and transmitted jointly by both of them. We first study in Theorems~\ref{thm:mtwoicaware} and~\ref{thm:mtwoic} two message assignment strategies that are optimal in the limits of $p \rightarrow 0$ and $p \rightarrow 1$, respectively, and derive closed form expressions for inner bounds on the average per user DoF $\tau_p(M=2)$ based on the considered strategies.

In~\cite{ElGamal-Annapureddy-Veeravalli-arXiv12}, the message assignment of Figure~\ref{fig:mtwojone} was shown to be DoF optimal for the case of no erasures ($p=0$). The network is split into subnetworks, each with five consecutive users. The last transmitter of each subnetwork is deactivated to eliminate inter-subnetwork interference. In the first subnetwork, message $W_3$ is not transmitted, and each other message is received without interference at its designated receiver. Note that the transmit beams for messages $W_1$ and $W_5$ contributing to the transmit signals $X_2$ and $X_3$, respectively, are designed to cancel the interference at receivers $Y_2$ and $Y_4$, respectively. An analogous scheme is used in each following subnetwork. The value of $\tau_p(M=2)$ is thus $\frac{4}{5}$ for the case where $p=0$. In order to prove the following result, we extend the message assignment of Figure~\ref{fig:mtwojone} to consider the possible presence of block erasures. 

\begin{thm}\label{thm:mtwoicaware}
For $M=2$, the following average per user DoF is achievable using a zero-forcing scheme,
\begin{equation}\label{eq:mtwoicaware}
\tau_p^{(\text{ZF})}(M=2) \geq \frac{1}{5}(1-p)\left(4+A \cdot p\right), 
\end{equation}
where
\begin{equation}
A=2p+\left(1-(1-p)^2+p(1-p)^3\right)\left(1+(1-p)^2\right),
\end{equation}
and
\begin{equation}
\lim_{p \rightarrow 0} \tau_p(2)=\frac{4}{5}.
\end{equation}
\end{thm}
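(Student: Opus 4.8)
The plan is to fix, for all $K$, the five-user block message assignment of Figure~\ref{fig:mtwojone} and to exhibit a single zero-forcing scheme, adapted to each erasure realization, whose average number of delivered messages per block equals $(1-p)(4+Ap)$. Since consecutive blocks interfere only through the last transmitter of one block reaching the first receiver of the next (in the base scheme this transmitter is silent), it suffices to analyze one block $\{1,2,3,4,5\}$, compute the expected number of messages it delivers, and divide by $5$.

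First I would recall the no-erasure scheme of~\cite{ElGamal-Annapureddy-Veeravalli-arXiv12}: the messages $W_1,W_2,W_4,W_5$ are delivered interference-free while $W_3$ and $X_5$ are inactive, with $W_1$ sent on $X_1$ and a canceling beam on $X_2$ (which also carries $W_2$) nulling its leakage at $Y_2$, and symmetrically $W_5$ sent on $X_4$ and a canceling beam on $X_3$ (which also carries $W_4$) nulling its leakage at $Y_4$. The key observation is that, message by message, delivery of each primary message succeeds exactly when a single direct link is present: $W_1$ iff $H_{1,1}\neq 0$, $W_2$ iff $H_{2,2}\neq 0$, $W_4$ iff $H_{4,3}\neq 0$, and $W_5$ iff $H_{5,4}\neq 0$. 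Indeed, whenever an active receiver would see interference from a primary message, the coefficient needed to build the canceling beam coincides with the very serving link that makes that receiver active (for $Y_2$ this is $H_{2,2}$, for $Y_4$ it is $H_{4,3}$), so the required beam is always feasible; and when a serving link is erased the corresponding receiver is simply inactive and needs no cancellation. This contributes $4(1-p)$ to the expected count, and at $p=0$ recovers exactly $4$ messages per block, i.e. $\tau_0^{(\text{ZF})}(2)=\frac{4}{5}$.

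Next I would augment the scheme to harvest the extra DoF captured by $A$. The gains arise from erasure events that internally decouple the block: when the links that force $W_3$ to be dropped or $X_5$ to be silenced are erased, the freed receiver/transmitter can be reactivated to deliver one additional message without creating uncancellable interference. I would enumerate the disjoint link configurations in which such a reactivation is possible, attach to each its probability as a monomial in $p$ and $(1-p)$, and verify in each case that the augmented beamforming remains feasible (every coefficient used to form a canceling beam is nonzero) and leakage-free (no active receiver sees an uncancelled interferer, including at the boundary with the next block). Summing these contributions should yield the correction $Ap(1-p)$ with $A=2p+\left(1-(1-p)^2+p(1-p)^3\right)\left(1+(1-p)^2\right)$, whose factors $1-(1-p)^2$ and $1+(1-p)^2$ reflect pairs of links that must respectively contain an erasure and be left unconstrained for an extra message to become deliverable; rather than guess the exact grouping in advance, I would read it off the completed enumeration. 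Adding the baseline gives the claimed $\frac{1}{5}(1-p)(4+Ap)$.

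The hard part is precisely this second step: the case analysis must be exhaustive, the events must be made mutually exclusive so that no extra delivery is double counted, and for every configuration one must confirm that reactivating a transmitter or receiver does not resurrect interference the fixed assignment cannot cancel, especially across the $X_5\!\to\!Y_6$ boundary coupling. Finally, for the limit, the achievable bound $\frac{1}{5}(1-p)(4+Ap)$ is continuous and equals $\frac{4}{5}$ at $p=0$, giving $\liminf_{p\to 0}\tau_p(2)\geq \frac{4}{5}$; combined with the known converse $\tau_0(2)=\frac{4}{5}$ for the no-erasure network from~\cite{ElGamal-Annapureddy-Veeravalli-arXiv12}, this pins $\lim_{p\to 0}\tau_p(2)=\frac{4}{5}$.
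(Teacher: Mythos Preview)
Your plan is essentially the paper's proof: fix the five-user block assignment, collect the baseline $4(1-p)$ from the four primary serving links $H_{1,1},H_{2,2},H_{4,3},H_{5,4}$, then enumerate the erasure configurations that permit an additional delivery and sum to obtain the $Ap(1-p)$ correction; the limit argument at $p=0$ also matches.

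One concrete correction before you carry out the enumeration: the extra term never comes from reactivating $X_5$. In the assignment the paper actually uses (Figure~\ref{fig:mtwojonenew}, a slight extension of Figure~\ref{fig:mtwojone} with ${\cal T}_1={\cal T}_2=\{1,2\}$ and ${\cal T}_3={\cal T}_4={\cal T}_5=\{3,4\}$), the transmitter with index $\equiv 0 \pmod 5$ carries no message at all and is silent in every realization; that is precisely what keeps successive blocks decoupled, so your worry about the $X_5\!\to\!Y_6$ boundary disappears. The $2p$ summand in $A$ comes instead from the alternate delivery of $W_2$ through $X_1$ (when $H_{1,1}=H_{2,2}=0$ and $H_{2,1}\neq 0$) and symmetrically of $W_4$ through $X_4$, each occurring with probability $p^2(1-p)$. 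The remaining factor $\bigl(1-(1-p)^2+p(1-p)^3\bigr)\bigl(1+(1-p)^2\bigr)$ is exactly the probability that $W_3$ can be delivered through $X_3$ without conflicting with the priority deliveries of $W_2$ on its left and $W_4,W_5$ on its right. With the assignment pinned down this way, your enumeration program yields the stated $A$ directly.
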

\begin{IEEEproof}
We know from~\cite{ElGamal-Annapureddy-Veeravalli-arXiv12} that $\lim_{p \rightarrow 0} \tau_p(2)=\frac{4}{5}$, and hence, it suffices to show that the inner bound in~\eqref{eq:mtwoicaware} is valid. For each $i \in [K]$, message $W_i$ is assigned as follows,

 \vspace{5 mm}
 ${\cal T}_{i}=
 \begin{cases}
 \{i-1,i\}, \quad &\text{ if } i \equiv 2 \text{ mod } 5, \text{ or } i \equiv 4 \text{ mod } 5,\\
 \{i-1,i-2\}, \quad &\text{ if } i \equiv 0 \text{ mod } 5,\\
 \{i,i+1\}, \quad &\text{ otherwise},
 \end{cases}$
 \vspace{5 mm}
 
 We illustrate this message assignment in Figure~\ref{fig:mtwojonenew}. We note that the transmit signals $\{X_i: i \equiv 0 \text{ mod } 5\}$ are inactive, and hence, we split the network into five user subnetworks with no interference between successive subnetworks. We explain the transmission scheme in the first subnetwork and note that a similar scheme applies to each following subnetwork. In the proposed transmission scheme, any receiver is either inactive or receives its desired message without interference, and any transmitter will not transmit more than one message for any network realization. It follows that $1$ DoF is achieved for each message that is transmitted.
 
Messages $W_1$, $W_2$, $W_4$, and $W_5$ are transmitted through $X_1$, $X_2$, $X_3$, and $X_4$, respectively, whenever the coefficients $H_{1,1}\neq 0$, $H_{2,2}\neq 0$, $H_{4,3}\neq 0$, and $H_{5,4}\neq 0$, respectively. Note that the transmit beam for message $W_1$ contributing to $X_2$ can be designed to cancel its interference at $Y_2$. Similarly, the interference caused by $W_5$ at $Y_4$ can be canceled through $X_3$. There is an extra case where $W_4$ should be delivered through $X_4$, if $H_{4,4}\neq 0$ and $H_{4,3} = 0$ and $H_{5,4}=0$. This extra case takes place with probability $p^2(1-p)$. Similarly, there is an extra case for delivering $W_2$ through $X_1$ if $H_{2,1} \neq 0$ and $H_{1,1}=0$ and $H_{2,2}=0$. It follows that $(1-p)$ DoF is achieved for each of $W_1$ and $W_5$ and $(1-p)(1+p^2)$ DoF is achieved for each of $W_2$ and $W_4$, and hence, $\tau_p(2) \geq \frac{4}{5} (1-p)+\frac{2}{5}p^2(1-p)$. 

Now, we consider possibilities for sending message $W_3$ without violating the above four priority links. First, message $W_3$ can only be delivered through $X_3$ if it is either the case that one of $H_{2,2}$ and $H_{3,2}$ is erased, or it is the case that $H_{1,1}=0$ and all of $H_{2,1}, H_{2,2}, H_{3,2}$ exist. Otherwise, $W_2$ will be transmitted through $X_2$ and will cause interference at $Y_3$ that cannot be eliminated. This event will take place with probability $\left(1-(1-p)^2+p(1-p)^3\right)$. Further, $H_{3,3}$ has to exist, which has probability $(1-p)$. Also, it is either the case that $H_{4,3} = 0$ so that $W_4$ cannot be delivered through $X_3$ and $W_3$ would not cause interference at $Y_4$, or it is the case that $H_{4,3} \neq 0, H_{4,4} \neq 0$ and $H_{5,4} = 0$, and hence $W_3$ can be delivered through $X_3$ and its interference is canceled at $Y_4$, and $W_4$ is delivered through $X_4$, while $W_5$ cannot be delivered through $X_4$. The first of these cases takes place with probability $p$, and the second takes place with probability $p(1-p)^2$.

Summing up the probabilities of the above transmissions, we get the inner bound in~\eqref{eq:mtwoicaware}.
\end{IEEEproof}

\begin{figure}
  \centering
  
\subfloat[]{\label{fig:mtwojone}\includegraphics[height=0.26\textwidth]{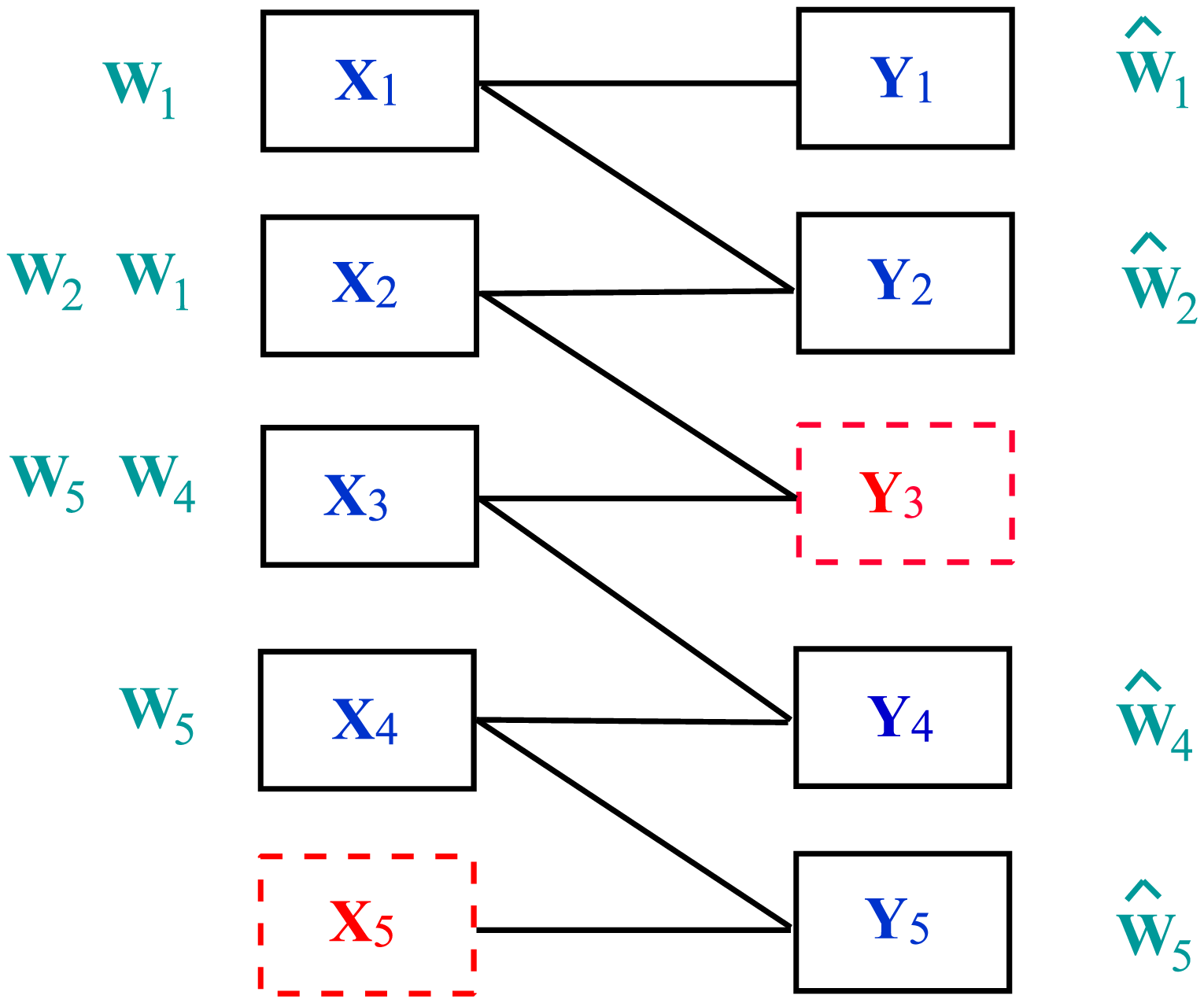}}                
\quad\quad\quad\quad\subfloat[]{\label{fig:mtwojonenew}\includegraphics[width=0.3\textwidth]{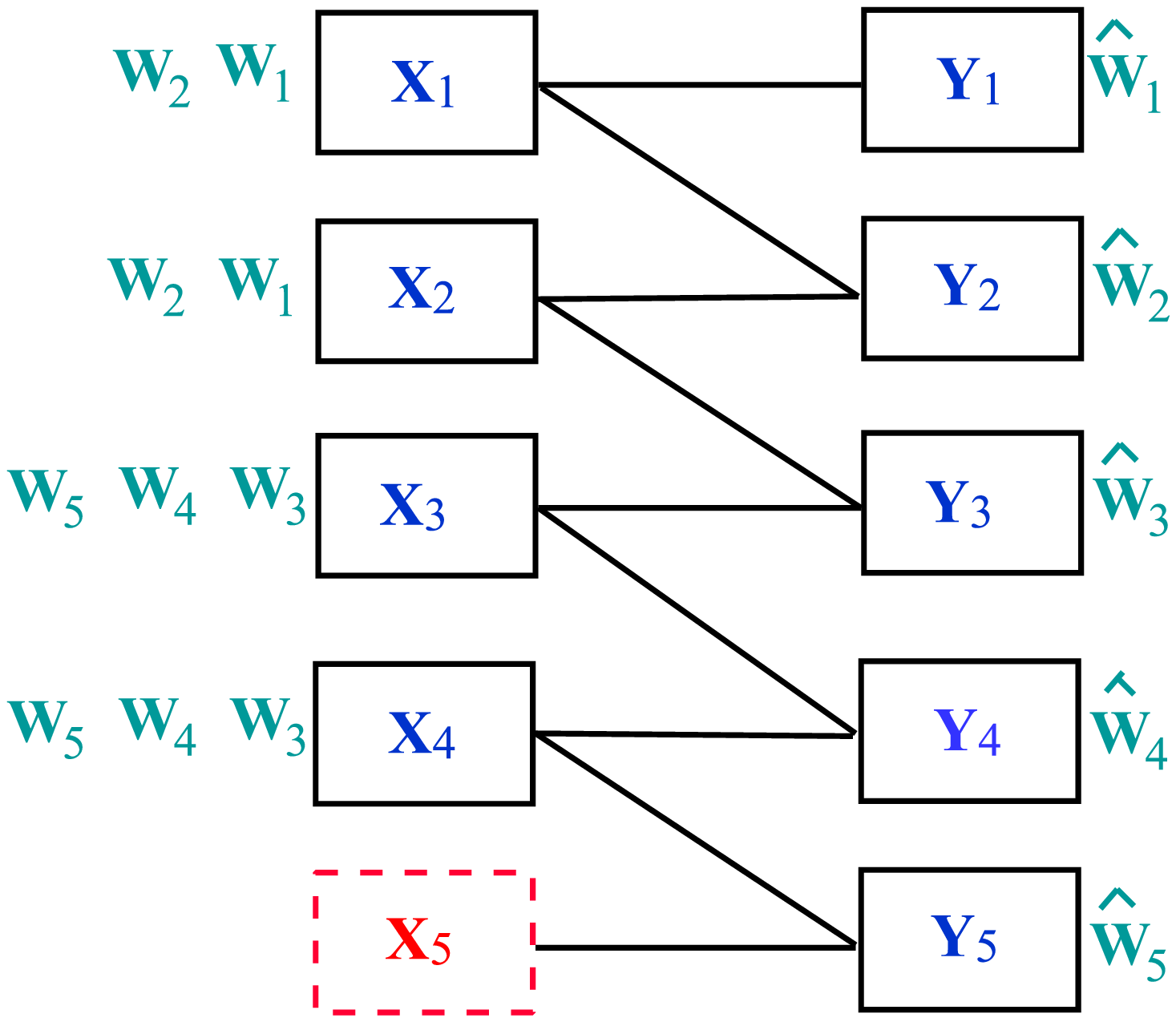}}
  \caption{The message assignment in ($a$) is optimal for a linear network with no erasures ($p=0$). We extend this message assignment in ($b$) to consider non-zero erasure probabilities. In both figures, the red dashed boxes correspond to inactive signals.}
  \label{fig:mtwoicaware}
\end{figure}

Although the scheme of Theorem~\ref{thm:mtwoicaware} is optimal for the case of no erasures ($p=0$), we know from Theorem~\ref{thm:comp} that better schemes exist at high erasure probabilities. Since in each five user subnetwork in the scheme of Theorem~\ref{thm:mtwoicaware}, only two users have their messages assigned to the two transmitters that can be connected to the destined receiver, and four users have only one of these transmitters carrying their messages, we get the asymptotic limit of  $\frac{7}{5}$ for the achieved average per user DoF normalized by $(1-p)$ as $p \rightarrow 1$. This leads us to consider an alternative message assignment where the two transmitters carrying each message $i$ are the two transmitters $\left\{i-1,i\right\}$ that can be connected to its designated receiver. Such assignment would lead the ratio $\frac{\tau_p(2)}{1-p} \rightarrow 2$ as $p \rightarrow 1$. In the following theorem, we analyze a transmission scheme based on this assignment.

\begin{thm}\label{thm:mtwoic}
For $M=2$, the following average per user DoF is achievable using a zero-forcing scheme,
\begin{equation}\label{eq:mtwoic}
\tau_p^{(\text{ZF})}(M=2) \geq \frac{1}{3} \left(1-p\right)\left(1 + \left(1-p\right)^3 + B \cdot p\right),
\end{equation}
where
\begin{eqnarray}
B &=& 3+\left(1+\left(1-p\right)^3\right)\left(1-\left(1-p\right)^2+p\left(1-p\right)^3\right)\nonumber\\ &+& p \left(1+(1-p)^2 \right),
\end{eqnarray}
and
\begin{equation}\label{eq:mtwoiclimit}
\lim_{p \rightarrow 1} \frac{\tau_p(2)}{1-p} = 2.
\end{equation}
\end{thm}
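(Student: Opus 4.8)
The plan is to establish the two claims separately: the achievable inner bound \eqref{eq:mtwoic}, and the limit \eqref{eq:mtwoiclimit}. Since \eqref{eq:mtwoic} is an achievability statement, it suffices to fix the message assignment ${\cal T}_i=\{i-1,i\}$ (with ${\cal T}_1=\{1\}$) announced just before the theorem, to specify an interference-avoidance zero-forcing scheme for each realization, and to lower bound the expected sum DoF. As in the interference-avoidance schemes used throughout the paper, every served receiver is allocated exactly one degree of freedom and sees no residual interference, so the average per user DoF equals $\frac{1}{K}$ times the expected number of served users. I would compute this expectation over a repeating block of three consecutive users, which is what produces the leading factor $\frac{1}{3}$.

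The mechanics rely on the fact that, with $W_i$ available only at the two transmitters $i-1$ and $i$ that can reach receiver $i$, the cooperative beam for $W_i$ carries two scalar coefficients. Its footprint reaches $Y_{i-1}$ only through $X_{i-1}$, reaches $Y_{i+1}$ only through $X_i$, and reaches the intended $Y_i$ through both, so the two coefficients give exactly enough freedom to deliver $W_i$ to $Y_i$ while zero-forcing its interference at one chosen neighbor, but not at both at once. I would use this to set up a priority scheme analogous to those of Lemmas~\ref{lem:highp}--\ref{lem:middlep} and Theorem~\ref{thm:mtwoicaware}: within each three-user block certain users are scheduled with priority and delivered whenever the links their (possibly zero-forced) transmission requires all survive, accounting for the $p$-independent part $\frac{1}{3}(1-p)\left(1+(1-p)^3\right)$ of the bound, while the remaining users are delivered opportunistically whenever the surviving-link pattern both connects them to their receiver and permits canceling the relevant interference. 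The summands assembled in $B$ then correspond, term by term, to the disjoint erasure events enabling these opportunistic and boundary deliveries (single-transmitter deliveries when one carrying link is erased, events where a neighbor's direct link is erased so that no zero-forcing is needed, and atomic-subnetwork reconfigurations that free an additional message), in direct parallel with the $p^2(1-p)$ and atomic-subnetwork terms appearing in the proof of Theorem~\ref{thm:mtwoicaware}.

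For the limit \eqref{eq:mtwoiclimit} I would argue by sandwiching. The lower bound follows by dividing \eqref{eq:mtwoic} by $(1-p)$ and letting $p\to1$: every factor $(1-p)$ inside the bracket vanishes, $B\to5$, and $\frac{1}{3}\left(1+(1-p)^3+Bp\right)\to\frac{1}{3}(1+5)=2$, so $\liminf_{p\to1}\frac{\tau_p(2)}{1-p}\ge2$. The matching upper bound is exactly the converse already recorded in the discussion around Theorem~\ref{thm:comp}: by \eqref{eq:cond}, for any assignment the normalized limit equals the average value of $|{\cal T}_i\cap\{i-1,i\}|$, and since only transmitters $i-1$ and $i$ can be connected to receiver $i$, this average is at most $2$. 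Combining the two bounds yields $\lim_{p\to1}\frac{\tau_p(2)}{1-p}=2$.

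I expect the main obstacle to be the probability bookkeeping in the inner bound rather than its conceptual structure. The zero-forcing feasibility of an opportunistic delivery depends jointly on the erasure pattern of several adjacent links and on which neighboring messages are themselves being delivered, so the delicate part is to partition the probability space into disjoint events that are genuinely feasible for the chosen scheme, avoid double-counting a user across two events, and reassemble exactly into the claimed polynomial $B$. The boundary reconfigurations at the edges of each three-user block, analogous to the atomic-subnetwork modification in the proof of Lemma~\ref{lem:middlep}, are where the counting requires the most care.
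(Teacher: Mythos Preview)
Your proposal is essentially correct and follows the same route as the paper: the assignment ${\cal T}_i=\{i-1,i\}$, an interference-avoidance scheme with period three, and the limit via sandwiching. Two small clarifications are worth noting. First, the paper's scheme is even simpler than the cooperative-beam picture you describe: each message $W_i$ is sent from exactly \emph{one} of transmitters $i-1$ or $i$, with no genuine beamforming. This is equivalent to your description, since nulling $W_i$ at $Y_{i-1}$ forces the $X_{i-1}$-coefficient to zero and nulling at $Y_{i+1}$ forces the $X_i$-coefficient to zero; but framing it as single-transmitter delivery makes the case analysis cleaner. Second, the three-user block is realized as a strict priority order rather than a ``priority users plus atomic-subnetwork reconfiguration'' structure: users $i\equiv 0\pmod 3$ are top priority and contribute $d_0=(1-p)(1+p)$, users $i\equiv 1$ are next with $d_1=2p(1-p)+(1-p)^4$, and users $i\equiv 2$ are last with $d_2$. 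There is no reconfiguration step analogous to Lemma~\ref{lem:middlep}; the complexity of $B$ comes entirely from case analysis on which links survive for the lowest-priority class. For the limit upper bound, the paper uses the direct observation that each message's DoF is at most $1-p^2$, which is equivalent to your argument via~\eqref{eq:cond}.
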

\begin{proof}
For any message assignment, no message can be transmitted if the links from both transmitters carrying the message to its designated receiver are absent, and hence, the average DoF achieved for each message is at most $1-p^2$. It follows that $\lim_{p\rightarrow 1} \frac{\tau_p(2)}{1-p} \leq \lim_{p \rightarrow 1} \frac{(1-p)(1+p)}{1-p} = 2$. We then need only to prove that the inner bound in~\eqref{eq:mtwoic} is valid. In the achieving scheme, each message is assigned to the two transmitters that may be connected to its designated receiver, i.e., ${\cal T}_i = \{i-1,i\}, \forall i\in[K]$. Also, in each network realization, each transmitter will transmit at most one message and any transmitted message will be received at its designated receiver without interference. It follows that 1 DoF is achieved for any message that is transmitted, and hence, the probability of transmission is the same as the average DoF achieved for each message.

Each message $W_i$ such that $i \equiv 0 \text{ mod } 3$ is transmitted through $X_{i-1}$ whenever $H_{i,i-1} \neq 0$, and is transmitted through $X_i$ whenever $H_{i,i-1}=0$ and $H_{i,i} \neq 0$. It follows that $d_0$ DoF is achieved for each of these messages, where,
\begin{equation}
d_0 = (1-p)(1+p).
\end{equation}

We now consider messages $W_i$ such that $i \equiv 1 \text{ mod } 3$. Any such message is transmitted through $X_{i-1}$ whenever $H_{i,i-1} \neq 0$ and $H_{i-1,i-1} = 0$. We note that whenever the channel coefficient $H_{i-1,i-1} \neq 0$, message $W_i$ cannot be transmitted through $X_{i-1}$ as the transmission of $W_i$ through $X_{i-1}$ in this case will prevent $W_{i-1}$ from being transmitted due to either interference at $Y_{i-1}$ or sharing the transmitter $X_{i-1}$. It follows that $d_1^{(1)} = p(1-p)$ DoF is achieved for transmission of $W_i$ through $X_{i-1}$. Also, message $W_i$ is transmitted through $X_i$ whenever it is not transmitted through $X_{i-1}$ and $H_{i,i} \neq 0$ and either $H_{i,i-1}=0$ or message $W_{i-1}$ is transmitted through $X_{i-2}$. More precisely, $W_i$ is transmitted through $X_i$ whenever all the following is true:
$H_{i,i} \neq 0$, and either $H_{i,i-1} =0$ or it is the case that $H_{i,i-1}\neq 0$ and $H_{i-1,i-1} \neq 0$ and $H_{i-1,i-2} \neq 0$.
It follows that $d_1^{(2)}=p\left(1-p\right)+\left(1-p\right)^4$ is achieved for transmission of $W_i$ through $X_{i}$, and hence, $d_1$ DoF is achieved for each message $W_i$ such that $i \equiv 1 \text{ mod } 3$, where,
\begin{equation}
d_1= d_1^{(1)} + d_1^{(2)} = 2p\left(1-p\right) + \left(1-p\right)^4.
\end{equation}

We now consider messages $W_i$ such that $i \equiv 2 \text{ mod } 3$. Any such message is transmitted through $X_{i-1}$ whenever all the following is true: 

\begin{itemize}
\item $H_{i,i-1} \neq 0$.
\item Either $H_{i-1,i-1}=0,$ or $W_{i-1}$ is not transmitted.
\item $W_{i+1}$ is not causing interference at $Y_i$.
\end{itemize} 

The first condition is satisfied with probability $(1-p)$. In order to compute the probability of satisfying the second condition, we note that $W_{i-1}$ is not transmitted for the case when $H_{i-1,i-1} \neq 0$ only if $W_{i-2}$ is transmitted through $X_{i-2}$ and causing interference at $Y_{i-1}$, i.e., only if $H_{i-2,i-3}=0$ and $H_{i-2,i-2} \neq 0$ and $H_{i-1,i-2} \neq 0$. It follows that the second condition is satisfied with probability $p + p(1-p)^3$. The third condition is not satisfied only if $H_{i,i} \neq 0$ and $H_{i+1,i} \neq 0$, and hence, will be satisfied with probability at least $1-\left(1-p\right)^2$. Moreover, even if $H_{i,i} \neq 0$ and $H_{i+1,i} \neq 0$, the third condition can be satisfied if message $W_{i+1}$ can be transmitted through $X_{i+1}$ without causing interference at $Y_{i+2}$, i.e., if $H_{i+1,i+1} \neq 0$ and $H_{i+2,i+1}=0$. It follows that the third condition will be satisfied with probability $1-(1-p)^2+p(1-p)^3$, and $d_2^{(1)}$ DoF is achieved by transmission of $W_i$ through $X_{i-1}$, where,
\begin{equation}
d_2^{(1)} = p\left(1-p\right)\left(1+\left(1-p\right)^3\right)\left(1-\left(1-p\right)^2+p\left(1-p\right)^3\right).
\end{equation}
 
Message $W_i$ such that $i \equiv 2 \text{ mod } 3$ is transmitted through $X_i$ whenever $H_{i,i} \neq 0$, and $H_{i+1,i} = 0$, and either $H_{i,i-1} = 0$ or $W_{i-1}$ is transmitted through $X_{i-2}$. It follows that $d_2^{(2)}$ DoF is achieved by transmission of $W_i$ through $X_i$, where,
\begin{eqnarray}
d_2^{(2)} &=& p\left(1-p\right)\left(p + d_1^{(1)} \left(1-p\right) \right)
\\&=& p^2\left(1-p\right) \left(1+(1-p)^2 \right),
\end{eqnarray}
and hence, $d_2= d_2^{(1)} + d_2^{(2)}$ DoF is achieved for each message $W_i$ such that $i \equiv 2 \text{ mod } 3$. We finally get,
\begin{equation}
\tau_p(2) \geq \frac{d_0 + d_1 + d_2}{3},
\end{equation}   
which is the same inequality as in~\eqref{eq:mtwoic}.
\end{proof}

\begin{figure}
  \centering
\subfloat[]{\label{fig:mtwodof}\includegraphics[height=0.25\textwidth]{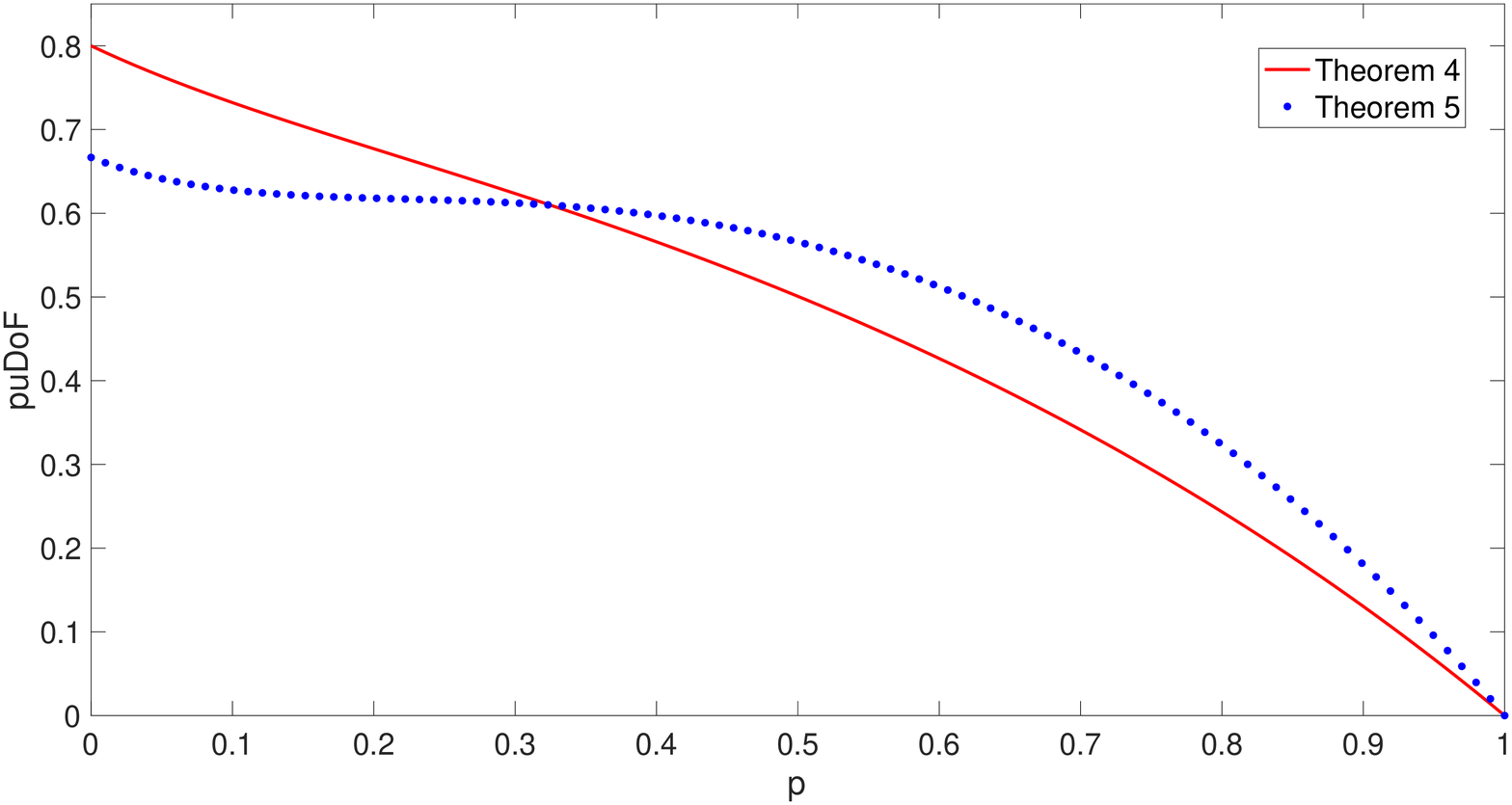}}                
\quad\quad\quad\quad\subfloat[]{\label{fig:mtwonormdof}\includegraphics[width=0.52\textwidth]{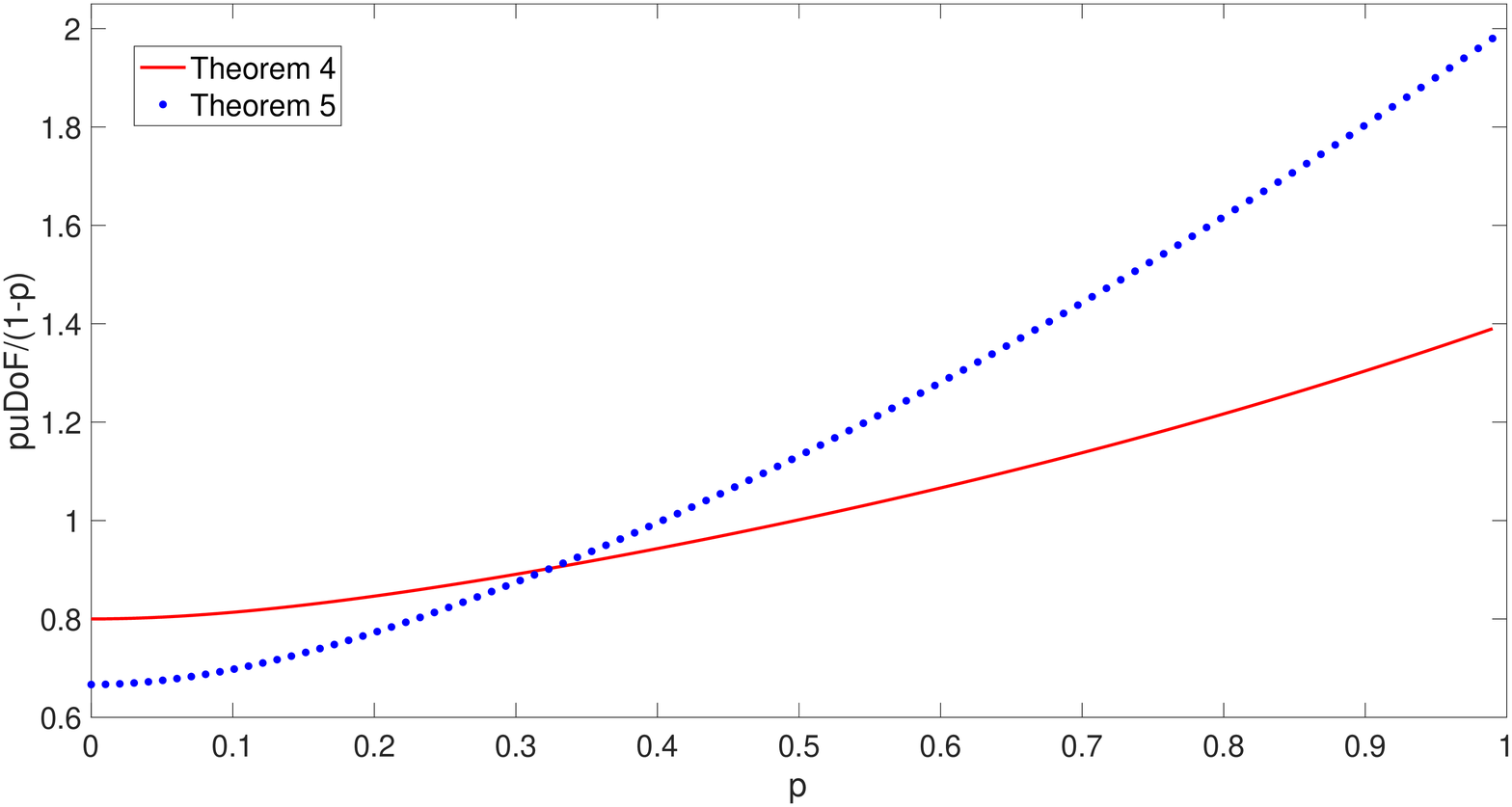}}
  \caption{Achieved inner bounds in Theorems $4$ and $5$. In $(a)$ we plot the achieved per user DoF. In $(b)$, we plot the achieved per user DoF normalized by $(1-p)$.}
  \label{fig:mtwo}
\end{figure}
We plot the inner bounds of~\eqref{eq:mtwoicaware} and~\eqref{eq:mtwoic} in Figure~\ref{fig:mtwo}. We note that below a threshold erasure probability $p \approx 0.34$, the scheme of Theorem~\ref{thm:mtwoicaware} is better, and hence  is proposed to be used in this case. For higher probabilities of erasure, the scheme of Theorem~\ref{thm:mtwoic} should be used. 
It is worth mentioning that we also studied a scheme based on the message assignment ${\cal T}_i = \{i,i+1\}, \forall i\in[K-1]$, that is introduced in~\cite{Lapidoth-Shamai-Wigger-ISIT07}. However, we did not include it here as it does not increase the maximum of the bounds derived in~\eqref{eq:mtwoicaware} and~\eqref{eq:mtwoic} at any value of $p$. Finally, although the considered channel model allows for using the interference alignment scheme of~\cite{Cadambe-IA} over multiple channel realizations (symbol extensions), all the proposed schemes require only coding over one channel realization because of the sparsity of the linear network.

\subsection{Optimal zero-forcing scheme with two transmitters per message ($M=2$)}
\label{sec:zero}

We have presented above two schemes that rely on zero-forcing transmit beamforming, and achieve the limits for $\tau_p(2)$ as $p \rightarrow 0$ and $p \rightarrow 1$. Here, we present an algorithm that reduces the problem of characterizing $\tau_p^{(\text{ZF})}(2)$ to that of identifying the optimal message assignment, at every value of $p$. 




We present Algorithm $1$ below that takes as input an atomic subnetwork with user indices $[N]=\{1,2,\cdots,N\}$, and outputs the transmit signals $\{X_i,i\in[N]\}$ which employs zero-forcing transmit beamforming to maximize the DoF value for users within the atomic subnetwork. Note that the first receiver can be connected to a transmitter with a preceding index; we let the index of that transmitter be zero.

We define a set of binary variables $b_{i,j}, j \in \{i-2,~i-1,~i,~i+1\}$ for each message $W_{i}$ that are initialized to zero. For each message we determine the conditions under which a message can be sent and decoded at its desired receiver, such that no interference occurs, in a successive order starting from  $W_1$ to $W_N$. For each decision to transmit a message $W_{i}$ from transmitter $j$, the corresponding variable $b_{i,j}$ is set to one.

In the algorithm below, two cases are considered for sending message $W_{i}$ to its destination by using either transmitter $i$ or $i-1$.  In the following we are discussing and justifying optimal choices in both cases. Due to their position at the beginning of the atomic subnetwork, the first two users represent a special case and thus are considered separately in lines $4-21$.  

\textit{Case 1}: Starting at  line $23$ of the algorithm, we examine the conditions for sending $W_{i}$ from transmitter $i-1$. Clearly this is only possible if $W_{i}$ is available at transmitter $i-1$. i.e.  $(i-1)\in {\cal T}_{i}$. We first restrict our attention to the case when transmitter $i-1$ does not send $W_{{i-1}}$. i.e., $b_{i-1,i-1}=0$. To ensure that transmitter $i-1$ does not cause interference at receiver $i-1$, we have to consider the possibility when receiver $i-1$ is not able to decode its desired message anyway, i.e. $W_{i-1}$ is not sent from transmitter $i-2$. If these above conditions are satisfied, then we set $b_{i,i-1}=1$. Further, starting at line $26$, we consider the scenario where $W_i$ would cause interference at receiver $i-1$, but this interference can be canceled by transmitting $W_i$ from transmitter $i-2$. In this case, we have to make sure that $W_i$ is available at transmitter $i-2$, i.e., $(i-2)\in{\cal T}_i$. Also, we have to ensure that transmitting $W_i$ from transmitter $i-2$ would not cause interference at receiver $i-2$. This would happen when $W_{i-2}$ is not being delivered, i.e., $b_{i-2,i-2}=0$ and $b_{i-2,i-3}=0$.
The last scenario starts at line $29$ and considers the case when both $W_i$ and $W_{i-1}$ can be delivered through $X_{i-1}$, while making sure that the interference they cause at receivers $i-1$ and $i$, respectively, is canceled.
In this last case, $W_i$ is transmitted from transmitters $i-2$ and $i-1$ and $W_{i-1}$ is transmitted from transmitters $i$ and $i+1$. Note that for any irreducible message assignment, $W_i$ can be available at transmitter $i-2$ only if it is available at transmitter $i-1$; this is why we do not check for $(i-1) \in {\cal T}_i$ at line $29$.

\textit{Case 2}: In the second case message $W_{i}$ is sent from transmitter $i$ (lines 32-37). The trivial conditions for achieving this are that message $W_{i}$ is available at transmitter $i$, and $W_{i}$ is not being delivered through transmitter $i-1$. Here, we also have to ensure that receiver $i$ can decode message $W_{i}$ without any interference. That is if transmitter $i-1$ is not active. Hence, we set $b_{i,i}$ to $1$ if the above conditions are satisfied. Further, there is on more case where the interference from transmitter $i-1$ can be canceled whenever message $W_{i-1}$ is available at transmitter $i$ and receiver $i$ only experiences interference by $W_{i-1}$. Consequently, both $b_{i,i}$ and $b_{i-1,i}$ are set to $1$ in this last case.

\begin{algorithm}\label{alg:zf}
\caption{Algorithm leads to optimal zero-forcing DoF value within an atomic subnetwork.}
  \begin{algorithmic}[1]
    \label{alg:the_alg}
\For{i=1:N}
\\ Define $b_{i,i-2}=b_{i,i-1}=b_{i,i}=b_{i,i+1}=0$
\EndFor
\If{$H_{1,0} \neq 0~ \land~0 \in \mathcal{T}_{1}$}  
 \State {$b_{1,0}=1$}
\Else
 \State {$b_{1,1}=1$}
\EndIf
\If{$1\in \mathcal{T}_{2}~\land~b_{1,1}=0$}
\If{$0\in \mathcal{T}_{2}~\land~ H_{1,0} \neq 0$}
\State$b_{2,1}=1$;  $b_{2,0}=1$
\EndIf
\ElsIf{${0\in \mathcal{T}_{2}}~\land~ H_{1,0} \neq 0~\land~2\in \mathcal{T}_{1}$} 
\State$b_{2,1}=1$,
{$b_{2,0}=1$,
$b_{1,2}=1$},
$b_{1,1}=1$
\ElsIf{$2 \in \mathcal{T}_{2}$}  
\If{$b_{1,1}=0$}  
\State$b_{2,2}=1$
\ElsIf {$2\in \mathcal{T}_{1}$}
\State$b_{2,2}=1$; $b_{1,2}=1$
\EndIf
\EndIf

\For{i=3:N}

\If{$(i-1)\in \mathcal{T}_{i}~\land~b_{i-1,i-1}=0$}
\If {$b_{i-1,i-2}=0$} 
\State{$b_{i,i-1} = 1$}
\ElsIf
 {$(i-2)\in \mathcal{T}_{i}~\land~b_{i-2,i-2}=0\land~b_{i-2, i-3}=0$}
\State$b_{i,i-1}=1$;  $b_{i,i-2}=1$
\EndIf

\ElsIf{${(i-2)\in \mathcal{T}_{i}}~\land~i\in \mathcal{T}_{i-1}\land b_{i-2,i-3}=0 \land b_{i-2,i-2}=0$} 

\State$b_{i,i-1}=1$,
{$b_{i,i-2}=1$,
$b_{i-1,i}=1$},
$b_{i-1,i-1}=1$
\EndIf
\If {$i \in \mathcal{T}_{i}\land b_{i,i-1}=0\land b_{i-2,i-1}=0$}
\If {$b_{i-1,i-1}=0$}
\State$b_{i,i}=1$

\ElsIf {$i \in \mathcal{T}_{i-1}$}
\State$b_{i,i}=1$;
$b_{i-1,i}=1$
\EndIf
\EndIf
\EndFor
	
\algstore{myalg}
\end{algorithmic}
\end{algorithm}
\begin{algorithm}[H]
	\setstretch{1.13}
	\begin{algorithmic} [1]              
		\algrestore{myalg}	
		
\For{i=0:N-1}\\
Set $X_{i} = 0$\\
 Generate $X_{i,i}$ from $W_j$ using an optimal AWGN channel point-to-point code (see e.g.,~\cite{Cover-Thomas})
 \If{$i > 0$}
 \State{Generate $X_{i,i-1}$ from $W_j$ using an optimal AWGN channel point-to-point code}
 \If{$b_{i,i}=1$}
 \State{$X_i \gets X_i + X_{i,i}$}
 \EndIf
 \EndIf
 \If{$b_{i+1,i}=1$}
 \State{$X_i \gets X_i + X_{i,i+1}$}
 \EndIf
 \EndFor
 \If{$H_{N,N}\neq 0$} 
 \State{Set $X_N = 0$}
 \State{Generate $X_{N,j},~j \in \{N-1,N\}$ from $W_N$ using an optimal AWGN channel point-to-point code.}
\If{$b_{N,N}=1$}
 \State{$X_N \gets X_N + X_{N,N}$}
 \EndIf

 \EndIf

\For{i = 0:N}
 \If{$i \geq 2~\land~b_{i-1,i} = 1$}
 \State{$X_i \gets X_i - \frac{H_{i,i-1} X_{i-1,i-1}}{H_{i,i}}$}
 \EndIf
\If{$i \leq N-2~\land~b_{i+2,i}=1$}
\State{$X_i \gets X_i - \frac{H_{i+1,i+1} X_{i+2,i+1}}{H_{i+1,i}}$}
\EndIf
\EndFor
\end{algorithmic}

\end{algorithm}

\begin{lem}\label{lem:cluster}
	For any message assignment such that each message is only assigned to two transmitters ($M=2$), Algorithm $1$  leads to the DoF-optimal zero-forcing transmission scheme for users within the input atomic subnetwork.
	
\label{lem_m1}
\end{lem}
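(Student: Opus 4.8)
The plan is to prove the lemma in two parts: \emph{achievability}, i.e.\ that the assignment of the binary variables $\{b_{i,j}\}$ produced by Algorithm~1 corresponds to a feasible zero-forcing scheme in which every scheduled message is delivered interference-free, and \emph{optimality}, i.e.\ that no zero-forcing scheme can deliver more messages within the atomic subnetwork. Since every delivered message in an interference-avoidance/zero-forcing scheme is allocated exactly one degree of freedom, the DoF value of a scheme equals the number of delivered messages; hence DoF-optimality reduces to the combinatorial claim that Algorithm~1 maximizes the number of delivered messages, i.e.\ the number of indices $i$ for which $b_{i,j}=1$ for some $j$.

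For achievability I would first verify that the beamforming construction in the second block of the algorithm correctly realizes each decision $b_{i,j}=1$. The invariants to check, maintained inductively as $i$ runs from $1$ to $N$, are: (i) each transmitter emits at most one data stream, so no two delivered messages collide on a transmitter; (ii) whenever $b_{i,i}=1$ and transmitter $i-1$ is active, the only interference seen at receiver $i$ comes from a message ($W_{i-1}$) that is also available at transmitter $i$, so the cancellation beam subtracted in the final loop nulls it at $Y_i$ using the known coefficient $H_{i,i-1}$; and (iii) symmetrically, the $b_{i,i-2}=1$ branch fires only when $W_i$ is available at transmitter $i-2$ and that transmitter carries no data, so the interference of $W_i$ at $Y_{i-1}$ is cancelled. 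Because the network is linear ($H_{r,t}\neq 0$ only for $r\in\{t,t+1\}$), each active transmitter interferes with at most one unintended receiver, so a single zero-forcing beam per cancellation suffices; checking that the algorithm sets each $b_{i,j}=1$ only when such a beam is available completes this part.

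For optimality I would argue by an exchange/induction over the user index, following the branching structure of the algorithm. The induction hypothesis is that there exists a DoF-optimal zero-forcing scheme ${\cal O}$ whose delivery decisions for $W_1,\dots,W_i$ agree with those of Algorithm~1, and whose ``frontier'' usage of transmitters $i-1$ and $i$ is no less favorable for future messages than the algorithm's. The crux is a local exchange lemma: because the network is linear, the decision made for $W_i$ can interact only with $W_{i-1}$, $W_{i+1}$, and (through shared transmitters $i-2,i-1,i,i+1$) a bounded window; hence if ${\cal O}$ delivers fewer of $\{W_1,\dots,W_i\}$ than the algorithm, or delivers the same number but leaves a worse frontier, one can reroute at most two messages in ${\cal O}$ to match the algorithm's choice without decreasing the total count. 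In particular, the algorithm's preference for delivering $W_i$ through the earlier transmitter $i-1$ before $i$ is justified because occupying $X_{i-1}$ frees $X_i$, which can only expand the options for $W_{i+1}$; and the joint-delivery branches (lines $29$--$31$ and $35$--$37$, where $W_i$ and $W_{i-1}$ are simultaneously served by swapping their transmit sets and zero-forcing the two induced interferences) are shown to dominate any scheme that delivers only one of the two.

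I expect the main obstacle to be the completeness and correctness of this case analysis rather than any single deep idea. The algorithm treats the first two users as boundary cases (lines $4$--$21$) and then has several nested branches per user, including the delicate double-delivery configurations; the exchange argument must show, branch by branch, that the greedy choice never forecloses a strictly better global configuration. Making the ``favorable frontier'' dominance relation precise---so that it both implies optimality of the delivered count and is preserved by the induction---is the technically demanding step, since the state that a decision for $W_i$ leaves for $W_{i+1}$ and $W_{i+2}$ must be tracked carefully through the transmitter-sharing interactions peculiar to the $M=2$ setting.
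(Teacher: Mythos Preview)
Your proposal is correct and follows essentially the same approach as the paper. The paper's proof is an induction/exchange argument over the user index: it fixes an arbitrary optimal link set $\mathcal{S}$ (respectively $\mathcal{S}_2$ for messages with index greater than $i$ in the induction step), and shows that the greedy link $H_{i,i-1}$ (or $H_{i,i}$ if the former is infeasible) can either be appended to $\mathcal{S}_2$ or substituted for its first link without decreasing the count, then justifies the $i{-}1$-before-$i$ preference by the observation that $H_{i,i-1}$ can conflict only with $H_{i+1,i}$ whereas $H_{i,i}$ can conflict with both $H_{i+1,i}$ and $H_{i+1,i+1}$. Your ``favorable frontier'' dominance relation is a more explicit packaging of exactly this observation, and your separate achievability verification is something the paper largely leaves implicit; otherwise the two arguments coincide.

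One small caution on your invariant~(i): in the joint-delivery branches (lines~$29$--$31$ and the analogous case for the first two users) the algorithm does allow a transmitter to carry two data streams simultaneously, with cross-cancellation beams applied at the neighboring transmitters; so the invariant to maintain is not ``at most one stream per transmitter'' but rather ``every active receiver sees its desired stream interference-free after zero-forcing.'' This does not affect the optimality argument, only the phrasing of the feasibility check.
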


\begin{IEEEproof}
	
	We consider in the algorithm the messages in ascending order from $W_{1}$ to $W_{N}$, and check which transmitter can deliver message $W_i$ such that it can be decoded at its desired receiver without interfering at any previous active receiver. If this is true, the message is transmitted. Also, if this is possible through any of the transmitters $i$ and $i-1$, then there is priority to transmit $W_i$ from transmitter $i-1$. In the following, we prove by induction that this procedure leads to the optimal transmission scheme. We first consider the base case, i.e., we prove that transmitting $W_1$ from transmitter $0$ is always optimal if it is available and the link $H_{1,0} \neq 0$. Otherwise, transmitting $W_1$ from transmitter $1$ would be optimal. 
    

Consider the feasible set $\varOmega \triangleq \{H_{i,j}: i \in [N], j\in{\cal T}_i, H_{i,j} \neq 0\}$ that denotes the subset of all links $H_{i,j}$ through which a message $W_{i}$ can be sent and decoded at its desired receiver. 
Assume an arbitrary set of links $\mathcal{S} \subset \varOmega\backslash H_{1,0}$, such that all links in $\mathcal{S}$ can be used \textit{simultaneously} to deliver messages to their desired receivers while eliminating interference. 
If $H_{1,0} \in \varOmega$, we now show that we can either add $H_{1,0}$ to ${\cal S}$, or replace the first link in ${\cal S}$ by $H_{1,0}$ and illustrate how this replacement does not lead to a decrease in the DoF. First note that if $H_{1,0} \in \varOmega$, then no receiver in the atomic subnetwork would observe interference due to transmitting $W_1$ from transmitter $0$, since transmitter $0$ is only connected to receiver $1$. Further, if $H_{2,1}$ is the first link in ${\cal S}$, then we replace it by $H_{1,0}$ and the sum DoF in the atomic subnetwork would remain the same by delivering $W_1$ instead of $W_2$.

For the case when $H_{1,0} \notin \varOmega$, we apply the same technique as in the previous case by substituting $H_{1,0}$ with $H_{1,1}$. Again, the replacement of the first link in ${\cal S}$ by $H_{1,1}$ would not decrease the DoF due to the fact that sending $W_1$ from the first transmitter only causes interference at the second receiver, and since $H_{2,j},~j \in \{1,2\}$ is either not in $\mathcal{S}$ or it is the first link in $\mathcal{S}$ that is replaced by $H_{1,1}$, the transmission of $W_1$ from transmitter $1$ would not necessitate a decrease in the number of links in ${\cal S}$. Finally, when it is possible to deliver $W_1$ through either transmitter $0$ or transmitter $1$, then choosing transmitter $0$ can only reduce the interference caused by $W_1$ at subsequent receivers.
Hence, it is always optimal to transmit $W_1$ from the first transmitter in its transmit set $\mathcal{T}_{1}$ as long as the corresponding link exists.
	
Next, we extend the proof to all users by induction. The induction hypothesis in the $i^{\text{th}}$ step is as follows.
We assume that transmission decisions for messages $\{W_k: k < i\}$ have been taken optimally to maximize the sum DoF. Let ${\mathcal{S}}_{1}\subset \varOmega$ be the set of links $H_{k,l}$, through which a subset of the messages $\{W_k, k < i\}$ can be delivered simultaneously to their destinations, while eliminating interference. Assume that all links in ${\mathcal{S}}_{1}$ are chosen optimally, i.e. the number of delivered messages cannot be increased by changing any of these links. 


	
Then, we do the induction step. Let ${\mathcal{S}}_{2} \subset \varOmega$ be any set of links $H_{k,l}$, through which a subset of the messages $\{W_{k},k > i\}$ can be transmitted simultaneously such that they can be decoded at their destinations. Also, the links in ${\cal S}_2$ are chosen optimally to maximize the number of delivered messages. If it is possible to send $W_{i}$ through $H_{i,i-1}$ without causing a conflict with any of the messages, that are sent through the links in $\mathcal{S}_{1}$, the same logic applies to $H_{i,i-1}$ as to $H_{1,0}$ in the base case. More precisely, if $W_{i}$ does not interfere at any previous active receiver and it can be decoded at receiver $i$ while eliminating interference, $H_{i,i-1}$ can be either added to ${\mathcal{S}}_{2}$ or replace the first link in ${\mathcal{S}}_{2}$, in order to obtain an optimal set of links for the transmission of the messages $\{W_{k},k \geq i\}$. This is possible since again, $W_{i}$ does not cause interference at any active receiver with an index $k > i$, and any of the links $\{H_{i+1,k}, k\in\{i,i+1\}\}$ is either not in ${\mathcal{S}}_{2}$ or it is the link that is replaced by $H_{i,i-1}$. If it is not possible to send $W_i$ through $H_{i,i-1}$ without causing a conflict with any of the messages that are sent through the links in ${\cal S}_1$, but it is possible to do so through $H_{i,i}$, then again the same argument applies for adding $H_{i,i}$ to ${\cal S}_2$. Further, we note that the preference to send $W_i$ through $H_{i,i-1}$ is optimal, since $H_{i,i-1}$ may only cause a conflict with $H_{i+1,i}$ in ${\cal S}_2$, while $H_{i,i}$ may cause a conflict with any of $H_{i+1,i}$ and $H_{i+1,i+1}$.
Therefore, as long as the aforementioned preference rule is applied, sending a message $W_{i}$ through a link $H_{i,j}$ is always optimal as long as it is possible to decode $W_{i}$ at receiver $i$ without causing interference at a previous active receiver. 

We have hence shown that the greedy approach followed by Algorithm $1$ to first explore all possibilities to deliver $W_i$ through $H_{i,i-1}$, and if not possible, investigate all possibilities to deliver it through $H_{i,i}$, without interfering with any previous actively delivered message, is DoF-optimal under restriction to zero-forcing schemes.
\end{IEEEproof}
	
	The optimality of the greedy approach illustrated in the above proof simplifies the optimal algorithm in two ways. On the one hand, we can go through the links one by one and check if it is possible to send a message to its desired receiver without interfering with any of the previous active messages. If it is possible, we will always decide to send the message. On the other hand, decisions that we already made do not have to be changed later, because at each step we make sure to avoid conflicts with previously activated messages. This procedure is applied in Algorithm 1, as we illustrate below.

In the following, we derive the decision conditions for the first two messages in the input  atomic subnetwork. 
If $H_{1,0} \in \varOmega$, sending $W_{1}$ is optimal, as shown in the base case of the proof by induction. Hence, set $b_{1,0} = 1$. If not, then it must be the case that $H_{1,1} \in \varOmega$, because otherwise receiver $1$ would not have belonged to the atomic subnetwork. In that last case, we set $b_{1,1}=1$, since it is optimal then to send $W_{1}$ from transmitter $1$ as shown in the above proof.  


We next consider the possibilities for delivering $W_2$ to its destination through transmitter $1$. If $H_{2,1} \in \varOmega$, we have the following possibilities. If $0\in\mathcal{T}_{2}$ and $b_{1,0}=1$, then we can send $W_{2}$ from transmitter $1$ and cancel interference at the first receiver by sending $W_{2}$ from transmitter $0$. Hence, we set $b_{2,0}=1$ and $b_{2,1}=1$ in this first case. In the second case, we deliver both $W_1$ and $W_2$ to their destinations through transmitter $1$, and cancel their interference through transmitters $2$ and $0$, respectively. This second case is possible when ${\cal T}_1=\{1,2\}$ and ${\cal T}_2=\{0,1\}$ and transmitter $0$ is in the atomic subnetwork, i.e., $H_{1,0} \neq 0$. The above two possibilities are the only ones that exist for delivering $W_2$ through transmitter $1$. 


We next consider the cases for delivering $W_2$ through transmitter $2$. If $H_{2,2} \in \varOmega$ and we are not sending $W_2$ from the first transmitter, i.e. $b_{2,1}=0$, then if $b_{1,1}=0$, $W_1$ is not causing interference at the second receiver and we set $b_{2,2}=1$. The second possible case is when $W_1$ is causing interference at receiver $2$, but this interference can be canceled through transmitter $2$, i.e., when ${\cal T}_1=\{1,2\}$. In this case, we set $b_{2,2}=1$ and $b_{1,2}=1$.

The illustration of the greedy approach for transmitting messages $\{W_i, i\in\{3,4,\cdots,N\}\}$ follows from the explanation of lines $22-39$ of the algorithm, that is provided above before Lemma~\ref{lem:cluster}. We now show that Algorithm $1$ can be used to achieve the optimal zero-forcing DoF in a general $K$-user network.

\begin{thm}\label{thm:zf}
Algorithm $1$ can be used to achieve the optimal zero-forcing DoF for any message assignment satisfying the cooperation order constraint $M=2$, and any realization of a general $K$-user dynamic linear network.
\end{thm}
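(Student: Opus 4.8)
The plan is to reduce the global statement to the per-subnetwork optimality already established in Lemma~\ref{lem:cluster} by exploiting the fact that, for any fixed realization, the network decomposes into non-interfering atomic subnetworks. First I would topology-reduce the message assignment for the given realization; by the discussion following the definition of topology-reduced assignments, which invokes \cite[Lemma~$2$]{ElGamal-Annapureddy-Veeravalli-arXiv12}, this removes only assignments of messages to transmitters that can be used neither for delivery nor for interference cancellation, and hence leaves the optimal zero-forcing DoF unchanged. I would then parse the reduced network into its maximal chain of subnetworks and, within each, into atomic subnetworks, using the definitions in Section~\ref{sec:systemmodel}.

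The core of the argument is additivity of the optimal zero-forcing DoF across atomic subnetworks. For achievability, I would run Algorithm~$1$ independently on each atomic subnetwork and show that the concatenation of the resulting transmit signals is a valid global zero-forcing scheme. The key observation is the remark recorded after the definition of atomic subnetworks: the transmitters carrying messages for users in a given atomic subnetwork occupy a contiguous block of indices, and by the subnetwork boundary conditions no such transmitter is connected to a receiver outside the block, while no message of an outside user is available at a transmitter connected to a receiver inside the block. Consequently a beam generated by Algorithm~$1$ for one atomic subnetwork produces no signal, desired or interfering, at any receiver of another, so the per-subnetwork decoding guarantees are preserved verbatim, and the achieved sum DoF equals $\sum_s \eta^{\mathrm{ZF}}_s$, where $\eta^{\mathrm{ZF}}_s$ is the value returned by Algorithm~$1$ on the $s$-th atomic subnetwork.

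The part I expect to be the main obstacle is the matching converse: that no zero-forcing scheme for the whole $K$-user network can beat $\sum_s \eta^{\mathrm{ZF}}_s$. Here I would take an arbitrary globally optimal zero-forcing scheme and restrict it to a single atomic subnetwork, i.e.\ keep only the transmit beams carrying messages of users in that subnetwork and observe the received signals of those users. Using the same non-interference structure as above, I must verify that this restriction is itself a feasible zero-forcing scheme for the atomic subnetwork in isolation: every interfering term seen by an in-block receiver already comes from an in-block transmitter, and every beam in the global scheme that was zero-forced remains zero-forced after restriction because the cancelling transmitter also lies in the block. Hence the number of messages the global scheme delivers to users of that subnetwork is at most $\eta^{\mathrm{ZF}}_s$ by Lemma~\ref{lem:cluster}, and summing over the disjoint subnetworks yields the bound. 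The delicate point to check carefully is that zero-forcing feasibility is genuinely a \emph{local} property under this decomposition, i.e.\ that the boundary conditions in the definition of a subnetwork are exactly strong enough to forbid a cross-subnetwork beam from being useful for either delivery or cancellation; this is where the topology-reduction performed in the first step does the real work.

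It remains to dispose of a few bookkeeping items. I would note that the index-$0$ convention used by Algorithm~$1$ (the transmitter preceding the first receiver of a subnetwork) is consistent with the atomic-subnetwork structure, since that transmitter, when present, is connected only to the first in-block receiver and therefore cannot couple to the preceding subnetwork. Finally, combining the achievability and converse directions shows that applying Algorithm~$1$ to each atomic subnetwork of the topology-reduced realization attains the optimal zero-forcing DoF of the full network, which is the claim of Theorem~\ref{thm:zf}.
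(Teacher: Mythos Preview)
Your proposal is correct and follows essentially the same approach as the paper: decompose the realization into atomic subnetworks, invoke Lemma~\ref{lem:cluster} on each, and use the non-interference between subnetworks to conclude that the optimal zero-forcing DoF is additive. The only cosmetic difference is that the paper spends most of its proof on an explicit link-scanning procedure to construct the atomic-subnetwork partition, whereas you obtain the same partition by first topology-reducing and then appealing to the definitions in Section~\ref{sec:systemmodel}; your treatment of the converse (restricting a global scheme to each block) is also somewhat more explicit than the paper's one-line justification.
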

\begin{IEEEproof}
Consider any realization of a $K$-user linear dynamic network.  We show below how the network can be partitioned into atomic subnetworks with no inter-subnetwork interference. It then follows by Lemma~\ref{lem:cluster} that Algorithm $1$ achieves the optimal zero-forcing DoF in each atomic subnetwork. Further, since there is no interference between the subnetworks, and no transmitter in a subnetwork is connected to a receiver in another subnetwork, it follows that invoking Algorithm $1$ for each of the atomic subnetworks in the partition leads to the optimal zero-forcing DoF for the entire network.

We first form a grouping of non-erased channel links, such that each group consists of a maximal set of consecutive non-erased links. More precisely, we scan the links in ascending order of index, and check if they are erased, e.g., we first check $H_{1,0}$, then $H_{1,1}$, then $H_{2,1}$, then $H_{2,2}$, and so on. We start adding links to the first group until we encounter the first erased link, and then start adding following non-erased links to the second group until we encounter an erased link again, and so on. 

We then show how to alter the above grouping of links to form atomic subnetworks. For each group of links we do the following. We first scan the receivers connected to any link in the group in ascending order of index. We add the scanned receiver to a subnetwork if the message corresponding to the scanned receiver is available at a transmitter connected to that receiver. Otherwise, we end the current subnetwork, and start a new subnetwork, and resume scanning receivers. Now, we have a partitioning of the network into non-interfering subnetworks. We finally do the following to make sure the subnetworks are atomic. In each subnetwork, we scan the transmitters connected to receivers in the subnetwork in ascending order of index. If the scanned transmitter does not carry a message for a receiver in its subnetwork, then we split the subnetwork at the index of that transmitter. In other words, the two receivers connected to that transmitter will belong to two different subnetworks. We keep repeating the above process of scanning transmitters until we have that in each subnetwork, each transmitter connected to a receiver in the subnetwork is carrying at least one message for a receiver in the subnetwork. 

The above process shows how the network can be partitioned into atomic subnetworks. The optimality of Algorithm $1$ then follows from Lemma~\ref{lem:cluster} by applying the algorithm for each atomic subnetwork.

\end{IEEEproof}

\subsection{Information-theoretic optimality of Algorithm $1$}

In this section, we establish the optimality of Algorithm $1$ to characterize $\tau_p(M=2)$. Following the same footsteps as the proof of Theorem~\ref{thm:zf}, all we need is to prove that the algorithm leads to the optimal DoF within the input atomic subnetwork. We hence have the result. 
\begin{lem}\label{lem:n5}
For any message assignment such that each message is only assigned to two transmitters ($M=2$), Algorithm $1$  leads to the DoF-optimal transmission scheme for users within an input atomic subnetwork whose size $N \in \{1,2,3,4,5\}$.
\end{lem}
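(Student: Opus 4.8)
The plan is to build directly on Lemma~\ref{lem:cluster}, which already certifies that Algorithm $1$ attains the best DoF achievable by \emph{any} zero-forcing scheme inside an atomic subnetwork. Consequently, the only thing left to establish is that for $N \leq 5$ zero-forcing is itself information-theoretically optimal: I must show that the value returned by Algorithm $1$ is matched by an information-theoretic converse, so that no scheme---including asymptotic interference alignment---can exceed it. Because $N$ is bounded by $5$, this reduces to a finite verification rather than a general argument.

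First I would enumerate the relevant configurations. Inside an atomic subnetwork all internal direct and cross links are present (none are erased), the transmitters carrying messages occupy successive indices, and the message assignment may be taken to be irreducible and topology-reduced. These constraints leave only finitely many message assignments with $M=2$ for each $N \in \{1,2,3,4,5\}$, and I would list them, collapsing those that are equivalent under index relabeling. For each surviving configuration I would then record the DoF value $d^\star$ output by Algorithm $1$; its achievability is immediate from Lemma~\ref{lem:cluster}.

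The crux is supplying a matching converse for each configuration by invoking Lemma~\ref{lem:dofouterbound}. For every configuration I would exhibit a set $\mathcal{A}$ of $d^\star$ receiver indices for which the omitted transmit signals $X_{\bar{U}_{\mathcal{A}}}$ can be reconstructed from $Y_{\mathcal{A}} - Z_{\mathcal{A}}$ together with $X_{U_{\mathcal{A}}}$; the lemma then delivers $\eta \leq |\mathcal{A}| = d^\star$. Exactly as in the proof of Theorem~\ref{thm:mone}, the genie set $\mathcal{A}$ is chosen so that every transmitter in $\bar{U}_{\mathcal{A}}$ feeds some receiver in $\mathcal{A}$ that is simultaneously fed by a transmitter in $U_{\mathcal{A}}$, letting the missing signal be peeled off link by link. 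For $M=2$ the sets $U_{\mathcal{A}}$ and $\bar{U}_{\mathcal{A}}$ depend on the cooperation pattern, so $\mathcal{A}$ must be tailored per configuration; the delicate cases are the cooperative ones, such as the $N=5$ assignment attaining $4$ DoF, where I would verify that excluding a single well-chosen receiver still permits reconstruction of all omitted transmit signals, yielding $\eta \leq 4$.

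The main obstacle is precisely this converse bookkeeping: guaranteeing that the enumeration is exhaustive and that for \emph{every} listed configuration a valid genie set $\mathcal{A}$ exists whose cardinality equals what Algorithm $1$ delivers. The boundary $N=5$ is exactly where this matching succeeds across all cases; for $N \geq 6$ there are assignments whose zero-forcing value falls strictly below the interference-alignment value, so neither the argument nor the statement can be extended without bringing alignment into play.
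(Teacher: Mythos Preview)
Your plan is essentially the paper's own approach: enumerate the finitely many irreducible message assignments for $N\le 5$, obtain achievability of $d^\star$ from Lemma~\ref{lem:cluster}, and for each configuration exhibit a genie set $\mathcal{A}$ with $|\mathcal{A}|=d^\star$ satisfying the reconstruction condition of Lemma~\ref{lem:dofouterbound}. The paper executes this concretely for $N=5$ via Algorithm~$2$, which tabulates the correct $\mathcal{A}$ case by case according to which of transmitters $0$ and $5$ lie in the subnetwork and which transmit sets occur; you would end up reproducing that table.

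One point, however, is factually incorrect and you should drop it. Your final sentence asserts that for $N\ge 6$ there exist assignments whose zero-forcing value is strictly below the interference-alignment value, and that this is why the lemma stops at $5$. The paper says the opposite: the authors report that they have found \emph{no} such example, and they conjecture that Algorithm~$1$ is information-theoretically optimal for all atomic subnetwork sizes (hence that it characterizes $\tau_p(M=2)$). The restriction to $N\le 5$ reflects only the extent of the case analysis they verified, not a known breakdown beyond it.
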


\begin{algorithm}\label{alg:convproof}
\caption{Using Lemma~\ref{lem:dofouterbound} to prove optimality of Algorithm $1$ with subnetwork size $N=5$.}
  \begin{algorithmic}[1]
    \label{alg:the_alg5}
\If{transmitter $0$ is in the atomic subnetwork and transmitter $5$ is not in the atomic subnetwork}  
	\If{$3\not\in \mathcal{T}_{5}$}  
		\If{$0 \not\in \mathcal{T}_{1}$}  
			\State{$\mathcal{A}=\{2,3,4\}$}
		\ElsIf{$3 \not\in \mathcal{T}_{3}$}  
			\State{$\mathcal{A}=\{1,2,4\}$}
		\ElsIf{$\mathcal{T}_{2}=\{1,2\}$}  
			\State{$\mathcal{A}=\{1,3,4\}$}
		\Else{}
			\State{$\mathcal{A}=\{1,2,4,5\}$}
		\EndIf		
	\ElsIf{$3\not\in \mathcal{T}_{4}$}  
		\If{$0 \not\in \mathcal{T}_{1}$}  
			\State{$\mathcal{A}=\{2,3,5\}$}
		\ElsIf{$3 \not\in \mathcal{T}_{3}$}  
			\State{$\mathcal{A}=\{1,2,5\}$}
		\ElsIf{$\mathcal{T}_{2}=\{1,2\}$}  
			\State{$\mathcal{A}=\{1,3,5\}$}
		\Else{}
			\State{$\mathcal{A}=\{1,2,4,5\}$}
		\EndIf
	\Else{}  
		\State{$\mathcal{A}=\{1,2,4,5\}$}
	\EndIf
\EndIf

\If{transmitter $0$ is not in the atomic subnetwork and transmitter $5$ is in the atomic subnetwork}  
	\If{$(2 \not\in \mathcal{T}_{1})$}
		\If{$5 \not\in \mathcal{T}_{5}$}
			\State{$\mathcal{A}=\{2,3,4\}$}
		\ElsIf{$2 \not\in \mathcal{T}_{3}$}
			\State{$\mathcal{A}=\{2,4,5\}$}
		\ElsIf{$\mathcal{T}_{4}=\{3,4\}$}
			\State{$\mathcal{A}=\{2,3,5\}$}
		\Else{}
	  		\State{$\mathcal{A}=\{1,2,4,5\}$}
		\EndIf{}
	\ElsIf{$(2 \not\in \mathcal{T}_{2})$}
		\If{$5 \not\in \mathcal{T}_{5}$}
			\State{$\mathcal{A}=\{1,3,4\}$}
		\ElsIf{$2 \not\in \mathcal{T}_{3}$}
			\State{$\mathcal{A}=\{1,4,5\}$}
		\ElsIf{$\mathcal{T}_{4}=\{3,4\}$}
			\State{$\mathcal{A}=\{1,3,5\}$}
		\Else{}
	  		\State{$\mathcal{A}=\{1,2,4,5\}$}
		\EndIf{}
	\Else{}
		\State{$\mathcal{A}=\{1,2,4,5\}$}
	\algstore{myalg}
\end{algorithmic}
\end{algorithm}
\begin{algorithm}[H]
	\setstretch{1.13}
	\begin{algorithmic} [1]              
		\algrestore{myalg}
    \EndIf{}
\EndIf
\If{both transmitters $0$ and $5$ are in the subnetwork}  
	\State{$\mathcal{A}=\{1,2,4,5\}$}
\EndIf
\If{both transmitters $0$ and $5$ are not in the subnetwork}  
	\If{$2 \not\in \mathcal{T}_{1}$}  
		\State{$\mathcal{A}=\{2,3,4\}$}
	\ElsIf{$2 \not\in \mathcal{T}_{2}$}  
		\State{$\mathcal{A}=\{1,3,4\}$}
	\ElsIf{$3 \not\in \mathcal{T}_{4}$}  
			\State{$\mathcal{A}=\{2,3,5\}$}
	 \ElsIf{$3 \not\in \mathcal{T}_{5}$}  
			\State{$\mathcal{A}=\{2,3,4\}$}
	\Else{}
			\State{ $\mathcal{A}=\{1,2,4,5\}$}	
		\EndIf		
	\EndIf

\end{algorithmic}
\end{algorithm}

\begin{IEEEproof}
We have formal proofs for the information-theoretic optimality of Algorithm $1$ for atomic subnetworks that have sizes $N \leq 5$. For brevity, we only state here the proof for $N=5$, as the proofs with smaller subnetwork sizes are simpler. The proof is based on exploring all possibilities for the message assignment, and establishing the optimality of Algorithm $1$ for each by showing that the DoF achieved by the algorithm is optimal. We do so through using Lemma $5$ by following the procedure stated in Algorithm $2$ to construct the set ${\cal A}$ of received signals that suffice to reconstruct all received signals in the subnetwork, with an uncertainty that does not increase with the transmit power $P$.

In what follows, we illustrate the steps followed by Algorithm $2$. Note that we say that \textbf{a transmitter is in the atomic subnetwork} if the following conditions are satisfied:
\begin{enumerate}
\item The transmitter is connected to a receiver whose index is in the subnetwork.
\item The transmitter is carrying a message whose index is in the subnetwork.
\end{enumerate}
Since the considered atomic subnetwork has size $N=5$, we know that transmitters with indices in the set $\{1,2,3,4\}$ are in the subnetwork. We hence consider cases on the membership of transmitters $0$ and $5$. 

The case where transmitter $0$ belongs to the subnetwork and transmitter $5$ does not is considered at the start of the algorithm. Note that Algorithm $1$ can always lead to achieving $3$ DoF in this case, since $W_5$ can be delivered through $X_4$, and $W_1$ and $W_2$ can be delivered simultaneously while eliminating interference. First consider the case when $W_5$ is not available at transmitter $3$:
\begin{enumerate}
\item If it is also the case that $W_1$ is not available at transmitter $0$, then we can reconstruct all transmit signals from $Y_2$, $Y_3$ and $Y_4$ as stated in line $4$ of the algorithm. This is because for any reliable communication scheme, we can reconstruct $W_2$, $W_3$ and $W_4$ from these received signals, and hence reconstruct $X_0$ and $X_3$ from the considered conditions on the message assignment. Following the linear connectivity of the network, we can then reconstruct $X_4, X_2$ and $X_1$ from $Y_4, Y_3$ and $Y_2$, with respect to order. 

\item The second case is when $W_3$ is not available at transmitter $3$, then we apply Lemma~\ref{lem:dofouterbound} with the set ${\cal A}=\{1,2,4\}$ as in line $6$ of the algorithm. Since both $W_3$ and $W_5$ do not contribute to $X_3$, we can reconstruct this transmit signal. We can then reconstruct $X_4$ from $Y_4$. Further, since transmitter $0$ can only have $W_1$ and $W_2$ since we can restrict our attention to irreducible message assignments (see~\cite[Chapter $6$]{Veeravalli-ElGamal-Cambridge}), and hence we can reconstruct $X_0$. Following the connectivity of the network, we can then reconstruct $X_1$ and $X_2$ from $Y_1$ and $Y_2$, respectively.

\item The third case is when $W_2$ is available at transmitters $1$ and $2$. In this case, we apply Lemma~\ref{lem:dofouterbound} with the set ${\cal A}=\{1,3,4\}$ as in line $8$ of Algorithm $2$. $X_0$ can be reconstructed since we know $W_1$, and $W_2$ is not available at transmitter $0$. Also, $X_1$ can then be reconstructed from $Y_1$. Since $W_2$ is not available at transmitters $3$, $4$, we can reconstruct $X_3$ and $X_4$ as well. Finally, $X_2$ can be reconstructed from $Y_3$.

\item The final case is when none of the above three cases applies. In this case, Lemma~\ref{lem:dofouterbound} applies with the set ${\cal A}=\{1,2,4,5\}$ per line $10$ of the algorithm. The upper bound proof here is the same as the case where no erasures occur~\cite[Chapter $6$]{Veeravalli-ElGamal-Cambridge}. Algorithm $1$ leads to achieving $4$ DoF within the input subnetwork in this case as follows. Since the first case above does not apply, then $W_1$ can be delivered through $X_0$. Since the second case above does not apply, then $W_3$ can be delivered through $X_3$. Also, since transmitter $5$ is not in the subnetwork, and $W_5$ is not available at transmitter $3$, then it has to be the case that $W_5$ is available at transmitter $4$, and can be delivered through $X_4$. Finally, since the second case above does not apply, then it is either the case that ${\cal T}_2=\{0,1\}$ and in this case $W_2$ is delivered through $X_1$ and its interference at $Y_1$ is canceled through $X_0$, or it is the case that ${\cal T}_2=\{2,3\}$ and in this case $W_2$ is delivered through $X_2$ and its interference at $Y_3$ is canceled through $X_3$.
\end{enumerate}
Note that the justification for the case when $W_5$ is available at transmitter $3$, but $W_4$ is not, which is investigated in lines $12-24$ of Algorithm $2$, is identical to the above case when $W_5$ is not available at transmitter $3$, but with switching the receiver and message indices $4$ and $5$. Further, the justification for the case when transmitter $5$ is in the subnetwork, but transmitter $0$ is not, which is investigated in lines $26-50$ of Algorithm $2$, is identical to that when transmitter $0$ is in the subnetwork, but transmitter $5$ is not, but with switching transmitter indices $i$ and $5-i$ and switching receiver indices $i$ and $6-i$ for $i \in \{1,2,3,4,5\}$. In other words, we view the subnetwork in this case as a mirrored version of the first case (upside down) and apply the same logic. It hence remains to check the cases when transmitters $0$ and $5$ are either both in the subnetwork or both not in. If both are in the subnetwork, then we apply Lemma~\ref{lem:dofouterbound} with ${\cal A}=\{1,2,4,5\}$ and the proof is the same as for the case of no erasures. Further, Algorithm $1$ in this case can be used to achieve $4$ DoF, as it is always possible to deliver $W_1, W_2, W_4$ and $W_5$ with no interference.

The final scenario is considered starting at line $54$ of Algorithm $2$, when both transmitters $0$ and $5$ are not in the subnetwork. Here also, Algorithm $1$ leads to achieving $3$ DoF within the input subnetwork, by delivering $W_1$ through $X_1$ and delivering $W_5$ through $X_4$, and delivering $W_3$ through either $X_2$ or $X_3$. We have the following cases for the converse proof.
\begin{enumerate}
\item The first case is when $W_1$ is not available at transmitter $2$, then we apply Lemma~\ref{lem:dofouterbound} with the set ${\cal A}=\{2,3,4\}$ as in line $56$ of Algorithm $2$. $X_2$ can be reconstructed, since $W_1$ and $W_5$ do not contribute to it. Further, the linear connectivity implies that we can reconstruct $X_1, X_3$ and $X_4$ from $Y_2, Y_3$ and $Y_4$, respectively.

\item The second case is when $W_2$ is not available at transmitter $2$, then we apply Lemma~\ref{lem:dofouterbound} with the set ${\cal A}=\{1,3,4\}$. In this case, $X_2$ can be reconstructed, since $W_2$ and $W_5$ do not contribute to it. Further, $X_1, X_3$ and $X_4$ can be reconstructed from $Y_1, Y_3$ and $Y_4$, respectively.
\end{enumerate}
The proof for the next two cases when $W_3$ is not available at transmitters $5$ and $4$, is identical to the above two cases, with respect to order, but with switching transmitter indices $i$ and $5-i$ and receiver and message indices $i$ and $6-i$ for $i\in\{1,2,3,4\}$. The remaining final case considered at line $63$ of Algorithm $2$, is when transmitter $2$ has both $W_1$ and $W_2$, and transmitter $3$ has both $W_4$ and $W_5$. In this case, Algorithm $1$ delivers $W_1$ through $X_1$, and its interference at $Y_2$ is canceled through $X_2$. Also, $W_2$ is delivered through $X_2$. Further, $W_5$ is delivered through $X_4$, and its interference at $Y_4$ is canceled through $X_3$. Finally, $W_4$ is delivered through $X_3$. The converse argument in this last case is the same as in the no erasure case.
\end{IEEEproof}
Although Lemma~\ref{lem:n5} applies only to realizations of the dynamic linear network, where the maximum size of an atomic subnetwork is $N=5$, it represents an important step towards understanding the optimal message assignment and transmission schedule for CoMP transmission as we note the following.
\begin{remk}
In order to have an atomic subnetwork whose size $N > 5$, we have to have at least $10$ consecutive channel links that are not erased, which takes place with a probability proportional to $(1-p)^{10}$. These events are extremely rare for significant erasure probabilities.
\end{remk}
\begin{remk}
We have not found any example of a message assignment satisfying the cooperation order constraint $M=2$ and an atomic subnetwork whose size $N>5$, where Lemma~\ref{lem:dofouterbound} cannot be used to prove the information-theoretic optimality of Algorithm $1$, in a manner similar to the proof of Lemma~\ref{lem:n5} using Algorithm $2$. Hence, we conjecture that Algorithm $1$ can in fact be used to characterize $\tau_p(M=2)$.
\end{remk}


\section{Simulation}



In this section, we investigate through simulations the best average per user DoF achievable through Algorithm $1$. We note that if our above conjecture that Lemma~\ref{lem:n5} extends to atomic subnetworks of arbitrarily large sizes, then what we are characterizing in this section is essentially $\tau_p(M=2)$. 
For a given message assignment and a certain erasure probability $p$, we compute the average puDoF by simulating\footnote{The MATLAB code is available at https://github.com/toluhatake/Fundamental-Limits-of-Dynamic-Interference-Management-with-Flexible-Message-Assignments} a sufficiently large number $n$ of channel realizations. Each link is erased with probability $p$ and the network is partitioned into atomic subnetworks before applying Algorithm $1$.  
The value of the average per user DoF is then computed as the average number of decoded messages divided by the network size $K$. 

To guarantee the validity of the computed value for large networks, we deactivate the last transmitter in the network. Hereby we ensure that for a large network that consists of concatenated subnetworks; each of size $K$, then the computed average per user DoF value can be achieved in the large network by repeating the scheme for each subnetwork, since there will be no inter-subnetwork interference.
		
The simulation is done for a set of message assignments with different fractions $f(p)$ of messages that are assigned to one transmitter connected to their desired receiver and another transmitter that can be used to cancel interference, while the remaining fraction of $1-f(p)$ of messages are assigned to both transmitters that are connected to their destination. Furthermore, we vary the network size $K$ to consider up to $100$ users. 
More precisely, we use the following assignment strategy.
		\begin{equation*}
		{\scriptsize
			\mathcal{T}_i = 
			\begin{cases}
			\{0, ~1\} & i = 1 ~\text{and}~$f(p) = 0.01$ ,\\
            \{1, ~2\} & i = 1 ~\text{and}~$f(p) > 0.01$ ,\\
			\{K-2,~K-1\} & i = K,\\
			\{i, ~i+1\} & i= 1+ n \cdot \max\left\{2,~\Bigl \lfloor \frac{K}{f(p)\cdot K -1}\Bigr \rfloor\right\},\\
			~& {\tiny n \in \{1,2,\dots, \min\left\{f(p) \cdot K -2,~ \Bigl\lfloor \frac{K}{2} - 1 \Bigr \rfloor\right\}},\\
			\{i, ~i+1\} & i= 2n ,~{\tiny n \in \left\{1,2,\dots,\Bigl\lceil (f(p) - \frac{1}{2} ) K \Bigr\rceil -1\right\}},\\ 
			\{i-1,~ i\} & \text{otherwise,}
			\end{cases}}
			\label{equation:eq1}
		\end{equation*} 	
		where we use the notation $\{1,2, ...,x\}$ to denote the set $[x]$ when $x$ $\geq$1 and the empty set when $x<1$.
	           We vary the value of $f(p)$ from $0$ up to $1$ in steps of $\frac{1}{100}$, calculating the average puDoF as a function of $p$ for each of these message assignments.

As a result, the maximum puDoF that is achievable with the set of message assignments described above is shown in Figure \ref{fig:Plot}. Compared to the schemes presented in Theorems~\ref{thm:mtwoicaware} and \ref{thm:mtwoic}, there exist message assignments with a better performance for middle ranges of $p$. These are presented in Table \ref{opt_assignments}. Note that in \cite{ElGamal-Veeravalli-Asilomar13}, it was shown that an assignment with $f(p) = \frac{2}{5}$ is optimal for $p \rightarrow 0$. Interestingly, we find the assignment presented in Theorem~\ref{thm:mtwoicaware} with $f(p) = \frac{3}{5}$ (see the green curve in Fig. \ref{fig:Plot}) achieves the same puDoF for $p = 0$, but performs slightly better on the interval $(0, 0.15]$. From our results in Table \ref{opt_assignments}, we observe that the optimal fraction $f(p)$ decreases monotonically from $\frac{3}{5}$ to $0$ as $p$ goes from $0$ to $1$, which has an intuitive explanation as the role of cooperation shifts from interference management, which requires a high value of $f(p)$, to increasing the coverage probability for each message, which requires a low value of $f(p)$, as the erasure probability increases. 
		
		\begin{figure}[H]	
		\centering
			\includegraphics[width = \columnwidth]{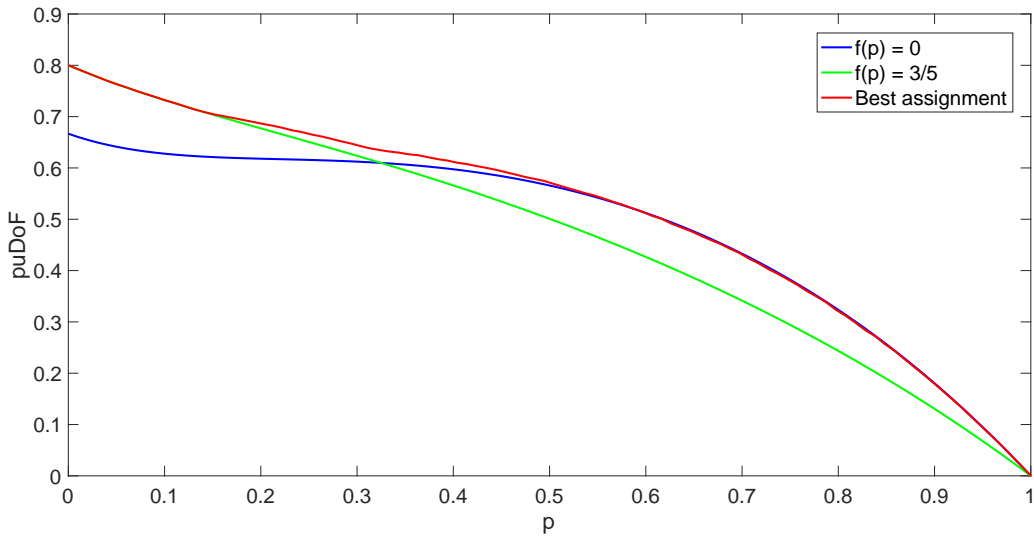}			\caption{The plot shows the puDoF as a function of the erasure probability $p$ found by applying Algorithm $1$ to $6000$ randomly generated channel realizations for each value of $p\in\{0,0.01,0.02,\cdots,1\}$. The green and blue curves correspond to the message assignment and transmission strategy presented in Theorems~\ref{thm:mtwoicaware} and \ref{thm:mtwoic}, which were shown to be optimal as $p \rightarrow 0$, and $p \rightarrow 1$, with respect to order. The red curve is the maximum puDoF that is achievable with the message assignments we considered in our simulation.}
			\label{fig:Plot}
		\end{figure}
			
		\begin{table}[h]
			\begin{center}
			\begin{tabular}{|c|c|}
				\hline 
				Range of $p$ & Value of $f(p)$ for best performing message assignment \\ 
				\hline \hline
				0 to 0.15 & $\frac{3}{5}$ (as in Theorem~\ref{thm:mtwoicaware})  \\ 
				\hline 
				0.16 to 0.29 & $\frac{1}{2}$ \\ 
				\hline 
				0.3  & $\frac{49}{100}$ \\ 
				\hline 
				0.31 to 0.32 & $\frac{12}{25}$ \\ 
				\hline 
				0.33 to 0.58 & $\frac{1}{50}$ \\ 
				\hline 
				0.59 to 1 & $0$ (as in Theorem~\ref{thm:mtwoic}) \\ 
				\hline 
			\end{tabular} 
			\end{center}
			\caption{Message assignments with the best performance out of the set of assignments that was simulated.}
			\label{opt_assignments}
		\end{table}

In Figure~\ref{fig:CoMP}, we plot the value of $\tau_p($M=1$)$ derived in Section~\ref{sec:cellassociation} versus the best average puDoF value obtained in this simulation for $M=2$. One could observe from the simulation the value of the extra backhaul budget at each value of the erasure probability $p$. Interestingly, this added value for cooperation is quite significant up to very high values for $p$. Hence, whether cooperation is useful through interference management or increasing coverage, we observe that it leads to significant scalable degrees of freedom gains even in presence of harsh shadow fading conditions.
		\begin{figure}[H]	
		\centering
			\includegraphics[width = \columnwidth]{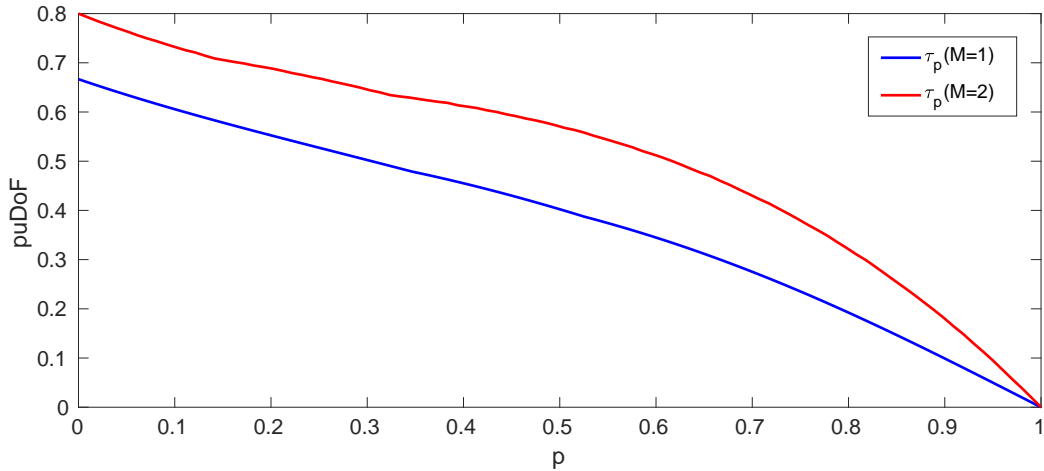}
			\caption{The plot shows $\tau_p{(M=1)}$ from~\eqref{eq:tauone} and the maximum value for the average puDoF achieved for $M=2$ by Algorithm $1$ as found by the simulation (Conjectured to be $\tau_p{(M=2)}$).}
\label{fig:CoMP}
\end{figure}
\section{Conclusion}\label{sec:conclusion}
We considered the problem of assigning messages to transmitters in a linear interference network with link erasure probability $p$, under a constraint that limits the number of transmitters $M$ at which each message can be available. For the case where $M=1$, we identified the optimal message assignment strategies at different values of $p$, and characterized the average per user DoF $\tau_p(M=1)$. For general values of $M\geq 1$, we proved that there is no message assignment strategy that is optimal for all values of $p$. We then introduced message assignment strategies for the case where $M=2$, and derived inner bounds on $\tau_p(M=2)$ that are asymptotically optimal as $p \rightarrow 0$ and as $p \rightarrow 1$. Finally, we presented an algorithm for $M=2$ that leads to the optimal average per user DoF under restriction to cooperative zero-forcing schemes. Further, we demonstrated the information-theoretic optimality of the presented algorithm for a wide class of network realizations. Simulation results were then used to affirm the intuition about the shifting role of cooperative transmission from interference management at low erasure probabilities to increasing coverage at high erasure probabilities. 

\bibliographystyle{IEEEtran}

\begin{thebibliography}{20}
\bibitem{ElGamal-Veeravalli-Asilomar13}
A.~{El Gamal}, V. V. Veeravalli, ``Dynamic interference management," in \emph{Proc. Asilomar Conference on Signals, Systems, and Computers,} Pacific Grove, CA, Nov. 2013.
\bibitem{Karacora-Seyfi-ElGamal-Asilomar17}
Y.~Karacora, T.~Seyfi and A.~{El Gamal}, ``The role of transmitter cooperation in linear interference networks with block erasures," in \emph{Proc. Asilomar Conference on Signals, Systems, and Computers,} Pacific Grove, CA, Nov. 2017.
\bibitem{Veeravalli-ElGamal-Cambridge}
V.~V.~Veeravalli and A.~{El Gamal}, Interference Management in Wireless Networks: Fundamental Bounds and the Role of Cooperation,"\emph{1st ed. New York, NY: Cambridge}, Feb. 2018.
\bibitem{Mceliece-Stark-IT84}
R.~J.~McEliece and W.~E.~Stark, ``Channels with block interference,"\emph{ IEEE Trans. Inf. Theory}, vol.~30, no.~1, pp.~44-53, Jan. 1984.
\bibitem{Wyner}
A.~Wyner, ``{S}hannon-theoretic approach to
  a {G}aussian cellular multiple-access channel,'' \emph{IEEE Trans.
  Inf. Theory}, vol.~40, no.~5, pp. 1713 --1727, Nov. 1994.
\bibitem{ElGamal-Annapureddy-Veeravalli-arXiv12}
{A.~El Gamal}, V.~S.~Annapureddy, and V.~V.~Veervalli, ``Interference channels with Coordinated Multi-Point transmission: Degrees of freedom, message assignment, and fractional reuse,'' \emph{IEEE Trans.
  Inf. Theory}, vol.~60, no.~6, pp. 3483 --3498, Jun. 2014.
\bibitem{ElGamal-Annapureddy-Veeravalli-ICC12}
{A.~El Gamal}, V.~S.~Annapureddy, and V.~V.~Veervalli, ``Degrees of freedom (DoF) of locally connected interference channels with Coordinated Multi-Point (CoMP) transmission,'' \emph{in Proc. IEEE International Conference on Communications (ICC)}, Ottawa, Jun. 2012.
\bibitem{Jafar-arXiv12}
S.~Jafar ``Elements of cellular blind interference alignment: Aligned frequency reuse, wireless index coding, and interference diversity", \emph{ available at http://arxiv.org/abs/1203.2384}.
\bibitem{Cadambe-IA}
V.~Cadambe and S.~Jafar, ``{I}nterference alignment and degrees of
  freedom of the {K}-user interference channel,'' \emph{IEEE Trans.
  Inf. Theory}, vol.~54, no.~8, pp. 3425 --3441, Aug. 2008.
\bibitem{CoMP-book}
P.~Marsch and G.~P.~Fettweis ``Coordinated Multi-Point in mobile communications: From theory to practice,''   First Edition, \emph{Cambridge}, Aug. 2011.
\bibitem{CoMP-industry}
3GPP~TR~36.819 ``Coordinated Multi-Point transmission for LTE physical layer aspects", V11.1.0, 2011. 
\bibitem{Devroye-Mitran-Tarokh-IT06}
N.~Devroye, P.~Mitran, and V.~Tarokh ``Achievable rates in cognitive radio channels,'' \emph{IEEE Trans.
  Inf. Theory}, vol.~52, no.~5, pp. 1813 --1827, May. 2006.
\bibitem{Lapidoth-Levy-Shamai-Wigger-arXiv12}
A.~Lapidoth, N.~Levy, S.~Shamai (Shitz) and M.~A.~Wigger ``Cognitive Wyner networks with clustered decoding,'' \emph{IEEE Trans.
  Inf. Theory}, vol.~60, no.~10, pp. 6342 --6367, Oct. 2014.
\bibitem{Vahid-Aggarwal-Avestimehr-Sabharwal-arXiv12}
A.~Vahid, V.~Aggarwal, A.~S.~Avestimehr, A.~Sabharwal ``Wireless network coding with local network views: Coded layer scheduling", \emph{available at http://arxiv.org/abs/1106.5742v3}, Feb. 2012.
\bibitem{PW09}
N.~Palleit and T.~Weber, ``Obtaining transmitter side channel state information in MIMO FDD systems," in \emph{Proc. IEEE International Symposium on Personal, Indoor and Mobile Radio Communications,} pp. 2439-2443, Sept. 2009.

\bibitem{PW10}
N.~Palleit and T.~Weber, ``Frequency prediction of the channel transfer function in multiple antenna systems," in \emph{Proc. International ITG/IEEE Workshop on Smart Antennas,} pp. 244-250, Feb. 2010.

\bibitem{WMZ05a}
T.~Weber, M.~Meurer, and W.~Zirwas ``Improved channel estimation exploiting long term channel properties," in \emph{Proc. IEEE International Conference on  Telecommunications,} May 2005.
\bibitem{finite-symbol-extension}
L.~Li, H.~Jafarkhani, and S.~A.~Jafar, ``Towards the feasibility conditions for linear interference alignment with symbol extensions: A diversity constraint," in \emph{Proc. IEEE Global Communications Conference,} pp. 2328-2333, Dec. 2012.

\bibitem{Gomadam-Cadambe-Jafar-IT11}
K.~Gomadam, V.~Cadambe, and S.~A.~Jafar ``Approaching the capacity of wireless networks through distributed interference alignment," \emph{IEEE Trans. Inf. Theory}, vol.~57, no.~6, pp.~3309-3322, Jun. 2011.
\bibitem{ElGamal-Veeravalli-ISIT14}
{A.~El Gamal} and V.~V.~Veeravalli, ``Flexible backhaul design and degrees of freedom for linear interference networks,'' in \emph{Proc. IEEE International Symposium on Information Theory}, Honolulu, HI, Jul. 2014.
\bibitem{Maleki-Jafar-arXiv13}
H.~Maleki, S.~A.~Jafar, "Optimality of Orthogonal Access for One-Dimensional Convex Cellular Networks," \emph{IEEE Communications Letters}, vol.~17, no.~9, pp.~1770-1773, Sept. 2013.
\bibitem{Lapidoth-Shamai-Wigger-ISIT07}
A.~Lapidoth, S.~Shamai and M.~A.~Wigger, ``A linear interference network with local side-information,'' in \emph{Proc. IEEE International Symposium on Information Theory (ISIT)}, Nice, Jun. 2007.
\bibitem{Cover-Thomas}
T.~M.~Cover and J.~A.~Thomas, \emph{Elements of Information Theory,} 2nd Ed. Wiley, 2006.
















\end{thebibliography}

\end{document}